\newtheorem{definition}{Definition}
\newtheorem{assumption}{Assumption}
\newtheorem{theorem}{Theorem}
\newtheorem{proposition}{Proposition}
\newtheorem{lemma}{Lemma}
\title{Monopoly Pricing of Weather Index Insurance}
\author{Tim J. Boonen \thanks{Department of Statistics and Actuarial Science, School of Computing and Data Science, The University of Hong Kong, Pokfulam Road, Hong Kong, P.R. China. Email: \url{tjboonen@hku.hk}.} \quad Wenyuan Li  \thanks{Department of Statistics and Actuarial Science, School of Computing and Data Science, The University of Hong Kong, Pokfulam Road, Hong Kong, P.R. China. Email: \url{wylsaas@hku.hk}.} \quad Zixiao Quan \thanks{Corresponding author. Department of Statistics and Actuarial Science, School of Computing and Data Science, The University of Hong Kong, Pokfulam Road, Hong Kong, P.R. China. Email: \url{zixiao.quan@connect.hku.hk}.}} 
\begin{document}

\maketitle
\noindent  
\begin{abstract}
    This study models the monopoly pricing of weather index insurance as a Bowley-type sequential game involving a profit-maximizing insurer (leader) and a 
    farmer (follower). 
    The farmer chooses an insurance payoff to minimize a convex distortion risk measure, 
    while the insurer anticipates this best response and selects a premium principle and its parameters to maximize profit net of administrative costs. For the insurer, we adopt three different premium-principle parameterizations: (i) an expected premium with a single risk-loading factor, (ii) a two-parameter distortion premium based on a power transform, and (iii) a fully flexible pricing kernel drawn from the general Choquet integral representation with nondecreasing distortions. For the farmer, we model index payoffs using neural networks and compare solutions under fully connected architectures with those under convolutional neural networks (CNNs). 
    We solve the game using a penalized bilevel programming algorithm that employs a function-value-gap penalty and delivers convergence guarantees without requiring the lower-level objective to be strongly convex.  Based on Iowa's soybean yields and high-dimensional PRISM weather data, we find 
    that CNN-based designs yield smoother, less noisy payoffs that reduce basis risk and push insurer profits closer to indemnity insurance levels. 
    Moreover, expanding pricing flexibility from a single loading to a two-parameter distortion premium, and ultimately to a flexible pricing kernel, systematically increases equilibrium profits. 
\end{abstract}
\textbf{Keywords:} Index insurance, Bowley solution, bilevel programming, distortion risk measure, convolutional neural networks.

\section{Introduction}
Weather-based risks are one of the most significant challenges faced by agricultural producers worldwide, particularly in regions vulnerable to climate variability and extreme weather events. These risks can lead to substantial financial losses, which threaten the livelihoods of farmers and harm the socioeconomic stability of rural communities. Traditional risk management tools, such as crop indemnity insurance, have been widely adopted to mitigate these financial impacts. However, traditional insurance products often suffer from limitations, which include high administrative costs, moral hazard, and basis risk. To address these challenges, index insurance has emerged as a promising alternative. Index insurance connects insurance payoffs to objective and verifiable weather indices, which reduces the reliance on subjective loss assessments and mitigates moral hazards. Despite its potential, designing index contracts that materially hedge farmer risk while remaining profitable is especially challenging when the relationships between weather indices and crop yield are high-dimensional and nonlinear \citep{schlenker2009nonlinear, rigden2020combined}. These challenges are amplified in concentrated markets in which a monopolistic insurer can exert pricing power over contract terms and premium principles.

This paper studies the monopoly pricing of weather index insurance as a sequential game between a profit-maximizing insurer and a risk-averse farmer. We formulate the interaction as a bilevel optimization problem: the farmer faces a stochastic production loss and chooses an insurance payoff to minimize its distortion risk measure; the insurer anticipates this best response and selects a premium principle (and its parameters) to maximize profit net of administrative costs. We begin with two canonical premium principles. The first is the expected premium principle with a risk-loading factor $\theta$. The second is the distortion-based premium principle with a power distortion, which has an additional parameter $\rho$ compared to the expected premium principle. We then further generalize the premium principle to a fully flexible pricing kernel, which is a distortion function in a broad feasible class. On the demand side, we model farmer preferences using distortion risk measures that focus on a convex combination of Conditional Value-at-Risk (CVaR) and expected risk.

Our analysis contributes to the literature on Bowley solutions, which model a leader–follower game between a monopolistic insurer (leader) and a policyholder (follower). Under the Bowley solution, the insurance premium is first set by a chosen premium principle. Given this principle, the policyholder then selects its optimal coverage, which is a functional of this principle. The insurer knows this functional and chooses the premium principle that is optimal for her. In particular, this yields the insurer’s optimal pricing density for insurance contracts and the policyholder’s optimal insurance indemnity function \citep{chan1985reinsurer}. Prior works have investigated Bowley solutions under various preferences and pricing assumptions. Some of these have focused on using a profit-maximization objective for the insurer and a distortion risk-minimization objective for the policyholder. For example, \cite{cheung2019risk} investigate Bowley equilibria under distortion risk measures and general premium principles of a reinsurer (leader) and an insurer (follower). They derive the insurer’s explicit ceded loss and the reinsurer’s pricing functions via a two-step procedure that first minimizes the insurer’s distortion risk measure and then maximizes the reinsurer’s net gain. Both pricing and preferences are modeled through probability distortions, which are related to our use of distortion premium principles on the supply side and distortion risk measures on the demand side. \cite{chi2020bowley} revisit Bowley with a risk-neutral reinsurer and impose upper bounds on the first two moments of the indemnity. They further relax the assumptions of the expected premium principle in \cite{chan1985reinsurer} and the distortion premium principle in \cite{cheung2019risk}, and demonstrate that the monopoly premium can be determined without a specific premium principle. \cite{boonen2021bowley} also study Bowley solutions under the assumption that the reinsurer adopts a general premium function. A highlight is that their work analyzes Bowley reinsurance solutions under asymmetric information about the type of distortion function that the insurer adopts. The monopolistic reinsurer chooses a premium principle to maximize expected profit while anticipating the insurer’s type-dependent demand. Importantly, unlike the symmetric-information Bowley setting \citep[e.g.,][]{cheung2019risk} in which the leader can extract all surplus, asymmetric information prevents full extraction and allows both parties to strictly benefit from the reinsurance contracts. In addition, \cite{li2021bowley} study the sequential game under the mean-variance premium principle. In their setting, the buyer (follower) maximizes a mean-variance functional of its wealth and the seller (leader) maximizes its expected wealth by choosing the parameter of the mean-variance premium principle. Their findings demonstrate that the seller’s premium parameters change with respect to the buyer’s indemnity. There is also literature that adopts a different objective for the insurer (leader) and the policyholder (follower). For example, \cite{chen2024bowley} search for the set of Bowley solutions under the setup where the reinsurer minimizes the VaR measure of its risk position instead of maximizing its profit. Besides this, \cite{boonen2024bowley} analyze Bowley solutions with the policyholder’s objective replaced by expected-utility maximization. It shows that deductibles rise with the safety loading and proves that Bowley outcomes are Pareto dominated, which motivates policy intervention. In addition to this evidence, \cite{jiang2025bowley} study a Bowley insurance game with mean–variance preferences and a variance-based premium. They find the Bowley solution and show that it is never Pareto optimal, which supports the previous finding that such contracts are inefficient. This line of literature supports our game-theoretic approach in modeling Bowley equilibria. However, the current literature mainly focuses on finding Bowley solutions for indemnity insurance, where the contracts are written on the actual loss incurred. We adopt the concepts of Bowley solutions and apply them to index insurance. We investigate Bowley solutions for index insurance under both the expected premium principle and distortion premium principles with different distortion functions adopted by the policyholder.  

We also contribute to a growing actuarial literature that tackles basis risk in index insurance. Past literature has investigated reducing basis risk by using multivariate indices, flexible models to capture the complex relationships between indices and insurance payoff functions, as well as dependence modeling. \cite{zhang2019index} study optimal index insurance design under the expected-utility maximization framework. They find that the payoff functions can be highly nonlinear and nonmonotonic functions of the index variables, which align the payoff functions better with the actual loss and thus reduce basis risk. They highlight the generality of their model setup and its potential to be applied in various insurance applications. As a result, many research papers adopt the expected utility maximization framework for designing optimal weather index insurance. For example, \cite{tan2024flexible} use B-spline functions to define the feasible sets of the expected utility maximization problem and a penalty function to avoid overfitting. Besides this, \cite{chen2024managing} adopt the expected utility maximization framework in the design of weather index insurance and use neural networks (NNs) to model the insurance payoff functions. They propose to use neural networks to better capture the highly nonlinear relationships between the crop production losses and the high-dimensional weather indices. They show that the use of neural networks could reduce basis risk, lower insurance premiums, and improve the farmer's utility.  Our approach builds on theirs as we use convolutional neural networks to further reduce basis risk. Another common way to reduce basis risk is to construct lower-dimensional indices that provide stronger signals of weather conditions. For example, \cite{boyd2020design} use principal component regression and partial least-squares regression to construct multivariate indices. They show that these constructed indices perform better at correctly determining when payments should be triggered and at reducing the mismatch between the insurance payoff and the actual loss.  \cite{li2021improved} use a dynamic factor model for crop yield prediction, which also builds on the idea of summarizing high-dimensional weather variables using lower-dimensional latent factors and reducing dimensionality to improve prediction accuracy. The improved crop yield predictions could then help insurance pricing. \cite{zhu2024deep} also focuses on improving crop yield predictions by using factor models. They build a deep factor model with an encoder–decoder structure. The encoder compresses yields into a latent “production index” that captures nonlinear space–time patterns. They then feed weather and economic data into a concatenated network to model and predict this index. The decoder maps the predicted index back to crop yields. Besides this, \cite{zhu2018spatial} propose Lévy subordinated hierarchical Archimedean copulas, which produce a stronger fit for multiscale spatial dependence and joint extremes. This evidence supports designs that exploit spatiotemporal grids and reinforces the idea that CNN-based indices can be more robust under correlated shocks. This motivates our use of CNNs as an alternative way to model insurance payoff functions. In addition, hybrid product design has also evolved to balance moral hazard with basis risk. \cite{fan2023empirical} develop a model-based annealing random search and show that it effectively targets tail-risk objectives, which is crucial for lines exposed to catastrophes. Our penalty-based bilevel programming algorithm offers an alternative for large-scale, nonconvex design problems with equilibrium constraints, and fits settings where market power and pricing flexibility affect outcomes. 

We solve for the Bowley solutions in our settings using a penalized bilevel programming algorithm, which works even when the lower-level problem is not strongly convex \citep{shen2023penalty}. The problem is high-dimensional and not necessarily convex due to weather indices. We model the insurance payoff function with fully connected neural networks and compare it with the payoff functions modeled using CNNs, which preserve the index-by-time grid structure \citep{krizhevsky2012imagenet, chen2024managing}. We solve the two-stage game with a function-value-gap penalty. This links a penalized single-level problem to the original bilevel problem under mild assumptions. The result is a practical algorithm with convergence guarantees that scale to large weather–yield data.

Our main findings are as follows. First, the optimal monopoly insurance contract under CVaR consistently exhibits a stop-loss structure. Our model reproduces optimal indemnity contracts consistent with the literature \citep{cai2008optimal}; for index insurance, this shape persists, but contains more noise caused by basis risk. Second, CNN-based designs yield smoother, less noisy payoff functions than fully connected neural networks. We show that CNN-based designs are capable of reducing basis risk and pushing insurer profits closer to indemnity insurance levels. Third, insurer pricing responds sharply to farmer risk aversion: increasing the weight on CVaR in the farmer’s distortion measure leads to higher equilibrium loadings and greater profit extraction by the monopolist. Fourth, expanding the insurer’s pricing flexibility from a single risk loading to a two-parameter distortion premium, and then ultimately to a flexible pricing kernel, systematically increases equilibrium profits, which highlights the importance of premium regulation in concentrated markets.

The remainder of the paper proceeds as follows. Section \ref{sec:2} formulates the sequential game, introduces the farmer’s distortion risk measures, and the insurer’s premium principles. Section \ref{sec:3} provides theory on the penalized bilevel programming algorithm that we adopt in this study. Section \ref{sec:4} describes the data and neural-network architectures for index insurance design. Section \ref{sec:5} reports numerical results, including sensitivity analysis, comparisons of fully connected networks and CNNs, and extensions to two-parameter and general pricing-kernel settings. Section \ref{sec:6} concludes with policy implications and directions for future research.

\section{Problem Formulation}\label{sec:2}
\subsection{General Setup}
Let $(\Omega,\mathcal F,\mathbb{P})$ be a probability space such that $\Omega$ is finite and $\mathcal F=2^\Omega$ is the power $\sigma$-algebra.\footnote{This paper considers empirical distributions using numerical methods, and so we use a simplified setting with finitely many states of the world. Many results can be extended to infinite state spaces under standard integrability assumptions, as long as the strategy spaces are finite dimensional.} Moreover, denote by $\mathbb{R}^\Omega_+$ the set of nonnegative random variables on $(\Omega,\mathcal F,\mathbb{P})$.

Consider a farmer seeking to hedge against weather-related losses. The farmer is faced with a stochastic production loss $Y$. We construct an index-insurance contract, where the insurance payoff is determined by a function of $p$-dimensional vector of weather indices $\mathbf{X} = (X_1, X_2, ..., X_p)$. The nonnegative insurance payoff function is denoted by $I(\mathbf{X})$ with $ I: \mathbb{R}^p \rightarrow \mathbb{R}_+$. The premium is denoted by $\Pi(I(\mathbf{X}))$, which is a function of $I(\mathbf{X})$. 

\subsection{Distortion Function}
Both the farmer and the insurer use a transformation of probabilities to evaluate and price random losses. This is used to formulate the optimization problem, 
Let $\mathcal G$ be the class of continuous and nondecreasing functions $g:[0,1] \rightarrow\mathbb{R}_+$ such that $g(0)=0$. For $g\in\mathcal G$, we can write the distortion risk measure for risk $Z\in\mathbb{R}_+^\Omega$ as:
\begin{equation}\label{eq:premium}\int_0^{\infty} g(\mathbb{P}(Z>z)) dz=\sum_{k=1}^{q-1} g\left(\mathbb{P}\left(\left\{\omega_1, \ldots, \omega_k\right\}\right)\right) \cdot\left[Z\left(\omega_k\right)-Z\left(\omega_{k+1}\right)\right]+Z\left(\omega_q\right), \end{equation} if the state space $\Omega=\left\{\omega_1, \ldots, \omega_q\right\}$ is such that $Z\left(\omega_1\right) \geq \cdots \geq Z\left(\omega_q\right)$; see, e.g., \cite{de2008second} and \cite{boonen2015competitive}. We use premium principles of the form \eqref{eq:premium}, and we denote such principle as $\hat \Pi(Z)$. As special cases, we will study the case where $g$ is linear or when $g(s)=(1+\theta)s^{1/\rho}$ for $\theta\geq0$ and $\rho\geq 1$. 

Let $\rho_F$ be the distortion risk measure of the farmer, and the distortion risk measure of a loss $Z\in\mathbb{R}^\Omega_+$ is given by
\begin{equation*}\label{eq:rho_f}
    \rho_F(Z)=\int_0^{\infty}g_f(\mathbb{P}(Z>z)) dz,
\end{equation*}
where $g_f$ is the distortion function of the farmer. A distortion function is a nondecreasing and concave function $g:[0,1] \rightarrow[0,1]$ such that $g(0)=0$. The class of distortion functions is denoted by $\mathcal G_d$, and, clearly, $\mathcal G_d\subset \mathcal G$. 

The assumption of concavity of the distortion function implies aversion to mean-preserving spreads of the corresponding distortion risk measure \citep{yaari1987}.
The distortion functions employed in this work for the farmer's risk measure include: 1. Conditional Value-at-Risk (CVaR). 2. A convex combination of CVaR and expected risk.

\subsubsection{Conditional Value at Risk}
Conditional Value-at-Risk (CVaR), also known as Expected Shortfall, quantifies the expected value of the loss given that the loss exceeds a specified threshold. Mathematically, for a random variable $Z$ and a confidence level $\alpha \in (0, 1)$, CVaR is defined as:
\begin{equation}
    \text{CVaR}_{\alpha}(Z) = \frac{1}{1-\alpha}\int_\alpha^1 \text{VaR}_{u}(Z)du,
\end{equation}
where $\text{VaR}_{\alpha}(Z)$ represents the Value-at-Risk at level $\alpha$, which is defined as the $\alpha$-th quantile of the distribution of $Z$:

\begin{equation*}
    \text{VaR}_{u}(Z) = \inf \{ x \in \mathbb{R} : P(Z \leq x) \geq u \}, u\in[0,1].
\end{equation*}
For continuous random variables, the CVaR is equal to 
\begin{equation}
    \text{CVaR}_{\alpha}(Z) =E[Z|Z\geq \text{VaR}_{\alpha}(Z)].
\end{equation}
The risk measure $\text{CVaR}_\alpha$ is a distortion risk measure with distortion function $g(s)=\min\{\frac{s}{1-\alpha},1\}$ for $s\in[0,1]$ \citep{dhaene2006}.
\subsubsection{Convex Combination of CVaR and Expected Risk}
To manage the trade-off between risk and expected return, we incorporate a convex combination of CVaR and expected risk. This combination is governed by a weight $\lambda \in [0,1]$, which determines the relative importance of CVaR versus expected risk. The convex combination is expressed as $\rho(Z) = \lambda\,\mathbb{E}(Z) + (1-\lambda)\,\operatorname{CVaR}_{\alpha}(Z)$, and the corresponding distortion function is given by $g(s) = \lambda s + (1-\lambda)\min\{s/(1-\alpha), 1\}$ for $s\in[0,1]$. This places $\rho$ within the distortion-risk-measure framework, yielding an increasing, concave distortion with a kink at  $s = 1-\alpha$ \citep{cheung2017characterizations}. 

The distortion function $g(s)$ is largely linear in the body of the distribution, while adding extra weight to the worst $1-\alpha$ tail. In the tail region ($s \le 1-\alpha$), the slope is $\lambda + (1-\lambda)/(1-\alpha)$, which upweights rare and severe losses. In the body ($s > 1-\alpha$), the slope is $\lambda$, recovering mean-like behavior. Thus, $\lambda$ controls the appetite for risk: $\lambda \to 1$ makes $\rho$ approach the mean, while $\lambda \to 0$ makes $\rho$ approach pure CVaR. The parameter $\alpha$ sets the tail horizon: larger $\alpha$ focuses on rarer extremes. Because $g$ is concave, the associated distortion risk measure is coherent \citep{artzner1999} and, in particular, satisfies subadditivity \citep{wang1997}.

\subsubsection{Power function}
The power distortion function $g(s)=s^{1/\rho}$ is used to embed a transparent, one-parameter distortion function into insurance and reinsurance pricing by overweighting or underweighting tail probabilities in a controlled way. The parameter $\rho$ governs the shape and thus the attitude toward risk: when $\rho>1$, $1/\rho<1$, and $g$ is concave on $[0,1]$, so $g(s)\ge s$ for $s\in(0,1)$. This inflates survival probabilities, raises premiums, and represents a risk-averse tail loading. When $\rho<1$, $1/\rho>1$, and the function $g$ is convex, so $g(s)\le s$. This downweights tail probabilities, lowers premiums, and corresponds to risk-seeking behavior. At $\rho=1$, $g$ is linear, and this yields risk-neutral pricing. This distortion is also known as the proportional hazards transform. This distortion function links the premiums to tail risk, which aligns with market-consistent pricing ideas. Compared with Wang’s transform, which distorts probabilities via the (standard) normal CDF, the power form is a simpler alternative within the same distortion-based framework and is convenient when a single parameter, a closed form, and tail emphasis are desired \citep[see, e.g.,][]{wang1996premium}. 

\subsection{Sequential game}
The farmer and the insurer make decisions sequentially: the insurer selects the pricing rule $\Pi$ first, and the farmer then chooses the insurance payoff function $I(\mathbf{X})$ after observing the pricing rule. The farmer chooses $(\mathbf{X})$ to minimize its risk as measured by $\rho_F$. The insurer chooses the premium function $\Pi(I(\mathbf{X}))$ to maximize its profit, taking into account the farmer’s decisions on the payoff functions for each possible premium function. We express this sequential decision-making problem as a bilevel optimization problem, with the insurer’s objective as the upper problem and the farmer’s objective as the lower problem.\\


\noindent\textbf{Upper Problem (UP): Insurer}
\begin{equation*}
 \operatorname{Max}_{\Pi \in \mathcal{P}_s} \Pi(I(\mathbf{X}))-(1+\mu) \mathbb{E}(I(\mathbf{X})), \end{equation*}
where the feasible set $\mathcal P_s$ is a closed and convex subset of the following set:
\begin{equation*} \mathcal{P}=\left\{\hat{\Pi}: \mathbb{R}^\Omega_+ \rightarrow \mathbb{R}_{+} \mid \hat{\Pi}(I(\mathbf{X}))= \int_0^{\infty} g_i(\mathbb{P}(I(\mathbf{X})>z)) dz, g_i\in\mathcal G\right\}.\end{equation*}
\textbf{Lower Problem (LP): Farmer} 
\begin{equation}\min _{I \in \mathcal{I}} \rho_F\left(Y - I(\mathbf{X}) + \Pi(I(\mathbf{X}))\right).\label{eq:LP}
\end{equation}\medskip

As special cases of $\mathcal P_s$, we consider
\begin{equation*}
\begin{aligned}
&\mathcal{P}_{s1}=\left\{\hat{\Pi}: \mathbb{R}^\Omega_+ \rightarrow \mathbb{R}_{+} \mid \hat{\Pi}(I(\mathbf{X}))=(1+\theta) \mathbb{E}(I(\mathbf{X})), \theta \geq 0\right\},\text { or} \\
&\mathcal{P}_{s2}=\left\{\hat{\Pi}: \mathbb{R}^\Omega_+ \rightarrow \mathbb{R}_{+} \mid \hat{\Pi}(I(\mathbf{X}))=(1+\theta) \int_0^{\infty} \mathbb{P}(I(\mathbf{X})>z)^{\frac1\rho} dz,\theta\geq0, \rho \geq 1\right\}. 
\end{aligned}
\end{equation*}
We denote the first special case of $\mathcal P_{s}$ as $\mathcal P_{s1}$ (Problem 1), where we explicitly set the insurer's pricing rule to the expected premium principle. This means that the insurer selects only the risk-loading factor $\theta$ in order to maximize its profit. The second special case of $\mathcal P_{s}$ is denoted by $\mathcal P_{s2}$ (Problem 2). In this special case, we adopt the power distortion function for the premium, and the insurer then maximizes its profit by choosing two parameters $\rho$ and $\theta$. In this case, the insurer has more flexibility in its pricing strategy compared with $\mathcal P_{s1}$. We begin by analyzing solutions to these two special cases (Problems 1 and 2). We then generalize the premium principle to the feasible set $\mathcal{P}_{s} = \mathcal{P} $, and denote this setting as Problem 3, in which we investigate equilibrium solutions under the general pricing function $g_i$.

\section{Model}\label{sec:3}
\subsection{Neural Network Based Models}
The deep-learning models that we propose to obtain the insurance payoffs are fully connected neural networks (NNs) and convolutional neural networks (CNNs). CNNs are a special class of neural networks that incorporate convolution in at least one of their layers. In this study, we compare these two classes of models. NNs process flattened vector inputs; thus, the two-dimensional time-index matrices must be flattened. CNNs are particularly suitable for processing grid-structured data without destroying its two-dimensional structure. This makes CNNs a well-motivated alternative to the traditional fully connected neural networks proposed in \cite{chen2024managing}.  

\subsubsection{Fully Connected Neural Networks (NNs)}
Neural Networks (NNs) are models composed of interconnected neurons that are organized in layers. Each neuron processes inputs, applies a weight, adds a bias, and passes the result through an activation function. Specifically, the outputs of neurons in one layer serve as inputs to neurons in the subsequent layer. The most common type of NN is the fully connected architecture, which is illustrated in Figure \ref{fig:nn_illustration}. This model receives as input an $n$-dimensional vector, $X = (X_1, X_2, \dots, X_n)^T$, and outputs a single prediction $y$.

In a deep learning NN architecture, the structure consists of an input layer, an output layer, and $D$ hidden layers ($D > 1$). The $d$-th ($d = 1, 2, \dots, D$) hidden layer, $Z^{(d)} = (Z^{(d)}_1, Z^{(d)}_2, \dots, Z^{(d)}_{n_d})^T$, contains $n_d$ neurons. Each neuron in the hidden layer is obtained by applying a nonlinear function, $\sigma_{d-1}$, to the linear combination of the neurons from the previous layer. This is formally expressed as:

\begin{equation}
\mathbf{Z}^{(d)} = \sigma_{d-1} (\mathbf{\alpha}^{(d-1)} + \mathbf{W}^{(d-1)} \mathbf{Z}^{(d-1)}),
\end{equation}
where:
\begin{itemize}
\item $W^{(d-1)}$ is an $(l_h \times l_{h-1})$-dimensional weight matrix;
\item $\alpha^{(d-1)}$ is a $(n_d \times 1)$-vector of bias units, capturing intercepts in the model;
\item $\sigma_{h-1}$ is the activation function, which is typically nonlinear and predefined.

\end{itemize}

This fully connected structure ensures that neurons between two adjacent layers are pairwise connected.

\begin{figure}
    \centering
    \includegraphics[width=1\textwidth]{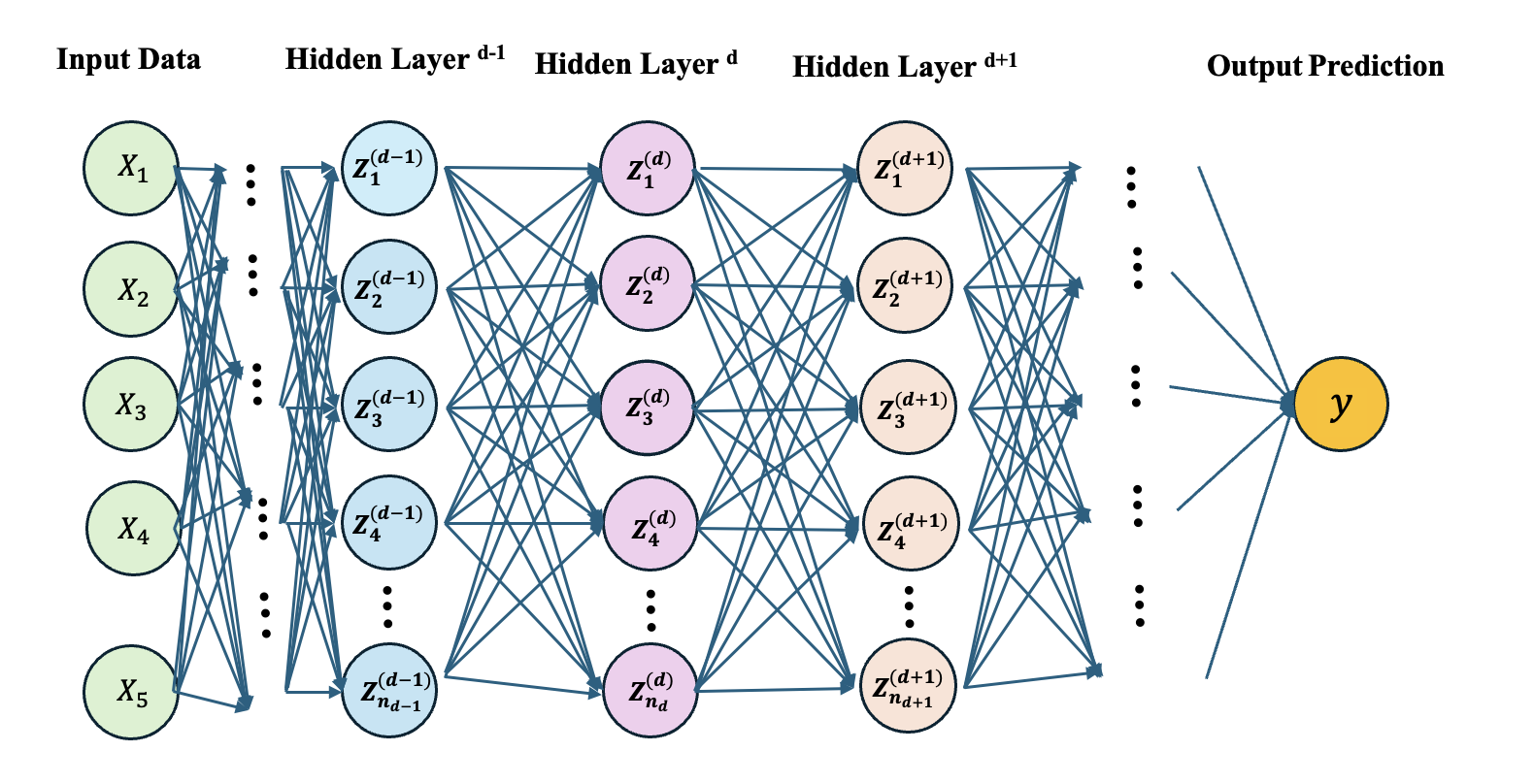}
    \caption{Illustration of NN structure.}
    \label{fig:nn_illustration}
\end{figure}

\subsubsection{Convolutional Neural Networks (CNNs)}
CNNs are a type of deep learning model primarily used for processing structured grid-like data such as images. They automatically and adaptively learn spatial hierarchies of features through layers of convolution, pooling, and fully connected layers. After the breakthrough work by \cite{krizhevsky2012imagenet}, CNNs became well-known for deep learning models with a larger volume of data. 

CNNs offer several key advantages that make them highly effective for handling structured data. First, they use convolutional kernels to efficiently extract local features, which reduces computational complexity and the number of parameters compared to fully connected networks. Second, CNNs adopt weight sharing, where the same set of weights is applied across different regions of the input. This reduces the number of parameters that need to be learned and ensures that the model is robust to small translations or shifts in the input, which is particularly useful for images where features often appear in various locations. Third, CNNs also learn features hierarchically, allowing them to recognize simple patterns like edges in early layers and complex structures like objects in deeper layers. This hierarchical learning allows CNNs to capture both local and global patterns in the data, making them highly effective for complex data. For our study, CNNs are particularly beneficial for processing the index-time matrix as images. The index-time matrix can be reshaped into image-like structures, where each pixel represents a specific location in space and time. Treating the index-time matrix as an image allows CNNs to leverage their unique capabilities. 

The construction of CNNs involves four main components: (1) convolutional operation; (2) activation; (3) pooling; and (4) fully connected neural networks. In the convolutional layer, units are organized into feature maps that detect local patterns in the data. These feature maps then pass through a nonlinear activation function.  After activation, a max pooling layer is applied to downsample the feature maps by selecting the maximum value within each region. This step reduces the number of parameters, which helps prevent overfitting and makes the model more robust to small shifts in the data. In the final stage of the architecture, the outputs from the convolution, activation, and pooling layers are flattened and combined into fully connected neural-network layers, which then generate the model's forecast of the insurance payoff $I(X)$. Note that the insurance payoffs cannot be negative. Therefore, we use the ReLU (Rectified Linear Unit) activation function ($\sigma(x) = max(x, 0)$) in our model to ensure that the predicted insurance payoffs are nonnegative. The illustration of the CNN structures adopted in this study is shown in Figure \ref{fig: cnn_illustration}.

\begin{figure}
    \centering
    \includegraphics[width=1\textwidth]{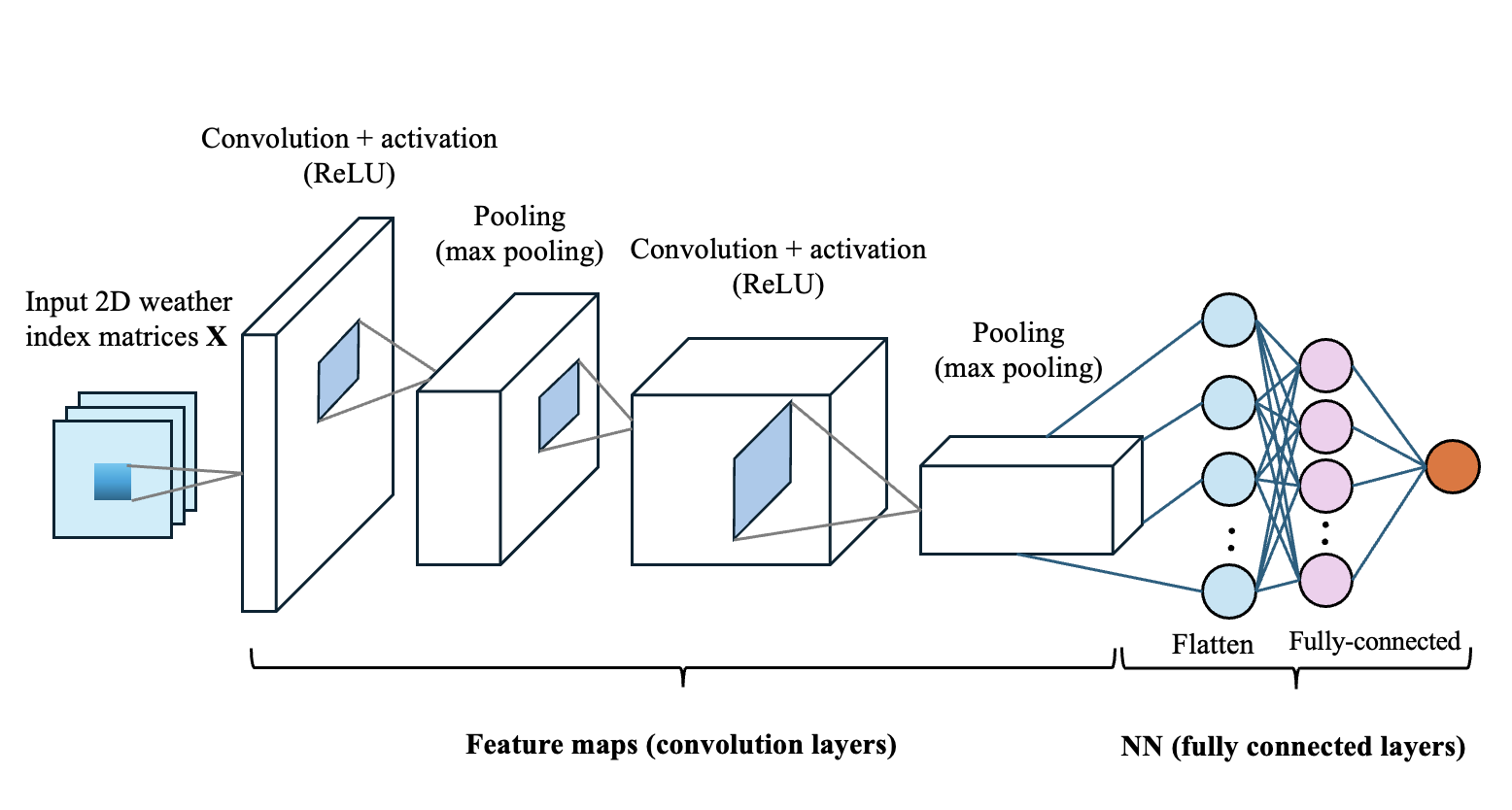}
    \caption{Illustration of CNN structure.}
    \label{fig: cnn_illustration}
\end{figure}

\subsection{Penalized Bilevel Programming}
Bilevel optimization plays an increasingly important role in machine learning \cite{shen2023penalty} and is particularly well-suited for problems with a sequential structure where one level of decision is nested within another. In our problem formulation, the insurer's decision (from solving the UP problem) depends on the outcome of the farmer's LP problem. This makes bilevel programming algorithms promising for our study.

In general, the structure of a bilevel programming problem can be expressed as:

\begin{equation}
\mathcal{BP}: \min_{x, y} f(x, y) \text{ s.t. } x \in \mathcal{C}, y \in \mathcal{S}(x) := \arg \min_{y \in \mathcal{U}(x)} h(x, y),
\end{equation}
where $\mathcal{C} \subseteq \mathbb{R}^{q}$ is a nonempty, closed set, and $\mathcal{U}(x)$ and $\mathcal{S}(x)$ are nonempty, closed sets given any $x \in \mathcal{C}$. The functions $f$ and $h$ are referred to as the upper-level and lower-level objectives, respectively. 

The bilevel optimization problem $\mathcal{BP}$ is inherently challenging due to the coupling between the upper-level and lower-level problems through the solution set $\mathcal{S}(x)$. In particular, solving $\mathcal{BP}$ can be extremely difficult when the lower-level function $h(x, \cdot)$ is not strongly convex or is constrained, as traditional implicit gradient methods are unable to handle such cases directly. To address these challenges, recent work has explored penalty-based reformulations of $\mathcal{BP}$, which transform the bilevel problem into a single-level optimization problem by penalizing certain optimality metrics of the lower-level problem.

The penalty reformulation of $\mathcal{BP}$ is defined as:

\begin{equation}
\mathcal{BP}_{\gamma p}: \min_{x, y}  F_\gamma(x, y) = f(x, y) + \gamma p(x, y) \text{ s.t. } x \in \mathcal{C}, y \in \mathcal{U}(x),
\end{equation}
where $p(x, y)$ is a penalty function that measures the distance to the lower-level solution set $\mathcal{S}(x)$. We define a squared-distance-bound function to ensure that $p(x, y)$ approximates the distance from y to $\mathcal{S}(x)$.

\begin{definition}
(Squared-distance-bound function). A function $p: \mathbb{R}^{d_x} \times \mathbb{R}^{d_y} \mapsto \mathbb{R}$ is a $\rho$-squared-distance-bound function if there exists $\rho > 0$ such that for any $x \in \mathcal{C}, y \in \mathcal{U}(x)$, it holds:
\begin{equation}
\begin{aligned}
& p(x, y) \geq 0, \quad \rho \cdot p(x, y) \geq d_{\mathcal{S}(x)}^2(y), \\
& p(x, y) = 0 \text{ if and only if } d_{\mathcal{S}(x)}(y) = 0,
\end{aligned}
\end{equation}
where $d_{\mathcal{S}(x)}(y)$ is the Euclidean distance from $y$ to the lower-level solution set $\mathcal{S}(x)$.
\end{definition}

Given a squared-distance-bound function $p(x, y)$, we can define an $\epsilon$-approximate version of the bilevel problem $\mathcal{BP}$:

\begin{equation}
\mathcal{BP}_\epsilon: \min_{x, y} f(x, y) \text{ s.t. } x \in \mathcal{C}, y \in \mathcal{U}(x), p(x, y) \leq \epsilon.
\end{equation}
When $\epsilon = 0$, $\mathcal{BP}_\epsilon$ recovers the original bilevel problem $\mathcal{BP}$. For $\epsilon > 0$, $\mathcal{BP}_\epsilon$ serves as an $\epsilon$-approximation of $\mathcal{BP}$, where the lower-level solution $y$ is within $\epsilon$ of the true lower-level solution set $\mathcal{S}(x)$.

\cite{shen2023penalty} show that, under certain generic conditions on $h(x, y)$ that hold without the strong convexity of $h(x,y)$, global (local) solutions of $\mathcal{BP}$ can be approximated by solving $\mathcal{BP}_{\gamma p}$. We follow their approach to establish the relationship between $\mathcal{BP}$, $\mathcal{BP}_\epsilon$, and $\mathcal{BP}_{\gamma p}$.

\begin{definition}
(Lipschitz continuity). Given $L > 0$, a function $\ell: \mathbb{R}^d \mapsto \mathbb{R}^{d'}$ is said to be $L$-Lipschitz continuous on $\mathcal{X} \subseteq \mathbb{R}^d$ if for any $x, x' \in \mathcal{X}$, it holds:
\begin{equation}
\|\ell(x) - \ell(x')\| \leq L \|x - x'\|.
\end{equation}
\end{definition}

\begin{assumption}[Lipschitz continuity of upper objective]
\label{ass:lipschitz_f}
There exists a constant $L$ such that for any $x \in \mathcal{C}$, the upper-level objective $f(x, \cdot)$ is $L$-Lipschitz continuous on $\mathcal{U}(x)$.
\end{assumption}

According to \eqref{eq:premium}, our UP problem can be simplified to a linear function of order statistics, which is Lipschitz continuous by linearity. Thus, Assumption \ref{ass:lipschitz_f} is satisfied directly. The key result that establishes the relationship between $\mathcal{BP}$, $\mathcal{BP}_{\gamma p}$, and $\mathcal{BP}_\epsilon$ is the following theorem, which shows that global and local solutions of the penalized problem $\mathcal{BP}_{\gamma p}$ correspond to those of the original bilevel problem $\mathcal{BP}$.

\begin{theorem}[General Relation on Global Solutions]
\label{thm:gen_global}
Assume $p(x, y)$ is a $\rho$-squared-distance-bound function and Assumption \ref{ass:lipschitz_f} holds. Given any $\epsilon_1 > 0$, any global solution of $\mathcal{BP}$ is an $\epsilon_1$-global-minimum point of $\mathcal{BP}_{\gamma p}$ with any $\gamma \geq \gamma^* = \frac{L^2 \rho}{4} \epsilon_1^{-1}$. Conversely, given $\epsilon_2 \geq 0$, if $(x_\gamma, y_\gamma)$ achieves an $\epsilon_2$-global-minimum of $\mathcal{BP}_{\gamma p}$ with $\gamma > \gamma^*$, $(x_\gamma, y_\gamma)$ is the global solution of $\mathcal{BP}_{\epsilon_\gamma}$ with some $\epsilon_\gamma \leq \frac{\epsilon_1 + \epsilon_2}{\gamma - \gamma^*}$.
\end{theorem}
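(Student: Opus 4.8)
The plan is to combine three ingredients that are already in place: the squared-distance-bound property of $p$ (Definition 1), the Lipschitz continuity of $f(x,\cdot)$ (Assumption~\ref{ass:lipschitz_f}), and Young's inequality to trade the lower-level distance against the penalty. The structural fact driving everything is that any global solution $(x^*,y^*)$ of $\mathcal{BP}$ has $y^*\in\mathcal S(x^*)$, hence $d_{\mathcal S(x^*)}(y^*)=0$, and therefore $p(x^*,y^*)=0$ by the squared-distance-bound property. Consequently $F_\gamma(x^*,y^*)=f(x^*,y^*)=:f^*$, the optimal value of $\mathcal{BP}$, which lets me pass freely between the penalized value and the bilevel value at optimal points.

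For the forward direction I would fix any $(x,y)$ feasible for $\mathcal{BP}_{\gamma p}$ (so $x\in\mathcal C$, $y\in\mathcal U(x)$) and take $\bar y$ to be a Euclidean projection of $y$ onto the nonempty closed set $\mathcal S(x)$, which exists in finite dimensions. Then $(x,\bar y)$ is feasible for $\mathcal{BP}$, so $f^*\le f(x,\bar y)$, and $L$-Lipschitz continuity gives $f(x,\bar y)\le f(x,y)+L\,d_{\mathcal S(x)}(y)$. The bound $d_{\mathcal S(x)}^2(y)\le\rho\,p(x,y)$ converts the cross term into $L\sqrt{\rho}\,\sqrt{p(x,y)}$, and Young's inequality in the form $2uv\le u^2+v^2$ with $u=\sqrt{\gamma\,p(x,y)}$ and $v=L\sqrt{\rho}/(2\sqrt{\gamma})$ yields $L\sqrt{\rho}\,\sqrt{p(x,y)}\le\gamma\,p(x,y)+L^2\rho/(4\gamma)$. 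Chaining these gives $f^*\le F_\gamma(x,y)+L^2\rho/(4\gamma)$, and the choice $\gamma\ge\gamma^*=L^2\rho/(4\epsilon_1)$ makes the residual at most $\epsilon_1$. Since $F_\gamma(x^*,y^*)=f^*$, this shows $(x^*,y^*)$ is an $\epsilon_1$-global minimizer of $\mathcal{BP}_{\gamma p}$.

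For the converse I would run the same estimate at $(x_\gamma,y_\gamma)$, but now tuned with $\gamma^*$ instead of $\gamma$. Testing the $\epsilon_2$-optimality of $(x_\gamma,y_\gamma)$ against $(x^*,y^*)$ gives $f(x_\gamma,y_\gamma)+\gamma\,p(x_\gamma,y_\gamma)\le f^*+\epsilon_2$. Lower-bounding $f(x_\gamma,y_\gamma)\ge f^*-L\sqrt{\rho}\,\sqrt{p(x_\gamma,y_\gamma)}$ via the projection-plus-Lipschitz argument and applying Young's inequality with $\gamma^*$ (so the additive residual is exactly $L^2\rho/(4\gamma^*)=\epsilon_1$) cancels the $f^*$ terms and leaves $(\gamma-\gamma^*)\,p(x_\gamma,y_\gamma)\le\epsilon_1+\epsilon_2$. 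Because $\gamma>\gamma^*$, this yields $p(x_\gamma,y_\gamma)\le\epsilon_\gamma:=(\epsilon_1+\epsilon_2)/(\gamma-\gamma^*)$, so $(x_\gamma,y_\gamma)$ is feasible for $\mathcal{BP}_{\epsilon_\gamma}$.

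The optimality half of the converse is the step I expect to be delicate. Setting $\epsilon_\gamma=p(x_\gamma,y_\gamma)$ (bounded above as just computed), any competitor $(x',y')$ feasible for $\mathcal{BP}_{\epsilon_\gamma}$ satisfies $p(x',y')\le\epsilon_\gamma=p(x_\gamma,y_\gamma)$, so the penalty term can only decrease when moving to the competitor; feeding this into $f(x_\gamma,y_\gamma)+\gamma\,p(x_\gamma,y_\gamma)\le f(x',y')+\gamma\,p(x',y')+\epsilon_2$ gives $f(x_\gamma,y_\gamma)\le f(x',y')+\epsilon_2$. This is exact global optimality for $\mathcal{BP}_{\epsilon_\gamma}$ when $\epsilon_2=0$ and optimality up to the penalty slack $\epsilon_2$ otherwise, and the care required is precisely to pick $\epsilon_\gamma$ at the active penalty level so competitors cannot gain from the penalty, while tracking the $\epsilon_2$ gap. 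The only genuinely analytic inputs are the existence of the projection onto $\mathcal S(x)$ and the two one-sided Young's inequalities; the remainder is bookkeeping of the constants $\gamma^*$, $\epsilon_1$, and $\epsilon_2$.
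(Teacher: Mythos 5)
Your proof is correct and follows essentially the same route as the source of this result: the paper states Theorem~\ref{thm:gen_global} without its own proof, importing it from \cite{shen2023penalty}, whose argument is precisely your combination of projecting $y$ onto the closed set $\mathcal{S}(x)$, the $L$-Lipschitz bound, the squared-distance-bound $d_{\mathcal{S}(x)}^2(y)\le \rho\, p(x,y)$, and Young's inequality tuned at the threshold $\gamma^*=L^2\rho/(4\epsilon_1)$, with the converse obtained by pairing the $\epsilon_2$-optimality inequality against the $\gamma^*$-tuned lower bound to isolate $(\gamma-\gamma^*)\,p(x_\gamma,y_\gamma)\le \epsilon_1+\epsilon_2$. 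Your closing caveat is also accurate: choosing $\epsilon_\gamma=p(x_\gamma,y_\gamma)$ makes the penalty term cancel against any competitor, but for $\epsilon_2>0$ this yields only $\epsilon_2$-global-optimality on $\mathcal{BP}_{\epsilon_\gamma}$ (exact global optimality when $\epsilon_2=0$), so the theorem's phrase ``the global solution'' should be read with that qualification.
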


Theorem \ref{thm:gen_global} implies that solving the penalized problem $\mathcal{BP}_{\gamma p}$ can yield an approximate solution to the original bilevel problem $\mathcal{BP}$. The parameter $\gamma$ controls the trade-off between the upper-level objective and the penalty term, which ensures that the solution remains close to the lower-level solution set.

\begin{theorem}[General Relation on Local Solutions]
\label{thm:gen_local}
Assume $p(x, \cdot)$ is continuous for any $x \in \mathcal{C}$ and $p(x, y)$ is a $\rho$-squared-distance-bound function. Given $\gamma > 0$, let $(x_\gamma, y_\gamma)$ be a local solution of $\mathcal{BP}_{\gamma p}$ on $\mathcal{N}((x_\gamma, y_\gamma), r)$, where $\mathcal{N}$ is a circle with a center of $(x_\gamma, y_\gamma)$ and a radius $r$. Assume $f(x_\gamma, \cdot)$ is $L$-Lipschitz continuous on $\mathcal{N}(y_\gamma, r)$. Assume either one of the following is true:
(i) There exists $\bar{y} \in \mathcal{N}(y_\gamma, r)$ such that $\bar{y} \in \mathcal{U}(x_\gamma)$ and $p(x_\gamma, \bar{y}) \leq \epsilon$ for some $\epsilon \geq 0$. Define $\bar{\epsilon}_\gamma = \frac{L^2 \rho}{\gamma^2} + 2\epsilon$.
(ii) The set $\mathcal{U}(x_\gamma)$ is convex and the function $p(x_\gamma, \cdot)$ is convex. Define $\bar{\epsilon}_\gamma = \frac{L^2 \rho}{\gamma^2}$.

Then $(x_\gamma, y_\gamma)$ is a local solution of $\mathcal{BP}_{\epsilon_\gamma}$ with $\epsilon_\gamma \leq \bar{\epsilon}_\gamma$.
\end{theorem}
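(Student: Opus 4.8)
The plan is to set $\epsilon_\gamma := p(x_\gamma, y_\gamma)$ and then establish two things: that with this choice $(x_\gamma, y_\gamma)$ is automatically a local solution of $\mathcal{BP}_{\epsilon_\gamma}$, and that $\epsilon_\gamma \le \bar{\epsilon}_\gamma$. The first part is essentially free. Feasibility holds since $p(x_\gamma, y_\gamma) = \epsilon_\gamma$. For local optimality, I take any competitor $(x,y)$ in the neighborhood $\mathcal{N}((x_\gamma, y_\gamma), r)$ that is feasible for $\mathcal{BP}_{\epsilon_\gamma}$, so $x \in \mathcal{C}$, $y \in \mathcal{U}(x)$, and $p(x,y) \le \epsilon_\gamma$. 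Local optimality of $(x_\gamma, y_\gamma)$ for $\mathcal{BP}_{\gamma p}$ gives $f(x_\gamma, y_\gamma) + \gamma p(x_\gamma, y_\gamma) \le f(x,y) + \gamma p(x,y)$, and because $p(x,y) \le \epsilon_\gamma = p(x_\gamma, y_\gamma)$ the penalty terms cancel in the favorable direction, yielding $f(x_\gamma, y_\gamma) \le f(x,y)$. Hence the whole substance of the theorem is the bound $\epsilon_\gamma \le \bar{\epsilon}_\gamma$.

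For that bound, the target in both cases is a single scalar inequality of the form $\gamma\, p(x_\gamma, y_\gamma) \le L\sqrt{\rho\, p(x_\gamma, y_\gamma)} + s$, where the slack is $s=0$ in case (ii) and $s = \gamma\epsilon$ in case (i). Writing $u = \sqrt{p(x_\gamma, y_\gamma)} \ge 0$, the nonnegative root of $\gamma u^2 - L\sqrt{\rho}\, u - s \le 0$ together with $(a+b)^2 \le 2a^2 + 2b^2$ delivers $p(x_\gamma, y_\gamma) \le \tfrac{L^2\rho}{\gamma^2} + \tfrac{2s}{\gamma}$, i.e. $\tfrac{L^2\rho}{\gamma^2}$ in case (ii) and $\tfrac{L^2\rho}{\gamma^2} + 2\epsilon$ in case (i). So the work reduces to deriving that inequality, which I do by comparing $y_\gamma$ against a near-solution of the lower level and invoking Assumption \ref{ass:lipschitz_f} and the squared-distance-bound property $d_{\mathcal{S}(x_\gamma)}^2(y_\gamma) \le \rho\, p(x_\gamma, y_\gamma)$.

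In case (ii) I use the projection $y^\ast = \mathrm{proj}_{\mathcal{S}(x_\gamma)}(y_\gamma)$ and the convex path $y_t = (1-t)y_\gamma + t y^\ast$, which lies in $\mathcal{U}(x_\gamma)$ by convexity and in $\mathcal{N}$ for small $t > 0$. Local optimality against $y_t$, convexity of $p(x_\gamma, \cdot)$ (so $p(x_\gamma, y_t) \le (1-t)p(x_\gamma, y_\gamma)$ since $p(x_\gamma, y^\ast)=0$), and $L$-Lipschitzness of $f(x_\gamma,\cdot)$ cancel one factor of $t$; dividing by $t$ and letting $t \to 0^+$ yields $\gamma\, p(x_\gamma, y_\gamma) \le L\, d_{\mathcal{S}(x_\gamma)}(y_\gamma) \le L\sqrt{\rho\, p(x_\gamma, y_\gamma)}$. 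In case (i), lacking convexity, I let $\hat{y}$ be the nearest point to $y_\gamma$ in the set $\{y \in \mathcal{U}(x_\gamma): p(x_\gamma, y) \le \epsilon\}$, which is nonempty by hypothesis and closed by continuity of $p(x_\gamma, \cdot)$. Since $\mathcal{S}(x_\gamma)$ sits inside this set (there $p = 0 \le \epsilon$), we get $\|\hat{y} - y_\gamma\| \le d_{\mathcal{S}(x_\gamma)}(y_\gamma)$, and $\hat{y} \in \mathcal{N}(y_\gamma, r)$ because the hypothesized $\bar{y}$ already lies there and $\hat y$ is no farther. Local optimality against $\hat{y}$ plus Lipschitzness gives $\gamma\, p(x_\gamma, y_\gamma) \le L\|\hat{y} - y_\gamma\| + \gamma\epsilon \le L\sqrt{\rho\, p(x_\gamma, y_\gamma)} + \gamma\epsilon$, the inequality with slack $\gamma\epsilon$.

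The main obstacle is keeping the comparison point inside the local neighborhood $\mathcal{N}$ where optimality of $(x_\gamma, y_\gamma)$ may be invoked: the exact projection $y^\ast$ need not lie in $\mathcal{N}$. Convexity resolves this in case (ii), since arbitrarily small feasible steps toward $y^\ast$ stay in $\mathcal{N}$ and the slack vanishes in the limit; in case (i) such steps are unavailable, so I must lean on the assumed existence of a feasible point of penalty at most $\epsilon$ inside $\mathcal{N}$, and the cost of substituting this near-feasible point for the true projection is precisely the additive $2\epsilon$. The remaining pieces — the Lipschitz estimate, the distance-bound substitution, and solving the scalar quadratic — are routine.
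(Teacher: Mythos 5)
Your proof is correct and follows essentially the same route as the paper's source for this result \citep{shen2023penalty}, which the paper cites rather than reproving: set $\epsilon_\gamma = p(x_\gamma, y_\gamma)$ so that local optimality in $\mathcal{BP}_{\epsilon_\gamma}$ is immediate from the penalty cancellation, then bound $p(x_\gamma, y_\gamma)$ by comparing against a nearby lower-level (near-)solution --- the convex segment toward $\operatorname{proj}_{\mathcal{S}(x_\gamma)}(y_\gamma)$ in case (ii), and the nearest point of the closed set $\{y \in \mathcal{U}(x_\gamma) : p(x_\gamma, y) \le \epsilon\}$ (which contains $\mathcal{S}(x_\gamma)$ and meets $\mathcal{N}(y_\gamma, r)$ via $\bar{y}$) in case (i) --- and solve the resulting scalar quadratic $\gamma u^2 \le L\sqrt{\rho}\,u + s$. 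Your handling of the one genuine subtlety, keeping the comparison point inside the neighborhood where local optimality applies, is exactly the mechanism the original argument uses, and the constants $\frac{L^2\rho}{\gamma^2}$ and $\frac{L^2\rho}{\gamma^2} + 2\epsilon$ come out matching the statement.
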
 

Theorem \ref{thm:gen_local} provides conditions under which a local solution of the penalized problem corresponds to a local solution of the original bilevel problem.

To apply this framework, we select the value-gap penalty, $p(x, y)=h(x, y)-v(x)$, where $v(x) := \min_{y' \in \mathcal{U}(x)} h(x, y')$. In our insurance game, the lower-level objective $h(x,\cdot)$ based on CVaR is polyhedral and convex, but not strongly convex. Therefore, we leverage the property of a weak sharp minimum, which holds for polyhedral convex functions on compact sets.

\begin{assumption}[Weak Sharp Minimum]
\label{ass:weak_sharp_min}
The lower-level objective $h(x, \cdot)$ has the property of a weak sharp minimum on the compact set $\mathcal{U}(x)$. That is, there exists a constant $\kappa > 0$ such that for any $x \in \mathcal{C}$ and $y \in \mathcal{U}(x)$:
\begin{equation}
h(x, y) - v(x) \ge \kappa \cdot d_{\mathcal{S}(x)}(y).
\end{equation}
\end{assumption}

This assumption allows us to prove that our chosen penalty function satisfies the squared-distance-bound condition required by Theorems \ref{thm:gen_global} and \ref{thm:gen_local}. According to \eqref{eq:premium}, our LP objective \eqref{eq:LP} can be simplified to a linear function of order statistics, which has a weak sharp minimum if $\mathcal{U}(x)$ is nonempty. Thus, Assumption \ref{ass:weak_sharp_min} is satisfied directly.

\begin{lemma}[Value-Gap as a Squared-Distance Bound]
\label{lem:sq_dist_bound}
Under Assumption \ref{ass:weak_sharp_min}, if the feasible set $\mathcal{U}(x)$ is compact with a uniform diameter $D$, then the value-gap penalty $p(x, y)=h(x, y)-v(x)$ is a $\rho$-squared-distance-bound with $\rho = D/\kappa$.
\end{lemma}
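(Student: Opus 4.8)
The plan is to verify directly the three defining clauses of a $\rho$-squared-distance-bound function, taking $\rho = D/\kappa$. Throughout I would fix an arbitrary $x \in \mathcal{C}$ and $y \in \mathcal{U}(x)$ and exploit that, by the definition of $\mathcal{BP}$, the solution set $\mathcal{S}(x)$ is a nonempty closed subset of $\mathcal{U}(x)$ on which the minimum $v(x) = \min_{y' \in \mathcal{U}(x)} h(x,y')$ is attained.

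First I would dispatch the two easy clauses. Nonnegativity is immediate: since $v(x)$ is the minimum of $h(x,\cdot)$ over $\mathcal{U}(x)$ and $y \in \mathcal{U}(x)$, we have $p(x,y) = h(x,y) - v(x) \geq 0$. For the equivalence $p(x,y) = 0 \iff d_{\mathcal{S}(x)}(y) = 0$, one direction uses that $d_{\mathcal{S}(x)}(y) = 0$ forces $y \in \mathcal{S}(x)$ by closedness, whence $h(x,y) = v(x)$ and $p(x,y) = 0$; the other direction substitutes $p(x,y) = 0$ into the weak-sharp inequality of Assumption \ref{ass:weak_sharp_min} to get $0 \geq \kappa\, d_{\mathcal{S}(x)}(y) \geq 0$, and $\kappa > 0$ then forces the distance to vanish.

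The central step is the quadratic bound $\rho\, p(x,y) \geq d^2_{\mathcal{S}(x)}(y)$. Here the diameter hypothesis is exactly what upgrades the linear weak-sharp estimate to a squared one: choosing any $y^* \in \mathcal{S}(x) \subseteq \mathcal{U}(x)$ gives $d_{\mathcal{S}(x)}(y) \leq \|y - y^*\| \leq D$. Multiplying the inequality $h(x,y) - v(x) \geq \kappa\, d_{\mathcal{S}(x)}(y)$ from Assumption \ref{ass:weak_sharp_min} by $D/\kappa$ and inserting $D \geq d_{\mathcal{S}(x)}(y)$ yields
\[
\rho\, p(x,y) = \frac{D}{\kappa}\bigl(h(x,y) - v(x)\bigr) \geq D\, d_{\mathcal{S}(x)}(y) \geq d^2_{\mathcal{S}(x)}(y),
\]
which is the desired estimate.

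The main obstacle is conceptual rather than computational: Assumption \ref{ass:weak_sharp_min} only delivers a bound linear in the distance, whereas the squared-distance-bound definition demands control of the \emph{squared} distance, so I would need to recognize that compactness of $\mathcal{U}(x)$ enters precisely through the factor $d_{\mathcal{S}(x)}(y) \leq D$. A secondary point to check is uniformity: because both $\kappa$ and $D$ are assumed uniform over $x \in \mathcal{C}$, the constant $\rho = D/\kappa$ is one fixed number that works simultaneously for every $x$, as the definition requires.
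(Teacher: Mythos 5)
Your proof is correct and follows essentially the same route as the paper: both arguments hinge on combining the bound $d_{\mathcal{S}(x)}(y) \le D$ (from compactness of $\mathcal{U}(x)$) with the weak-sharp-minimum inequality $d_{\mathcal{S}(x)}(y) \le \frac{1}{\kappa}\bigl(h(x,y)-v(x)\bigr)$ to upgrade the linear estimate to the squared-distance bound with $\rho = D/\kappa$. Your only addition is to verify explicitly the nonnegativity and vanishing clauses, which the paper dispatches with ``the other conditions follow from the definitions.''
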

\begin{proof}
By definition, $d_{\mathcal{S}(x)}^2(y) = d_{\mathcal{S}(x)}(y) \cdot d_{\mathcal{S}(x)}(y)$. Since $y \in \mathcal{U}(x)$ and $\mathcal{S}(x) \subset \mathcal{U}(x)$, the distance $d_{\mathcal{S}(x)}(y)$ is bounded by the diameter $D$ of $\mathcal{U}(x)$. Thus, $d_{\mathcal{S}(x)}^2(y) \le D \cdot d_{\mathcal{S}(x)}(y)$. From Assumption \ref{ass:weak_sharp_min}, we have $d_{\mathcal{S}(x)}(y) \le \frac{1}{\kappa}(h(x,y) - v(x))$. Substituting this yields:
$d_{\mathcal{S}(x)}^2(y) \le \frac{D}{\kappa} (h(x,y) - v(x))$. Setting $\rho = D/\kappa$ and $p(x, y) = h(x, y) - v(x)$, we have $\rho \cdot p(x, y) \ge d_{\mathcal{S}(x)}^2(y)$. The other conditions follow from the definitions.
\end{proof}

By directly applying Theorems \ref{thm:gen_global}-\ref{thm:gen_local} with $\rho=D/\kappa$, we get the following specific results for our problem.

\begin{proposition}[Relation on Global Solutions for the Insurance Game]
\label{prop:global_sol}
Assume Assumptions \ref{ass:lipschitz_f} and \ref{ass:weak_sharp_min} hold. Let $p(x, y)=h(x, y)-v(x)$. Suppose $\gamma \geq L \sqrt{(\rho/2) \delta^{-1}}$ with $\rho = D/\kappa$ for some $\delta > 0$. If $(x_\gamma, y_\gamma)$ is a global solution of $\mathcal{BP}_{\gamma p}$, then it is a global solution of $\mathcal{BP}_{\epsilon_\gamma}$ with $\epsilon_\gamma \leq \delta$.
\end{proposition}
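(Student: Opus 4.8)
The plan is to obtain this proposition as a direct specialization of the converse half of Theorem~\ref{thm:gen_global}, with the abstract squared-distance-bound hypothesis supplied by Lemma~\ref{lem:sq_dist_bound}. First I would verify that both hypotheses of Theorem~\ref{thm:gen_global} are in force. Assumption~\ref{ass:lipschitz_f} provides the Lipschitz constant $L$ of the upper objective directly. For the squared-distance-bound hypothesis, I would invoke Lemma~\ref{lem:sq_dist_bound}: under Assumption~\ref{ass:weak_sharp_min} (weak sharp minimum) and compactness of $\mathcal{U}(x)$ with uniform diameter $D$, the value-gap penalty $p(x,y)=h(x,y)-v(x)$ is a $\rho$-squared-distance-bound function with $\rho=D/\kappa$. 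Hence Theorem~\ref{thm:gen_global} applies verbatim with this particular value of $\rho$.

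Next I would apply the converse statement of Theorem~\ref{thm:gen_global}. The crucial observation is that a \emph{global} solution $(x_\gamma,y_\gamma)$ of $\mathcal{BP}_{\gamma p}$ is, trivially, an $\epsilon_2$-global-minimum of $\mathcal{BP}_{\gamma p}$ with $\epsilon_2=0$. Feeding $\epsilon_2=0$ into the converse bound then yields, for every auxiliary tolerance $\epsilon_1>0$ with $\gamma>\gamma^*=\frac{L^2\rho}{4}\epsilon_1^{-1}$, that $(x_\gamma,y_\gamma)$ solves $\mathcal{BP}_{\epsilon_\gamma}$ with $\epsilon_\gamma\le \frac{\epsilon_1}{\gamma-\gamma^*}$. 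The point is that $\epsilon_1$ is a free design parameter that does not appear in the statement of the proposition, so I am free to optimize over it.

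The remaining and genuinely delicate step is to translate the hypothesis on $\gamma$ into the conclusion $\epsilon_\gamma\le\delta$ through a good choice of $\epsilon_1$. Requiring $\frac{\epsilon_1}{\gamma-\gamma^*}\le\delta$ is equivalent to $\gamma\ge \gamma^*+\frac{\epsilon_1}{\delta}=\frac{L^2\rho}{4\epsilon_1}+\frac{\epsilon_1}{\delta}$, and an AM--GM balance of the two $\epsilon_1$-dependent terms minimizes the right-hand side at $\epsilon_1=\tfrac12 L\sqrt{\rho\delta}$, equivalently at $\gamma^*=\gamma/2$, giving the achievable bound $\epsilon_\gamma\le \frac{L^2\rho}{\gamma^2}$. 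This is where I expect the main obstacle to lie: it is not conceptual but bookkeeping, namely tracking the exact numerical constant in front of $L\sqrt{\rho/\delta}$ and confirming how the single free tolerance is split between the $\gamma^*$ term and the $\epsilon_1/\delta$ term. A clean balancing yields the threshold $\gamma\ge L\sqrt{\rho/\delta}$; reconciling this with the precise constant in the stated hypothesis $\gamma\ge L\sqrt{(\rho/2)\delta^{-1}}$ requires care in exactly this split (and in whether one uses the sharp optimized bound $\frac{L^2\rho}{\gamma^2}$ or a looser sufficient condition). The structure of the argument is otherwise immediate once Lemma~\ref{lem:sq_dist_bound} is in hand: I would substitute the chosen $\epsilon_1$ back, read off $\epsilon_\gamma\le\delta$, and record that $\rho=D/\kappa$ throughout, completing the specialization.
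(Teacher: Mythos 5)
Your route is exactly the paper's: the paper gives no separate argument for Proposition \ref{prop:global_sol}, saying only that it follows ``by directly applying'' Theorem \ref{thm:gen_global} with the value-gap penalty and $\rho = D/\kappa$ supplied by Lemma \ref{lem:sq_dist_bound}, and your specialization---taking $\epsilon_2 = 0$ since a global solution of $\mathcal{BP}_{\gamma p}$ is a $0$-global-minimum, then optimizing over the free tolerance $\epsilon_1$---is the correct way to carry that out. Your computations are right: minimizing $\epsilon_1/(\gamma - \gamma^*)$ with $\gamma^* = \frac{L^2\rho}{4}\epsilon_1^{-1}$ gives the balance $\gamma^* = \gamma/2$ and the sharp bound $\epsilon_\gamma \leq L^2\rho/\gamma^2$, hence the sufficient threshold $\gamma \geq L\sqrt{\rho\,\delta^{-1}}$.

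However, the issue you flag as mere bookkeeping is a genuine discrepancy that cannot be repaired within Theorem \ref{thm:gen_global}: since $\epsilon_\gamma \leq L^2\rho/\gamma^2$ is the best the theorem yields, the stated hypothesis $\gamma \geq L\sqrt{(\rho/2)\delta^{-1}}$ (i.e.\ $\gamma^2 \geq L^2\rho/(2\delta)$) delivers only $\epsilon_\gamma \leq 2\delta$, so your argument, completed as outlined, proves the proposition only under a hypothesis stronger by a factor $\sqrt{2}$. To recover the statement as written one must bypass the generic squared-distance bound and use Assumption \ref{ass:weak_sharp_min} directly: comparing $(x_\gamma, y_\gamma)$ against $(x_\gamma, \bar{y})$ with $\bar{y}$ the projection of $y_\gamma$ onto $\mathcal{S}(x_\gamma)$ (so $p(x_\gamma,\bar{y})=0$), global optimality and Lipschitz continuity give $\gamma\, p(x_\gamma,y_\gamma) \leq L\, d_{\mathcal{S}(x_\gamma)}(y_\gamma) \leq \frac{L}{\kappa}\, p(x_\gamma,y_\gamma)$, forcing $p(x_\gamma,y_\gamma)=0$ (an exact penalty) whenever $\gamma > L/\kappa$. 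With $\rho = D/\kappa$, the stated threshold $L\sqrt{D/(2\kappa\delta)}$ exceeds $L/\kappa$ precisely when $\delta < \kappa D/2$, so the proposition holds with $\epsilon_\gamma = 0$ in the small-$\delta$ regime of interest---but via the weak-sharp-minimum property itself, not via Theorem \ref{thm:gen_global}. In short: same approach as the paper, executed correctly, and your unresolved reconciliation step is where the paper's own stated constant is loose; you should either strengthen the hypothesis to $\gamma \geq L\sqrt{\rho\,\delta^{-1}}$ or switch to the exact-penalty argument at that point.
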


\begin{proposition}[Relation on Local Solutions for the Insurance Game]
\label{prop:local_sol}
Assume Assumptions \ref{ass:lipschitz_f} and \ref{ass:weak_sharp_min} hold. For some $\delta > 0$, let $p(x, y) = h(x, y) - v(x)$ and $\gamma \geq L \sqrt{(3\rho/2) \delta^{-1}}$ with $\rho = D/\kappa$. If $(x_\gamma, y_\gamma)$ is a local solution of $\mathcal{BP}_{\gamma p}$, it is also a local solution of $\mathcal{BP}_{\epsilon_\gamma}$ with $\epsilon_\gamma \leq \delta$.
\end{proposition}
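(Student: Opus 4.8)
The plan is to obtain Proposition \ref{prop:local_sol} as a direct specialization of Theorem \ref{thm:gen_local} to the insurance game, mirroring the way Proposition \ref{prop:global_sol} specializes Theorem \ref{thm:gen_global}. Concretely, I would verify the three hypotheses of Theorem \ref{thm:gen_local} for the value-gap penalty $p(x,y) = h(x,y) - v(x)$. First, continuity of $p(x_\gamma,\cdot)$ holds because $h(x_\gamma,\cdot)$ is a linear function of order statistics by \eqref{eq:premium}, hence polyhedral and continuous, and $v(x_\gamma)$ is a constant in $y$. Second, Lemma \ref{lem:sq_dist_bound} already establishes that $p$ is a $\rho$-squared-distance-bound with $\rho = D/\kappa$, using the weak sharp minimum of Assumption \ref{ass:weak_sharp_min}. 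Third, the $L$-Lipschitz continuity of $f(x_\gamma,\cdot)$ on the relevant neighborhood is Assumption \ref{ass:lipschitz_f}. Once these are in place, Theorem \ref{thm:gen_local} returns a local solution of $\mathcal{BP}_{\epsilon_\gamma}$ with $\epsilon_\gamma \leq \bar{\epsilon}_\gamma$, and the remaining task is to substitute $\rho = D/\kappa$ and translate the stated threshold on $\gamma$ into $\bar{\epsilon}_\gamma \leq \delta$.

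The quantitative core is the penalty bound at the computed solution, and here I would reproduce the argument underlying Theorem \ref{thm:gen_local}. I would take a comparison point $\bar{y}$ near $y_\gamma$ with $\bar{y} \in \mathcal{U}(x_\gamma)$ and a controlled value gap, the natural choice being (an approximation of) the Euclidean projection of $y_\gamma$ onto the lower-level solution set $\mathcal{S}(x_\gamma)$. Local optimality of $(x_\gamma, y_\gamma)$ for $F_\gamma = f + \gamma p$ gives $\gamma\, p(x_\gamma, y_\gamma) \leq f(x_\gamma, \bar{y}) - f(x_\gamma, y_\gamma) + \gamma\, p(x_\gamma, \bar{y})$, and Lipschitzness bounds the difference of $f$ by $L\, d_{\mathcal{S}(x_\gamma)}(y_\gamma)$. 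Combining this with the squared-distance bound $d_{\mathcal{S}(x_\gamma)}^2(y_\gamma) \leq \rho\, p(x_\gamma, y_\gamma)$ yields a self-referential inequality in $\sqrt{p(x_\gamma, y_\gamma)}$, whose solution is of order $L^2 \rho / \gamma^2$. Feasibility of $(x_\gamma, y_\gamma)$ for $\mathcal{BP}_{\epsilon_\gamma}$ with $\epsilon_\gamma = p(x_\gamma, y_\gamma)$ is then immediate, and a second comparison closes the argument: any competitor $(x,y)$ admissible for $\mathcal{BP}_{\epsilon_\gamma}$ satisfies $p(x,y) \leq \epsilon_\gamma = p(x_\gamma, y_\gamma)$, so the penalty term only helps, giving $f(x_\gamma, y_\gamma) \leq f(x,y)$ locally.

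Assembling the pieces, the bound takes the form $\bar{\epsilon}_\gamma = L^2 \rho / \gamma^2 + 2\epsilon$ from case (i) of Theorem \ref{thm:gen_local}, where $\epsilon$ is the residual value gap of the comparison point $\bar{y}$; choosing this tolerance as $\epsilon = L^2 \rho / (4\gamma^2)$ produces $\bar{\epsilon}_\gamma = 3 L^2 \rho / (2\gamma^2)$, and imposing $\bar{\epsilon}_\gamma \leq \delta$ is exactly the stated condition $\gamma \geq L\sqrt{(3\rho/2)\,\delta^{-1}}$ after substituting $\rho = D/\kappa$. I expect the main obstacle to be the geometric bookkeeping inside the ball $\mathcal{N}((x_\gamma, y_\gamma), r)$: one must ensure $\bar{y}$ lies simultaneously in $\mathcal{U}(x_\gamma)$ and in the neighborhood, which is precisely where the weak sharp minimum is indispensable, since it certifies that points with small value gap are genuinely close to $\mathcal{S}(x_\gamma)$ and hence that the projection does not escape the ball. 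A secondary subtlety is the choice between cases (i) and (ii) of Theorem \ref{thm:gen_local}: when the lower level is treated in payoff coordinates, $h(x_\gamma,\cdot)$ is polyhedral convex on a convex $\mathcal{U}(x_\gamma)$, so case (ii) applies and sharpens the constant to $L^2 \rho / \gamma^2$; the stated factor $3/2$ corresponds to the more conservative case-(i) route, which remains valid because it only strengthens the requirement on $\gamma$.
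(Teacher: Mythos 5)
Your proposal is correct and matches the paper's own (very terse) proof, which simply invokes Theorem \ref{thm:gen_local} together with Lemma \ref{lem:sq_dist_bound} to set $\rho = D/\kappa$; your verification of the hypotheses (continuity of $p(x_\gamma,\cdot)$, the squared-distance bound from the weak sharp minimum, and Assumption \ref{ass:lipschitz_f}) is exactly what that one-line application requires. Your additional unpacking of Theorem \ref{thm:gen_local}'s internal argument and of the factor $3/2$ is sound: since case (ii) applies in the paper's polyhedral-convex setting (avoiding the projection-containment circularity you rightly flag in case (i)), one gets $\bar{\epsilon}_\gamma = L^2\rho/\gamma^2 \leq 2\delta/3 \leq \delta$ under the stated threshold, so the proposition's condition on $\gamma$ is conservative but the conclusion holds.
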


Propositions \ref{prop:global_sol} and \ref{prop:local_sol} show the relationship between the penalized problem $\mathcal{BP}_{\gamma p}$ and the $\epsilon$-approxi-\\
mation $\mathcal{BP}_{\epsilon_\gamma}$. Specifically, the optimal solution of the penalized problem $\mathcal{BP}_{\gamma p}$ is a tight approximation for the original problem $\mathcal{BP}$. We use a stochastic gradient descent method to optimize $\mathcal{BP}_{\gamma p}$. Due to the nonsmoothness of the objectives (linear order statistics \eqref{eq:premium}), we need to use the subgradient method based on Danskin's Theorem. 

\begin{lemma}[Danskin's Theorem for Subgradients]
\label{lem:danskin}
If $h(x, y)$ is continuous, convex in $y$ for each fixed $x$, and continuously differentiable in $x$ for each fixed $y$, and the set $\mathcal{U}(x)$ is compact, then the value function $v(x) = \min_{y \in \mathcal{U}(x)} h(x, y)$ is directionally differentiable. A subgradient of $v(x)$ can be computed as:
\begin{equation}
\nabla v(x) \in \text{conv} \left\{ \nabla_x h(x, y^*) \mid y^* \in \mathcal{S}(x) \right\}.
\end{equation}
In particular, for any single solution $y^* \in \mathcal{S}(x)$, the vector $\nabla_x h(x, y^*)$ is a valid subgradient of $v(x)$.
\end{lemma}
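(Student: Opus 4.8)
The plan is to prove the classical minimization form of Danskin's theorem by first establishing the one-sided directional derivative formula
\[
v'(x;d) = \min_{y^* \in \mathcal{S}(x)} \langle \nabla_x h(x, y^*), d\rangle,
\]
and then reading the generalized-gradient characterization off this identity. As a standing simplification—valid in our application, since the farmer's admissible payoffs do not depend on the insurer's pricing parameters—I would treat the feasible set as not varying with the upper variable, i.e. $\mathcal{U}(x)\equiv\mathcal{U}$, so that only the explicit $x$-dependence of $h$ contributes and no sensitivity term from a moving constraint appears. A preliminary observation is that, because $\mathcal{U}$ is compact and $h$ is continuous, Berge's maximum theorem guarantees that $v$ is continuous and that the solution map $\mathcal{S}(\cdot)$ is upper hemicontinuous with nonempty compact values; these facts are used repeatedly. (Convexity of $h$ in $y$ is not needed for the formula itself; it only serves to make $\mathcal{S}(x)$ well behaved.)

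For the upper bound I would fix any $y^* \in \mathcal{S}(x)$, so that $v(x)=h(x,y^*)$, and use feasibility $v(x+td)\le h(x+td,y^*)$. Dividing by $t>0$ and letting $t\downarrow 0$, differentiability of $h(\cdot,y^*)$ gives $\limsup_{t\downarrow 0}\tfrac{v(x+td)-v(x)}{t}\le \langle \nabla_x h(x,y^*),d\rangle$; minimizing over $y^*\in\mathcal{S}(x)$ yields the $\le$ half of the formula. This direction is routine.

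The lower bound is the main obstacle. I would take a sequence $t_k\downarrow 0$ attaining the liminf of the difference quotient and pick minimizers $y_k\in\mathcal{S}(x+t_k d)$, which exist by compactness. Passing to a subsequence, $y_k\to\bar y$, and upper hemicontinuity of $\mathcal{S}$ together with continuity of $v$ and $h$ forces $\bar y\in\mathcal{S}(x)$. From $v(x)\le h(x,y_k)$ and $v(x+t_kd)=h(x+t_kd,y_k)$ we get $v(x+t_kd)-v(x)\ge h(x+t_kd,y_k)-h(x,y_k)$, and a mean-value expansion rewrites the right side as $t_k\langle\nabla_x h(x+s_k t_k d,y_k),d\rangle$ for some $s_k\in(0,1)$. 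The delicate point is passing to the limit in this gradient, which requires joint continuity of $\nabla_x h$ in $(x,y)$ on a neighborhood of $\{x\}\times\mathcal{U}$—equivalently, a local uniformity strengthening of the stated $C^1$-in-$x$ hypothesis—that I would either assume explicitly or extract from compactness. Granting it, the limit is $\langle\nabla_x h(x,\bar y),d\rangle\ge \min_{y^*\in\mathcal{S}(x)}\langle\nabla_x h(x,y^*),d\rangle$, completing the $\ge$ half.

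Finally, the map $d\mapsto v'(x;d)$ is an infimum of linear functionals, hence concave and positively homogeneous, which is exactly the directional-derivative signature of a locally concave-type function; it identifies the compact convex set $\mathrm{conv}\{\nabla_x h(x,y^*):y^*\in\mathcal{S}(x)\}$ as the generalized (Clarke/superdifferential) gradient of $v$ at $x$. In particular every active gradient $\nabla_x h(x,y^*)$ lies in this hull and is a valid subgradient surrogate, which is all the nonsmooth subgradient scheme consumes; when $\mathcal{S}(x)$ is a singleton the hull collapses and $v$ is genuinely differentiable, recovering the smooth case.
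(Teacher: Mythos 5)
The paper never actually proves this lemma: it is stated as a known classical fact (Danskin's theorem) and immediately put to work justifying the estimator $\nabla_x h(x^k,\hat y^k)$ in V-PBGD, so your proposal is being compared against a citation rather than an in-paper argument. What you wrote is the standard textbook Danskin proof, and it is the right route: the $\le$ half of the directional-derivative formula by fixing $y^*\in\mathcal{S}(x)$ and using feasibility of $y^*$ at $x+td$; the $\ge$ half by taking minimizers $y_k$ at the perturbed points, extracting a convergent subsequence with limit in $\mathcal{S}(x)$ via upper hemicontinuity, and applying a mean-value expansion; then reading the set $\mathrm{conv}\{\nabla_x h(x,y^*): y^*\in\mathcal{S}(x)\}$ off the concave, positively homogeneous map $d\mapsto v'(x;d)$ as a (lower) support function. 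Your side remark that convexity of $h(x,\cdot)$ is not needed for the formula itself is also correct — in the paper it matters for the surrounding penalty machinery (weak sharp minima), not for Danskin.

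Two caveats, both of which you flagged but one of which you should resolve differently. First, the simplification $\mathcal{U}(x)\equiv\mathcal{U}$ is not cosmetic: it is exactly what licenses the inequality $v(x)\le h(x,y_k)$ in your lower bound, and with a genuinely $x$-dependent feasible set the lemma as literally stated fails without additional sensitivity terms (Hogan-type conditions). Your restriction is faithful to the paper's application — the algorithm projects onto $Z=\mathcal{C}\times\mathbb{R}^{d_y}$, so the lower feasible region does not move with $x$ — but you are proving a corrected version of the statement and should say so. Second, the joint continuity of $\nabla_x h$ on a neighborhood of $\{x\}\times\mathcal{U}$ cannot be ``extracted from compactness'': continuity of $\nabla_x h(\cdot,y)$ for each fixed $y$ together with compactness of $\mathcal{U}$ does not imply joint continuity in $(x,y)$ — separate continuity does not upgrade without an equicontinuity hypothesis. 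Take your own first option and assume joint continuity outright; that is the hypothesis under which Danskin's theorem is classically proved, and it is a mild but genuine strengthening of the lemma's literal ``continuously differentiable in $x$ for each fixed $y$.'' Finally, your closing identification of the hull with the Clarke generalized gradient tacitly uses that $-v$, a pointwise maximum of smooth functions, is Clarke-regular; one sentence invoking Clarke's max-function rule would make that step airtight, and would also justify the lemma's (loose, since $v$ is nonconvex) use of the word ``subgradient.''
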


This justifies our algorithm, V-PBGD (Algorithm \ref{algo:v-pbgd}), which approximates this by first finding an approximate lower-level solution $\hat{y}^k$ and then using $\nabla_x h(x^k, \hat{y}^k)$ as an estimator for $\nabla v(x^k)$.

The core of our approach is to apply a gradient-based method to the penalized objective function $F_\gamma(x, y)$. This requires computing the gradient of the penalty term $p(x, y) = h(x, y) - v(x)$, which in turn requires a value for $\nabla v(x)$, the gradient of the lower-level value function.

A key component of this gradient is $\nabla v(x)$. As discussed, the nonsmooth nature of $h(x,y)$ prevents the use of standard envelope theorems. Instead, we rely on Danskin's Theorem. For our problem, Danskin's Theorem states that a subgradient of the value function $v(x)$ can be computed as $\nabla_x h(x, y^*)$, where $y^*$ is any optimal solution to the lower-level problem $\min_y h(x, y)$.

This theoretical foundation allows us to formulate a practical algorithm. In practice, computing the exact $y^*$ at each step is computationally expensive. Therefore, for outer iteration $k$ and given $x^k$, we perform a fixed number $T_k$ of inner steps of subgradient descent to approximate the solution to the lower-level problem:

\begin{equation}
    \omega^{(k)}_{t+1} = \omega^{(k)}_t - \beta \, \partial_y h(x^k, \omega^{(k)}_t), \quad t = 1, \ldots, T_k,
\end{equation}
with the inner loop initialized with $\omega^{(k)}_1 = y^k$. Note that we use $\partial_y h$ to denote a subgradient, as the lower-level objective is nonsmooth. This process yields an approximate lower-level solution $\hat{y}^k = \omega^{(k)}_{T_k + 1}$. We can then approximate the full penalized gradient at $(x^k, y^k)$ by using this $\hat{y}^k$ to serve as a proxy for the true minimizer $y^*$. The complete update for $(x^k, y^k)$ is as follows:

\begin{equation}
    (x^{k+1}, y^{k+1}) = \text{Proj}_Z \left( (x^k, y^k) - \alpha \left(\nabla f(x^k, y^k) + \gamma (\partial_y h(x^k, y^k) - \nabla_x h(x^k, \hat{y}^k))\right) \right),
\end{equation}
where $Z = \mathcal{C} \times \mathbb{R}^{d_y}$ is the feasible set for the variables, $\alpha$ is the learning rate, and the gradient term $\nabla_x h(x, \hat{y}^k)$ is interpreted as a vector in the joint $(x, y)$ space, with zeros in the $y$ coordinates. This iterative update process is summarized in Algorithm 1, which we refer to as V-PBGD (Value-function-based Penalty Bilevel Gradient Descent).

\begin{algorithm}
\caption{V-PBGD: Function value gap-based fully first-order PBGD }\label{algo:v-pbgd}
\begin{algorithmic}
\State Select $(x^1, y^1) \in Z = \mathcal{C} \times \mathbb{R}^{d_y}$. Select step sizes $\alpha$, $\beta$, constant $\gamma$, iteration numbers $T_k$, and $K$.
\For{$k = 1$ to $K$}
    \State Initialize $\omega^{(k)}_1 = y^k$.
    \For {$t = 1$ to $T_k$}
        \State Update $\omega^{(k)}_{t+1} = \omega^{(k)}_t - \beta \nabla_y h(x^k, \omega^{(k)}_t)$.
    \EndFor
    \State Set $\hat{y}^k = \omega^{(k)}_{T_k + 1}$.
    \State Use $\hat{y}^k$ to approximate $\nabla v(x^k)$ via $\nabla_x h(x^k, \hat{y}^k)$.
    \State Update $(x^{k+1}, y^{k+1}) = \text{Proj}_Z \left( (x^k, y^k) - \alpha (\nabla f(x^k, y^k) + \gamma (\nabla_y h(x^k, y^k) - \nabla_x h(x^k, \hat{y}^k))) \right)$.
\EndFor
\State Output $(x^K, y^K)$
\end{algorithmic}
\end{algorithm}

This algorithm effectively manages the computational challenges of bilevel optimization by utilizing the function value-gap penalty and iteratively resolving the lower-level problem using gradient descent. The convergence analysis of the subgradient method can be found in Theorem 2 of \cite{shamir2013stochastic}

We follow the construction of the function value-gap penalty $p(x, y)=h(x, y)-v(x)$ and evaluate the penalized gradient using $\nabla v(x) = \nabla_x h(x, y^*)$, where $y^* \in \mathcal{S}(x)$ is the lower-level problem solution. We then propose the following combined single objective for our first bilevel problem (feasible set $\mathcal P_{s1}$) in searching for optimal insurance strategies:

\begin{equation}
\begin{aligned}
&\min _{\theta, I^{\theta}(\mathbf{X})} -\left( \Pi(I^\theta(\mathbf{X})) - (1 + \mu)\mathbb{E}\left(I^\theta(\mathbf{X})\right) \right) + \gamma \cdot \left( \text{LP}(I^{\theta}(\mathbf{X})) - \text{LP}(\hat{I}^{\theta}(\mathbf{X})) \right) \\
&\text{s.t.} \\
& \quad \hat{I}^\theta(\mathbf{X}) \text{ is the solution to the farmer's LP problem in } \eqref{eq:LP}.\\
\end{aligned}
\end{equation}
Here:
\begin{itemize}
    \item $\text{LP}(I^\theta(\mathbf{X})) = \rho_F\left(Y - I^\theta(\mathbf{X}) + \Pi({I^\theta(\mathbf{X})}\right)$ with  $\rho_F$ being the risk measure for the farmer, $I^\theta$ is the indemnity function under the risk-loading factor $\theta$, and $\Pi({I^\theta(\mathbf{X})})$ is formulated by the expectation premium principle $\Pi(I(\mathbf{X}))=(1+\theta)\mathbb{E}(I(\mathbf{X}))$; 
    \item $\gamma>0$ is the penalty factor;
    \item $\mu\geq0$ is the fixed administrative cost factor.
\end{itemize}

This formulation ensures that the penalty term promotes the optimality of the lower-level problem while maintaining the structure of the original bilevel objective. 

The pseudo-code for our penalized bilevel algorithm for Problem 1 is shown below.

\begin{algorithm}[h!]
\caption{Penalty-Based Bilevel Gradient Descent (PBGD) for Problem 1 (feasible set $\mathcal P_{s1}$)}\label{alg1}
\begin{algorithmic}
\State \textbf{Input:} Neural networks $\theta_{nn}$, $\hat{I}^\theta_{nn}$, and $I^\theta_{nn}$; initial learning rate $\alpha_0$, exponential decay factor $\lambda=0.96$; penalty constant $\gamma$; administrative cost factor $\mu$; number of LP objective iterations $N_L$; number of combined objective iterations $N_C$

\For{\textbf{each} iteration $n_C=1, \ldots, N_C$}
      \Comment{Inner decay restarts each combined objective iteration}
    \For{\textbf{each} LP objective iteration $n_L=1, \ldots, N_L$}
        \State Assign the weights from $I^\theta_{nn}$ to $\hat{I}^\theta_{nn}$;
        \State Compute inner step size $\alpha^{(L)}_{n_L} \leftarrow \alpha_0 \cdot \lambda^{n_L} = \alpha_0 \cdot 0.96^{n_L}$;
        \State Update $\hat{I}^\theta_{nn}$ by gradient descent on the LP objective (equation (1)) with step size $\alpha^{(L)}_{n_L}$;
    \EndFor
    \State Compute $\text{LP}(\hat{I}^\theta(\mathbf{X}))$ using $\hat{I}^\theta_{nn}$;
    \State Compute penalty $\gamma \cdot \left( \text{LP}(I^\theta(\mathbf{X})) - \text{LP}(\hat{I}^\theta(\mathbf{X})) \right)$;
    \State Compute combined objective $-\left( \Pi(I^\theta(\mathbf{X})) - (1 + \mu)\mathbb{E}(I^\theta(\mathbf{X})) \right) + \gamma \cdot \left( \text{LP}(I^\theta(\mathbf{X})) - \text{LP}(\hat{I}^\theta(\mathbf{X})) \right)$;
    \State Compute outer step size $\alpha^{(C)}_{n_C} \leftarrow \alpha_0 \cdot \lambda^{n_C} = \alpha_0 \cdot 0.96^{n_C}$;
    \State Update $\theta_{nn}$ and $I^\theta_{nn}$ by gradient descent to minimize the combined objective (equation (14)) with step size $\alpha^{(C)}_{n_C}$;
\EndFor
\State \textbf{Output:} Optimal $\theta$ and $I(\mathbf{X})$.
\end{algorithmic}
\end{algorithm}
The PBGD algorithms for Problems 2 and 3 can be adapted from the above algorithm with some adjustments. For Problem 2, the premium function changes to a distortion premium principle with a power distortion function $g_i(s) = s^{\frac{1}{\rho}}$. This means that the insurer optimizes for two parameters $\theta$ and $\rho$ in this case and the combined objective becomes $-\left( \Pi(I^{\theta, \rho}(\mathbf{X})) - (1 + \mu)\mathbb{E}(I^{\theta, \rho}(\mathbf{X})) \right) + \gamma \cdot \left( \text{LP}(I^{\theta, \rho}(\mathbf{X})) - \text{LP}(\hat{I}^{\theta, \rho}(\mathbf{X})) \right)$. The algorithm then updates for $\theta_{nn}$, $\rho_{nn}$, and $I^\theta_{nn}$. For Problem 3, we use a neural network to model both the insurance payoff function $I^{g}(\mathbf{X})$, and the difference in pricing function $g_i(s)$. The algorithm in this case optimizes for $g_{nn}$ and $I^g_{nn}$.

\ifx
\begin{algorithm}[h!]
\caption{Estimation of the value function $V$}\label{alg2}
\begin{algorithmic}
\State \textbf{Input:} ANNs {\color{red}what is ANN?} for value function $V^\phi$ and policy $\pi^\theta $, number of episodes $N$, transitions $M$, epochs $K$, mini-batch size $B$
\For{each epoch $\kappa=1, \ldots, K$}
    \State Set the gradients of $V^\phi$ to zero;
    \State Sample $B$ states $s_t^{(b)}, b=1, \ldots, B, t \in \mathcal{T}$;
    \State Obtain from $\pi^\theta$ the associated transitions $\left(a_t^{(b, m)}, s_{t+1}^{(b, m)}, c_t^{(b, m)}\right), m=1, \ldots, M$, $t=1,\ldots,T-1$;
    \For{each state $b=1, \ldots, B, t \in \mathcal{T}$}
        \State Compute the predicted values $\hat{v}_t^b=V_t^\phi\left(s_t^{(b)} ; \theta\right)$
        \If{$t=T$}
            \State Set the target value as
            \State $$ 
            v_{T}^b = \max _{\xi^A \in \mathcal{U}\left(\P^\theta\left(\cdot, \cdot\mid s_{T}=s_{T}^{(b)}\right)\right)}\left\{\E_{T, s_{T}^{(b)}}^{\xi^A}\left[W_{T}^{(b, m)}\right]-\rho_{T}^{A,\ast}(\xi^A)\right\}
            $$
        \Else
            \State Set the target value as
            \State $$ \begin{aligned}
                v_t^b = & \max _{\xi^A \in \mathcal{U}\left(\P^\theta\left(\cdot, \cdot \mid s_t=s_t^{(b)}\right)\right)}\left\{\E_{t, s_t^{(b)}}^{\xi^A}\left[c_t^{A,(b, m)}+e^{-r}V_{t+1}^\phi\left(s_{t+1}^{(b, m)}; \theta\right)\right]-\rho_t^{A,\ast}(\xi^A)\right\} p_{x+t} \\
                & +\max _{\xi^B \in \mathcal{U}\left(\P^\theta\left(\cdot, \cdot \mid s_t=s_t^{(b)}\right)\right)}\left\{\E_{t, s_t^{(b)}}^{\xi^B}\left[c_t^{B,(b, m)}\right]-\rho_t^{B,\ast}(\xi^B)\right\} q_{x+t};
            \end{aligned}   
            $$
        \EndIf
    \EndFor
    \State Compute the expected square loss between $v_t^b$ and $\hat{v}_t^b$;
    \State Update $\phi$ by performing an Adam optimizer step;
\EndFor
\State \textbf{Output:} An estimate of the value function $V_t^\phi(s ; \theta) \approx V_t(s ; \theta)$
\end{algorithmic}
\end{algorithm}
\fi

\section{Data}\label{sec:4}
\subsection{Production Loss and Farmer’s Wealth Data}
 We analyze annual county-level soybean data for Iowa from the National Agricultural Statistics Service (NASS), covering 1940-2023. Iowa accounts for a substantial share of U.S. soybean acreage, and the long historical record enables robust analysis of production losses. To achieve stationarity in yields, we detrend county-level soybean yields using a second-order polynomial in time estimated via robust regression and adjust for heteroscedasticity. Following \cite{deng2007there} and \cite{harri2011relaxing}, we normalize historical yields to the 2023 price level so that interannual deviations are comparable over time. This detrending mitigates long-run influences such as climate trends and technological change (e.g., in genetics, management, and equipment), while preserving interannual variability relevant for loss analysis.

Given that the yield data are available only at an annual frequency, the number of historical observations is relatively limited compared with the high-dimensional weather covariates used in this study. To address this limitation, we assume that crop yield losses are homogeneous across time and space. This assumption expands our data set to 3,780 county-years (45 counties $\times$ 84 years). 

In our analysis, we minimize the risk associated with fluctuations in crop yields by working with production losses rather than raw yields. For each observation $n$, we define the loss as the shortfall from the maximum observed yield ($\text{max\_yield} - \text{yield}_n)$), multiplied by a normalized price ($p$):
\begin{equation*}
Y_n = (\text{max\_yield} - \text{yield}_n) \times p,\quad n=1,\ldots,3780,
\end{equation*}
where $p = 1$ denotes one price unit. Monetary quantities are expressed in price-indexed units per acre, with the price index normalized to one. 

The equilibrium solutions satisfy scale invariance, which is guaranteed by Choquet integral representations. This means that scaling all payoffs by any positive factor scales the objectives but leaves the equilibrium allocations and policies unchanged.



\subsection{Climate and Weather Index Data}
The weather data used in this study are obtained from the PRISM Climate Group,\footnote{PRISM is the USDA’s official climatological data set, available at \url{http://prism.oregonstate.edu/}.} which provides detailed monthly meteorological data for the contiguous United States at a 4-km spatial resolution. The data set includes six climate variables: precipitation, maximum and minimum temperatures, maximum and minimum vapor pressure deficits, and dew points, spanning from 1940 to 2023. These variables form a $7 ~\text{indeces} \times 12 ~\text{months}$-dimensional weather index matrix, which is essential for designing an optimal insurance policy. The weather indices used in this study are summarized in Table \ref{tab:weather_indices}.

The relationship between weather indices and crop production losses is often nonlinear and complex due to biological and environmental factors \citep{schlenker2009nonlinear, rigden2020combined}. Additionally, interactions between weather indices can compound their impact on yield losses. For example, while minimum vapor pressure deficit in October and precipitation in October may not individually influence production losses, their combined effect can create a substantial and nonlinear impact. These complexities highlight the limitations of linear models, which are commonly used in existing index insurance contracts. 
This underscores the need for more sophisticated models in insurance design.

\begin{table}[htbp]
\centering
\caption{Weather Indices}
\begin{tabular}{p{2.5cm}p{12cm}} 
\hline
Variable & Description \\
\hline
pcpnk & Total precipitation (rain and melted snow) for month $k$ (mm) \\
tmaxk & Daily maximum temperature averaged over all days in month $k$ (\textdegree C) \\
tmink & Daily minimum temperature averaged over all days in month $k$ (\textdegree C) \\
dptk & Daily mean dew point temperature averaged over all days in month $k$ (\textdegree C) \\
vpdmaxk & Daily maximum vapor pressure deficit averaged over all days in month $k$ (hPa) \\
vpdmink & Daily minimum vapor pressure deficit averaged over all days in month $k$ (hPa) \\
$k$ & Calendar month, $k=1,2,\ldots,12$ for January-December \\
\hline
\end{tabular}
\begin{flushleft}
\textit{Notes.} This table summarizes the weather variables available from the PRISM data set. The sample period is from 1940 to 2023.
\end{flushleft}
\label{tab:weather_indices}
\end{table}

\section{Numerical Results}\label{sec:5}
The numerical results presented in this section are structured as follows: Section \ref{sec:results_nn} shows the training and equilibrium results for insurance payoffs, $I(X)$, modeled using fully connected NNs. Section \ref{sec:results_sa} presents a sensitivity analysis using fully connected neural networks, varying the parameters $\lambda$ and $\rho$ in the farmer's distortion functions $g_f(s) = \lambda s + (1-\lambda)\min\left\{\frac{s}{1-\alpha}, 1\right\}$ and $g_f(s) = s^{\frac{1}{\rho}}$, respectively. The sensitivity analysis is conducted for both index and indemnity insurance. Section \ref{sec:results_cnn} presents training and equilibrium results when CNNs are used to model the insurance payoff function $I(X)$. A comparison of the resulting insurance payoff functions $I(X)$ is also presented in this section. Section \ref{sec:results_two_para} extends the insurer's pricing strategy to a two-parameter case, in which the insurer chooses both $\theta$, the risk-loading factor, and $\rho$ in its premium distortion function $g_i(s) = s^{\frac{1}{\rho}}$. Section \ref{sec:results_pk} further extends the analysis to a general pricing strategy for the insurer, where neural networks are used to model both the insurance payoff $I(X)$ and the insurer's pricing function $g_i$. 

\subsection{Problem 1 ($\mathcal{P}_{s1}$): One-parameter Premium Model}
\subsubsection{Fully Connected Neural Networks}\label{sec:results_nn}
This section presents the numerical solutions to the bilevel optimization Problem 1 when fully connected NNs are used to model the insurance payoff function. The model validation results used to select the optimal model are shown in Table \ref{tab:validation_nn}. 

\begin{table}[ht]
\centering
\caption{Model Validation of Neural Networks ($\lambda = 0$).}
\footnotesize
\begin{tabular}{ccccccccc}
\hline
\multirow{2}{*}{\textbf{Single hidden layer}} & \multicolumn{2}{c}{\textbf{{[}8{]}}} & \multicolumn{2}{c}{{[}16{]}} & \multicolumn{2}{c}{{[}32{]}} & \multicolumn{2}{c}{{[}64{]}} \\ \cline{2-9} 
 & \textbf{Train} & \textbf{Validation} & Train & Validation & Train & Validation & Train & Validation \\ \hline
UP Loss & \textbf{-3.1783} & \textbf{-3.1485} & -0.0005 & -0.0012 & -0.0597 & -0.0495 & 0.0000 & 0.0000 \\ \hline
Penalized Loss & \textbf{-2.9113} & \textbf{-5.2283} & 0.0056 & -0.0142 & -0.0936 & 0.0279 & 0.0000 & 0.0000 \\ \hline
\multirow{2}{*}{\textbf{Multiple hidden layers}} & \multicolumn{2}{c}{\textbf{{[}8 - 8{]}}} & \multicolumn{2}{c}{{[}8 - 8 - 8{]}} & \multicolumn{2}{c}{{[}8 - 8 - 8 - 8{]}} & \multicolumn{2}{c}{{[}8 - 8 - 8 - 8 - 8{]}} \\ \cline{2-9} 
 & \textbf{Train} & \textbf{Validation} & Train & Validation & Train & Validation & Train & Validation \\ \hline
UP Loss & \textbf{-3.5495} & \textbf{-3.6050} & 0.0000 & 0.0000 & 0.0000 & 0.0000 & 0.0000 & 0.0000 \\ \hline
Penalized Loss & \textbf{-3.6556} & \textbf{-3.6752} & 0.0000 & 0.0000 & 0.0000 & 0.0000 & 0.0000 & 0.0000 \\ \hline
\end{tabular}
\label{tab:validation_nn}
\end{table}

Figure \ref{fig:train_combo} shows the convergence of the UP and LP loss curves for index insurance, where the farmer's risk measure is $g_f(s) = 0.5s + 0.5\min\left\{\frac{s}{1-\alpha}, 1\right\}~\text{with}~\alpha = 0.8$. The upper-problem loss curve stabilizes around -1.6068, which indicates an equilibrium profit of 1.6068 for the insurer. The lower-problem loss stabilizes around 42.4050, which indicates that the farmer's risk measure at equilibrium is 42.4050. The weighting parameter $\lambda$ between CVaR and the expected risk adjusts the risk appetite of the farmer. A higher $\lambda$ leads to less weight assigned to the CVaR component, which means that the farmer is less risk averse than if she had CVaR as its risk measure. The optimal insurance payoff for this case with the convex-combination risk measure is shown in Figure \ref{fig:index_ix_combo}. In equilibrium, the optimal risk-loading factor $\theta^*$ for the insurer is 0.4130, which leads to a profit of 1.6086. The optimal insurance payoff function exhibits a stop-loss pattern. We make a comparison here with indemnity insurance, which is based directly on the loss $Y$. For indemnity insurance under the same risk measure for the farmer (convex-combination risk measure), the optimal risk-loading factor for the insurer is $\theta^* = 0.2812$, which corresponds to a profit of 2.3633. The payoff function in this case of indemnity insurance is stop-loss, as shown in Figure \ref{fig:indemnity_ix_combo}. 

\begin{figure}[htbp]
    \centering
    \begin{subfigure}[b]{0.45\textwidth}
        \centering
        \includegraphics[width=\textwidth]{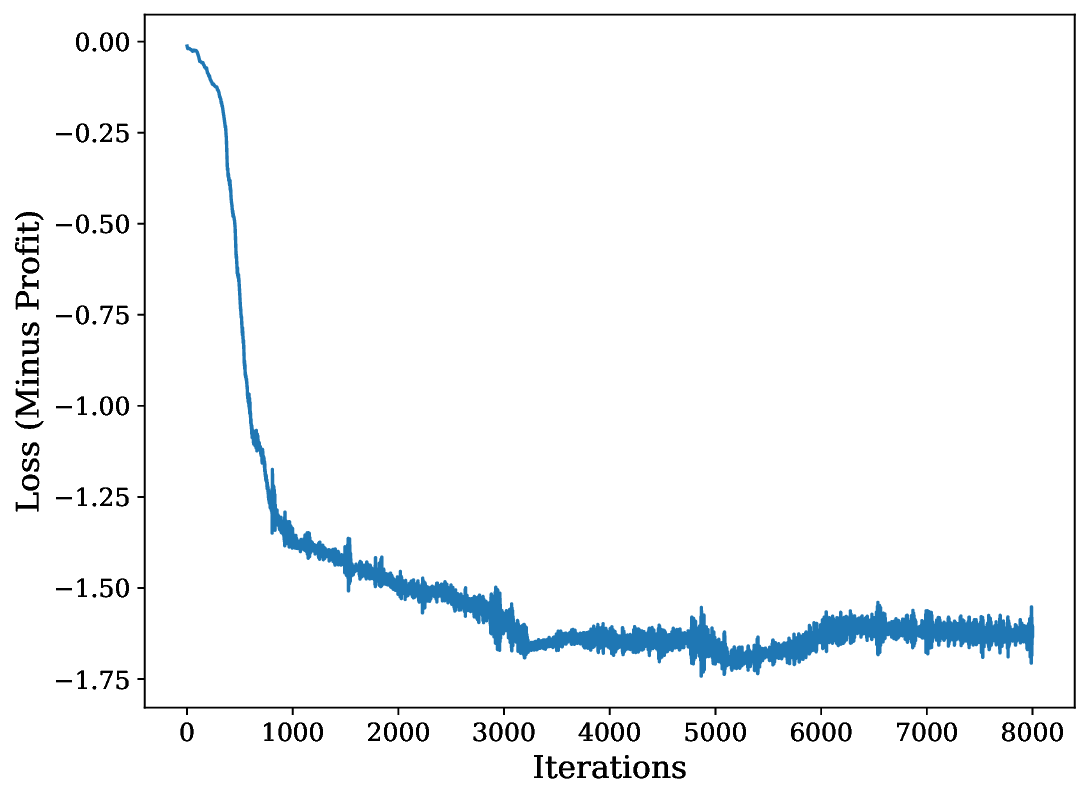} 
        \caption{Upper-problem loss}
        \label{fig:combo_up}
    \end{subfigure}
    \hfill
    \begin{subfigure}[b]{0.45\textwidth}
        \centering
        \includegraphics[width=\textwidth]{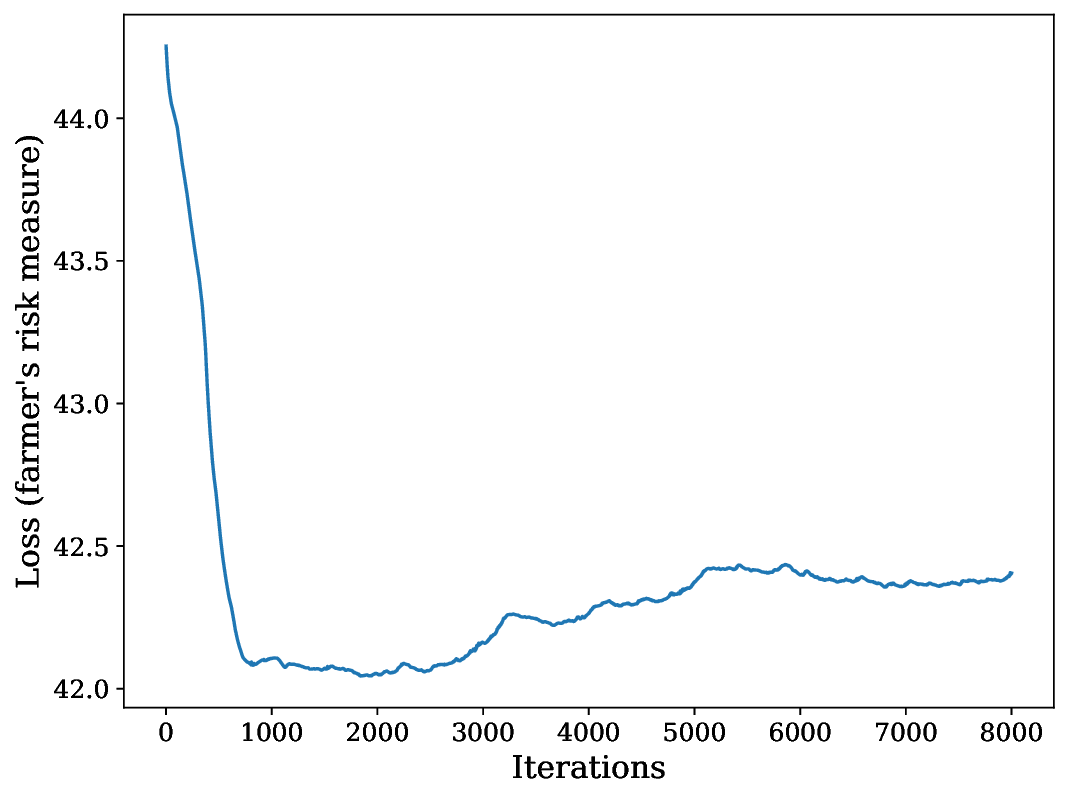} 
        \caption{Lower-problem loss}
        \label{fig:combo_lp}
    \end{subfigure}
    \caption{Training curves of using NN to model insurance payoff under the farmer's risk distortion function $g_f(s) = 0.5s + 0.5\min\left\{\frac{s}{1-\alpha}, 1\right\} ~\text{with}~ \alpha = 0.8$. Figure (a) shows the evolution of the upper-problem loss (negative insurer's profit) across iterations. Figure (b) depicts the evolution of the lower-problem loss (farmer's risk measure) across iterations.}
    \label{fig:train_combo}
\end{figure}

In addition, we change the farmer's risk measure to pure CVaR ($\lambda = 0$ in the convex-combination risk measure) and investigate the equilibrium solution in this sequential game with a more risk-averse farmer than in the previous case. The insurer's optimal risk loading in the equilibrium solution is $\theta^* = 0.5500$, which corresponds to a profit of 3.5495. Note that the risk loading $\theta$ and profit are higher than in the previous case. The optimal payoff function, which shows a stop-loss pattern, is shown in Figure \ref{fig:index_ix_cvar}. \cite{cai2008optimal} show that stop-loss insurance is optimal under the CVaR risk measure, which supports our study. We also compare with indemnity insurance under the same risk measure. The optimal risk loading is $\theta^* = 0.4853$, which corresponds to a profit of 4.8277. The risk loading $\theta^*$ is lower and the profit is higher for indemnity insurance than for index insurance, which could be caused by the existence of basis risk in index insurance. This outcome is also observed in the previous case of convex-combination risk measure. Moreover, the optimal insurance payoff functions for both types of insurance contracts have similar shapes, but the insurance payoff functions for index insurance have more noise in the payoff patterns compared to those for indemnity insurance. The noise in the index insurance payoff functions is also observed in \citep{chen2024managing}.

\begin{figure}[H]
    \centering
    \begin{subfigure}[b]{0.45\textwidth} 
        \centering
        \includegraphics[width=\textwidth]{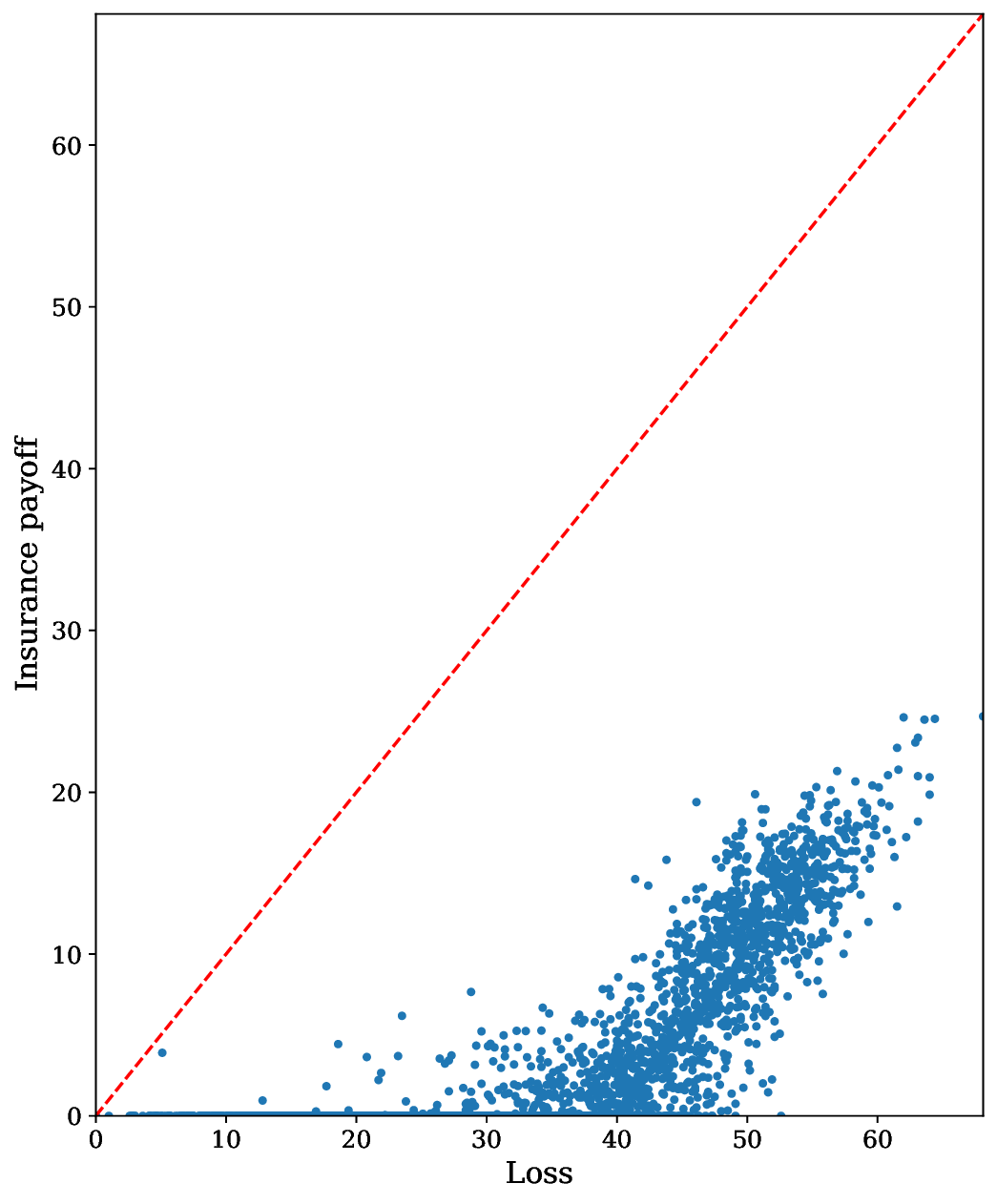}
        \caption{}
        \label{fig:index_ix_combo}
    \end{subfigure}
    \hfill
    \begin{subfigure}[b]{0.45\textwidth} 
        \centering
        \includegraphics[width=\textwidth]{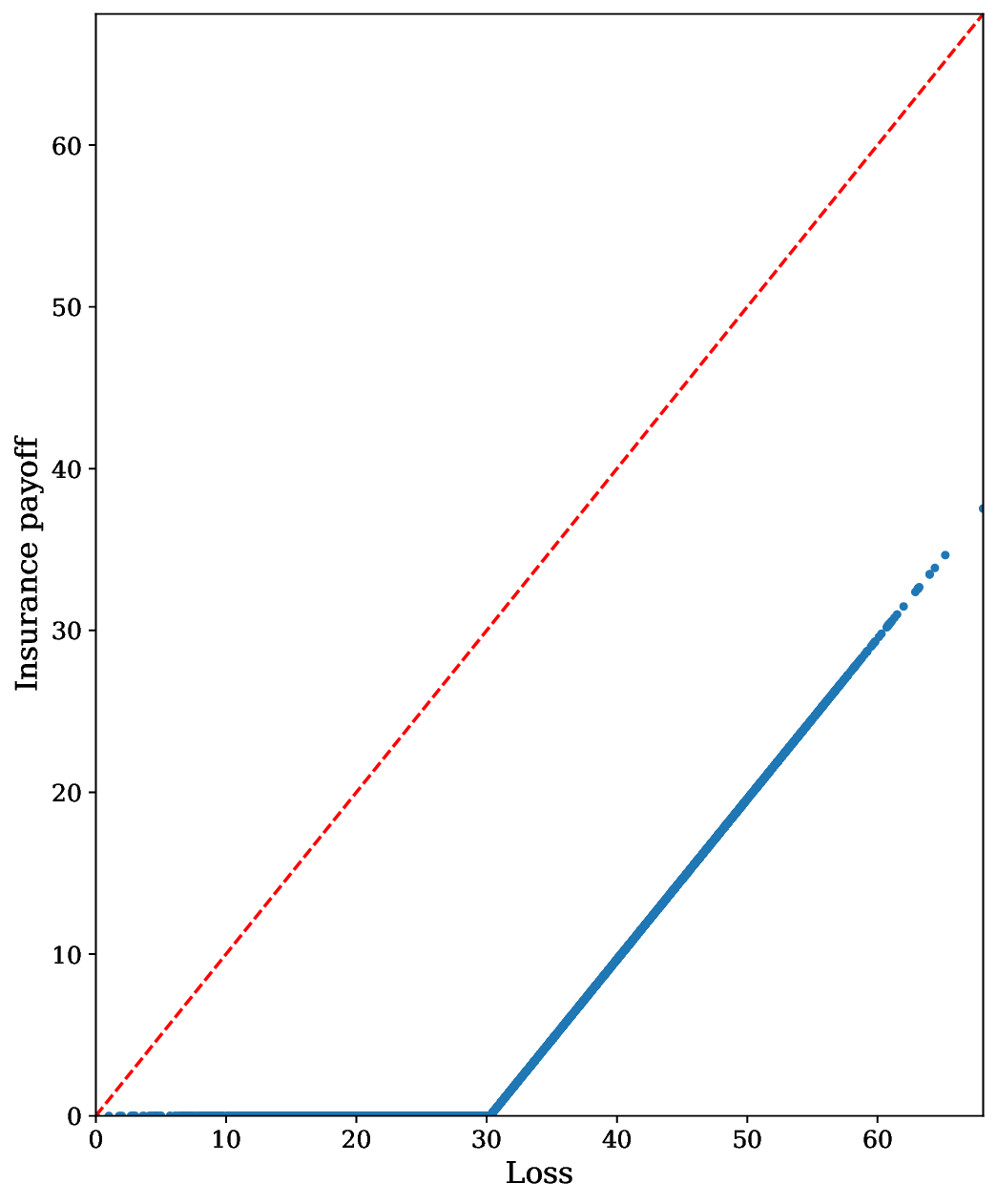} 
        \caption{}
        \label{fig:indemnity_ix_combo}
    \end{subfigure}
    
    \vspace{0.5cm}
    
    \begin{subfigure}[b]{0.45\textwidth} 
        \centering
        \includegraphics[width=\textwidth]{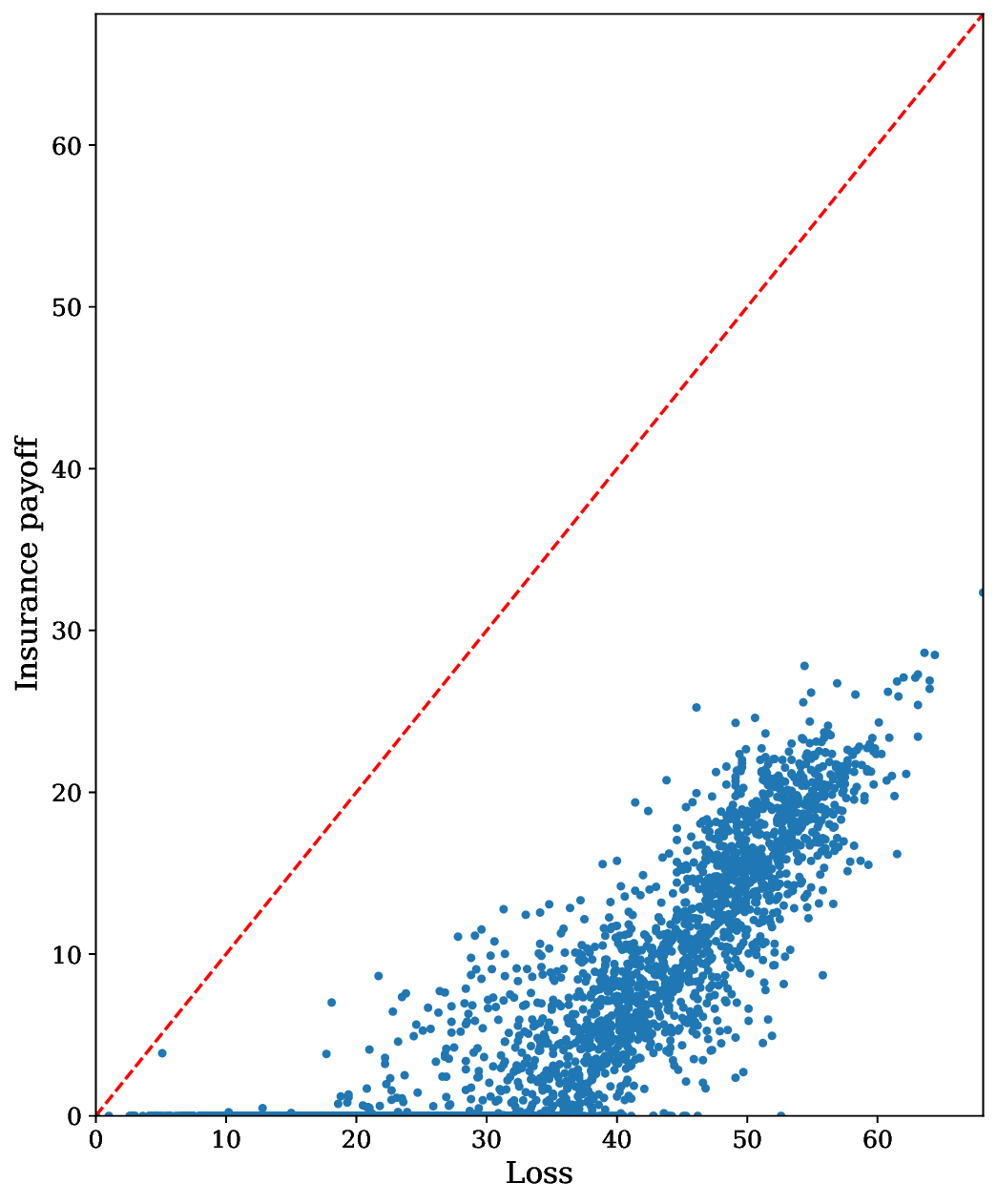}
        \caption{}
        \label{fig:index_ix_cvar}
    \end{subfigure}
    \hfill
    \begin{subfigure}[b]{0.45\textwidth} 
        \centering
        \includegraphics[width=\textwidth]{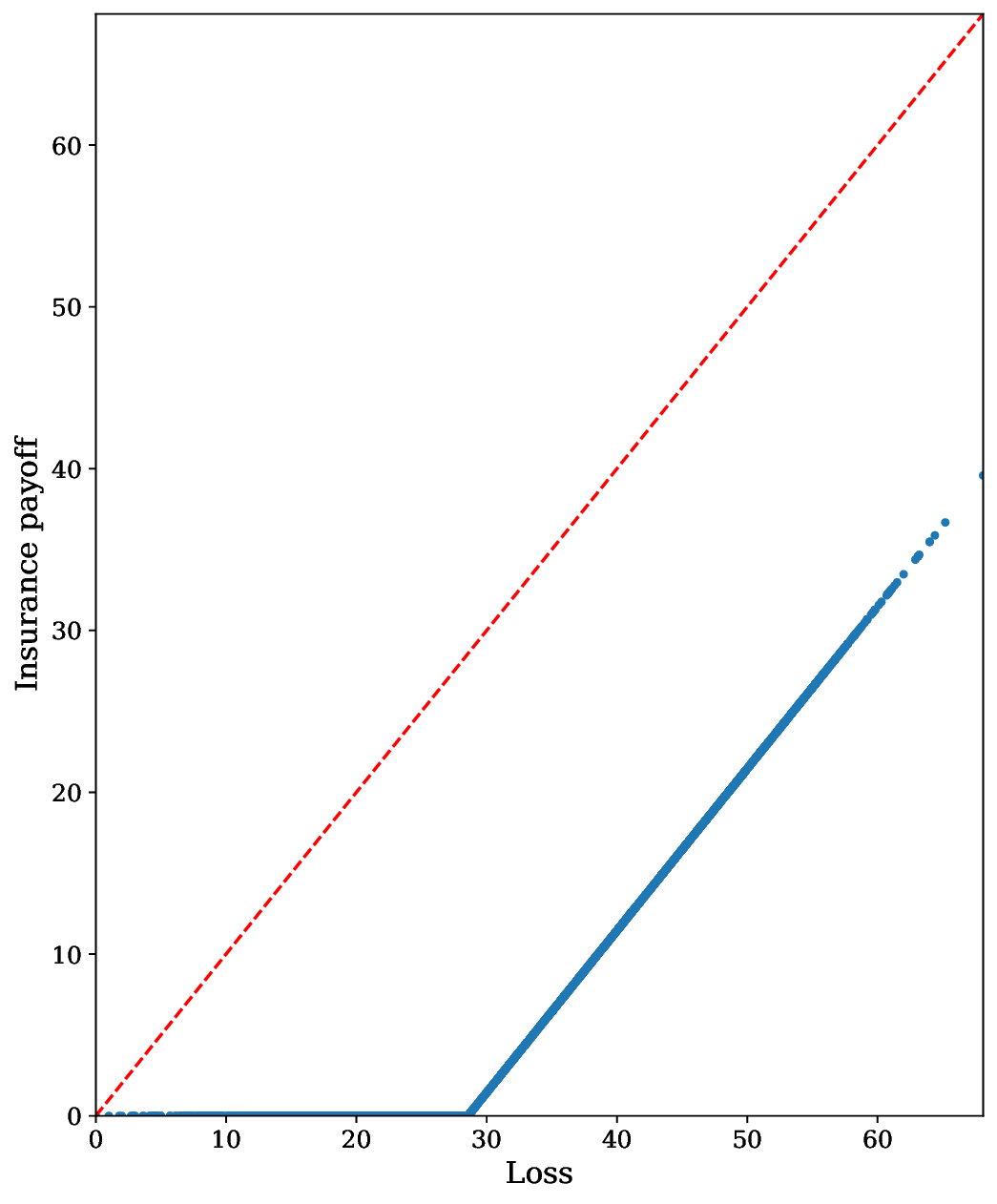}
        \caption{}
        \label{fig:indemnity_ix_cvar}
    \end{subfigure}
    
    \caption{Comparison of equilibrium solutions under risk measure CVaR and convex combination for Index and indemnity insurance. Figures (a) and (b) show the optimal insurance payoff functions for index and indemnity insurance respectively. The farmer's risk distortion function adopted in these two figures is the convex combination of CVaR and expected risk $g_f(s) = 0.5s + 0.5\min\left\{\frac{s}{1-\alpha}, 1\right\}$ with $\alpha = 0.8$. Figures (c) and (d) illustrate a different scenario where the farmer's risk distortion risk measure is CVaR with $\alpha = 0.8$.} 
    \label{fig:index_indemnity_cvar_combo}
\end{figure}

\subsubsection{Sensitivity Analysis} \label{sec:results_sa}
In the first sensitivity analysis, we vary the weighting parameter in the distortion function $g_f(s) = \lambda s + (1 - \lambda) \cdot \min\{\frac{s}{1-\alpha}, 1\}$. The comparison of insurance coverage for $\alpha = 0.95$ and $\lambda = 0.1, 0.5, 0.9$ is shown in Figure \ref{fig:sa_lambda_index}. We observe that a higher $\lambda$ leads to higher deductibles. The same pattern is consistent for the indemnity insurance shown in Figure \ref{fig:sa_lambda_indemnity}.

The second sensitivity analysis involves varying the parameter $\rho$ in the distortion function $g_i(s) = s^{\frac{1}{\rho}}$ for the premium. We compare the optimal risk-loading factor $\theta$ for $\alpha = 0$ and $\rho = 1.0, 1.5, 2$. As shown in Figure \ref{fig:sa_rho_index}, a larger $\rho$ leads to slightly higher deductibles. The same pattern is also observed for the indemnity insurance in Figure \ref{fig:sa_rho_indemnity}. The summary of the optimal risk-loading factor $\theta^*$ found in the two sensitivity studies is shown in Table \ref{tab:sa_summary}. We do not allow a negative loading factor $\theta$ in the final solution, as this leads to an arbitrage opportunity for the farmer. If $\theta$ would be negative, an arbitrage opportunity would exist since we do not restrict the coverage to be strictly smaller than or equal to the actual loss.

\begin{figure}[htbp]
    \centering
    \begin{subfigure}[b]{0.45\textwidth}
        \centering
        \includegraphics[width=\textwidth]{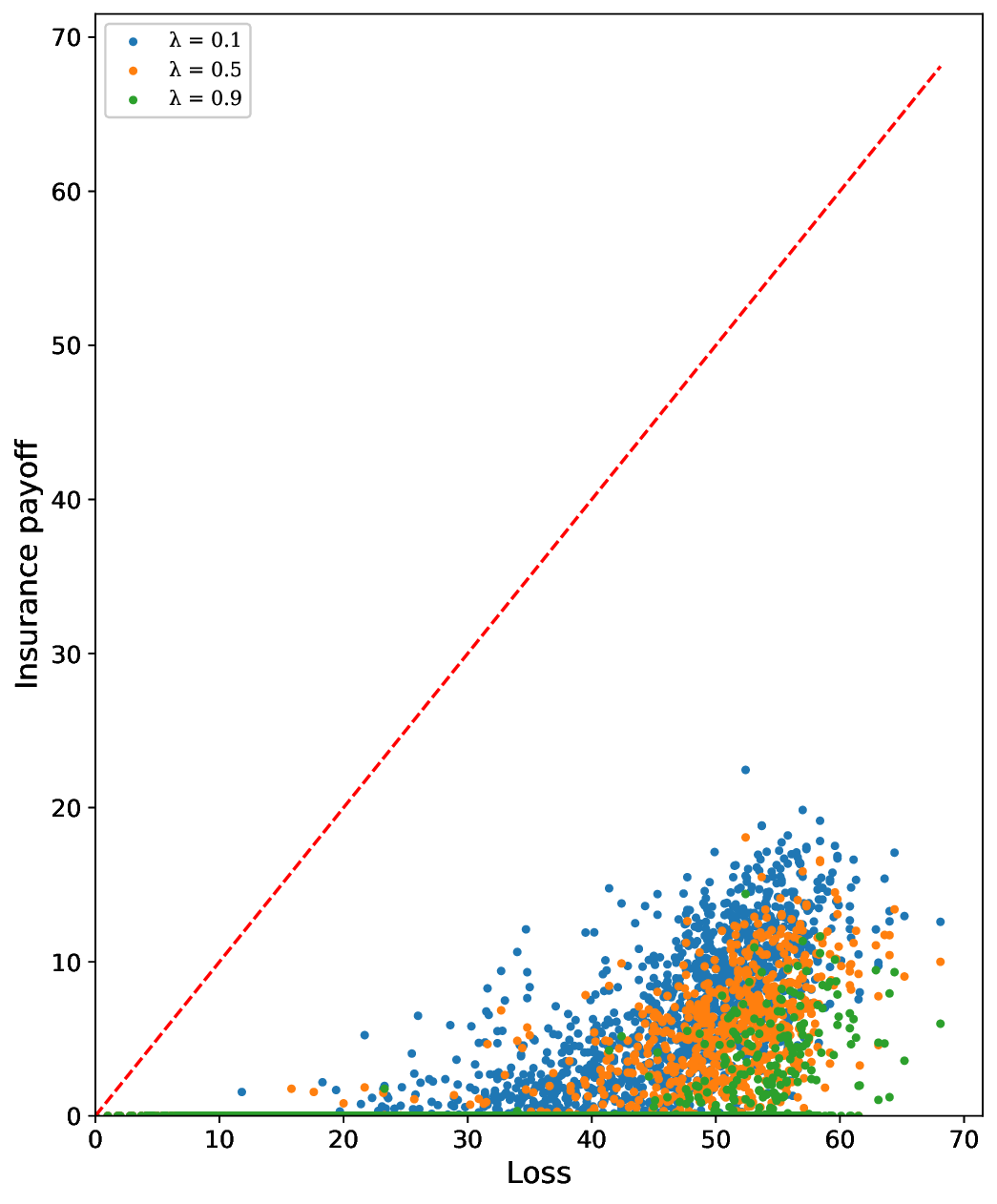}
        \caption{}
        \label{fig:sa_lambda_index}
    \end{subfigure}
    \hfill
    \begin{subfigure}[b]{0.45\textwidth}
        \centering
        \includegraphics[width=\textwidth]{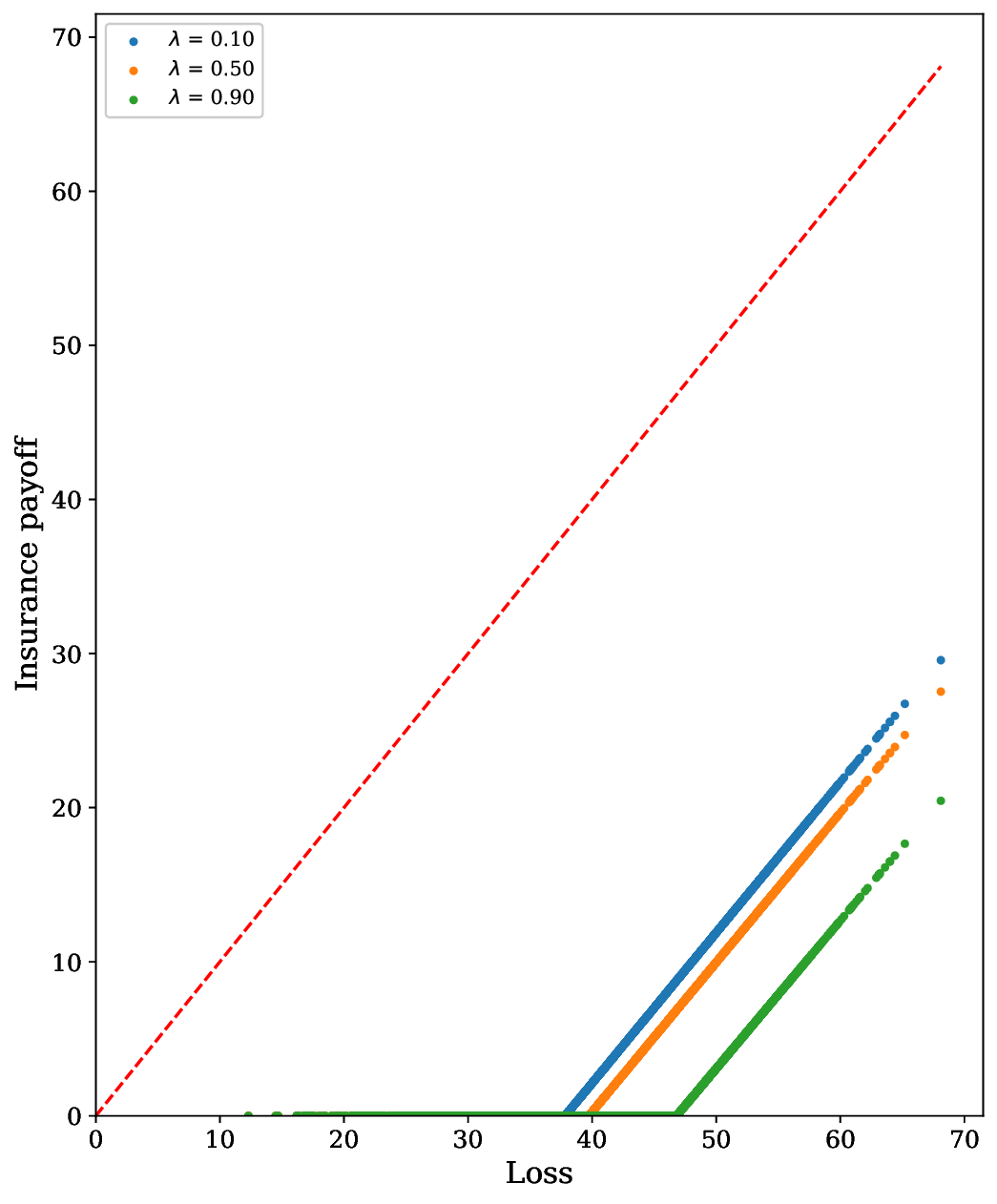} 
        \caption{}
        \label{fig:sa_lambda_indemnity}
    \end{subfigure}
    
    \vspace{10pt}
    
    \begin{subfigure}[b]{0.45\textwidth}
        \centering
        \includegraphics[width=\textwidth]{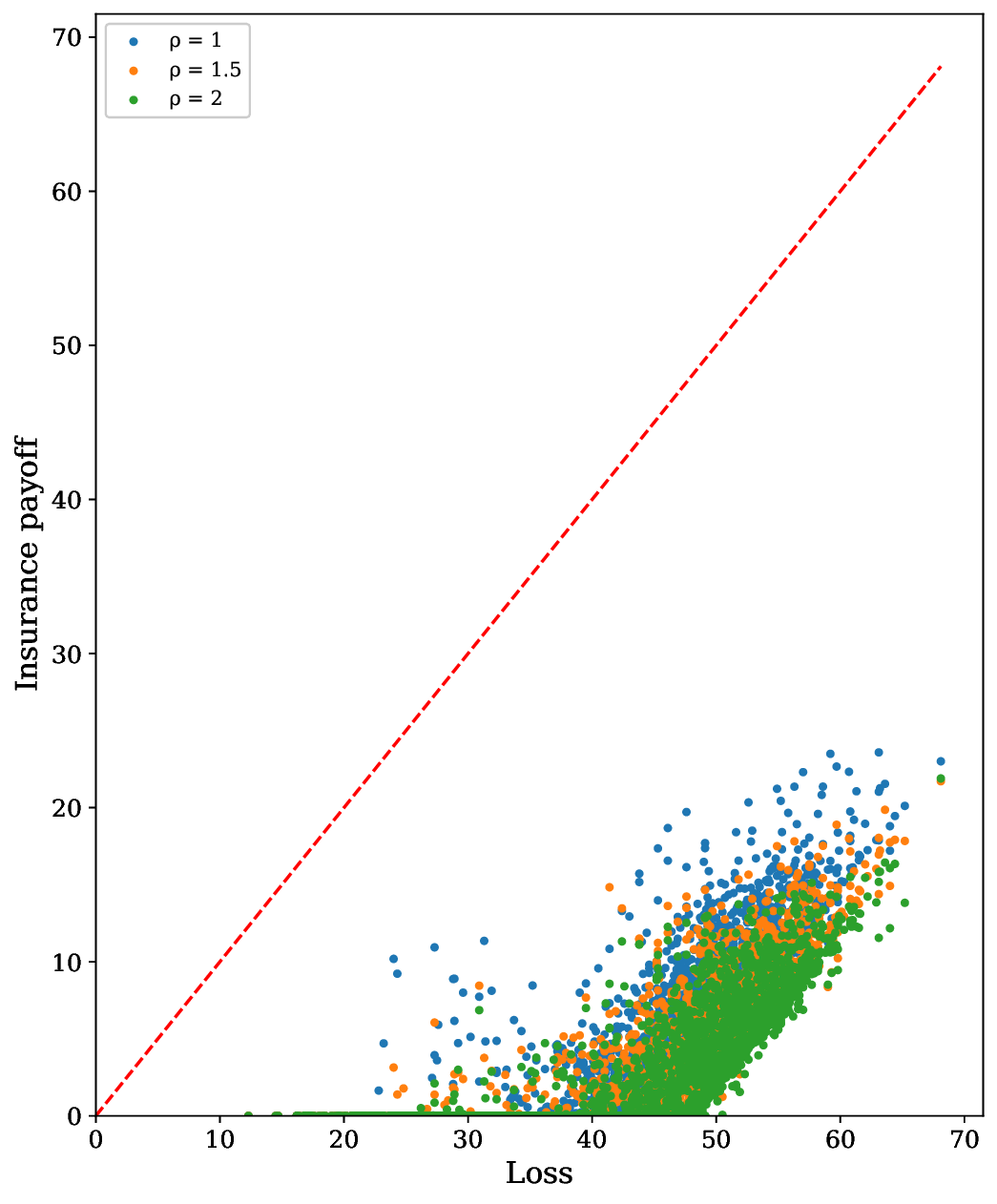}
        \caption{}
        \label{fig:sa_rho_index}
    \end{subfigure}
    \hfill
    \begin{subfigure}[b]{0.45\textwidth}
        \centering
        \includegraphics[width=\textwidth]{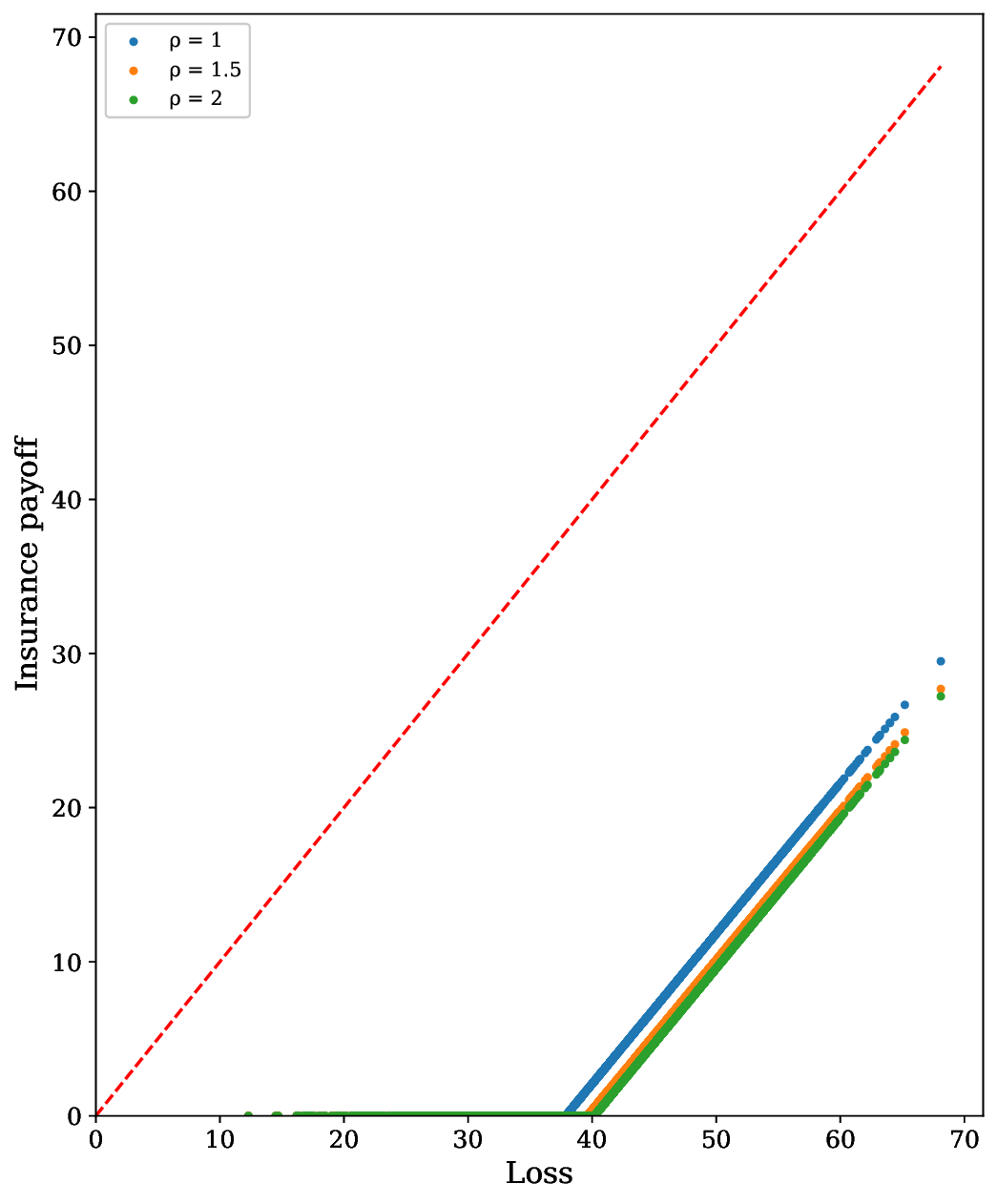}
        \caption{}
        \label{fig:sa_rho_indemnity}
    \end{subfigure}
    
    \caption{Sensitivity analysis based on $\lambda$ and $\rho$. The impact of $\lambda$ on the equilibrium solution for index and indemnity insurance are shown in Figure (a) and (b) respectively. Figure (a) demonstrates the difference in optimal insurance payoff $I^*(\mathbf{X})$ when the parameter $\lambda$ changes between 0.1, 0.5 and 0.9 in the farmer's risk distortion function $g_f(s) = \lambda s + (1 - \lambda) \cdot \min\{\frac{s}{1-\alpha}, 1\}$. Figure (b) shows the same impact of changing $\lambda$ on the optimal insurance coverage but on indemnity insurance, where the insurance contract is directly written on the loss $Y$.
    The impact of $\rho$ on insurance payoff function for index and indemnity insurance under insurer's premium distortion function $g_i(s) = s^{\frac{1}{\rho}}$ are shown in Figure (c) and (d), respectively. Figure (c) shows the difference in insurance payoff when $\rho$ changes between 1, 1.5 and 2. Figure (d) shows similar differences in the insurance payoff but for indemnity insurance.}
    \label{fig:sa_lambda_rho_index_indemnity}
\end{figure}

\begin{table}[htbp]
\centering
\caption{Summary of the two sensitivity analyses.}
\begin{tabular}{@{}cccc|ccc@{}}
\hline
 & \multicolumn{3}{c}{$I(\mathbf{X})$} & \multicolumn{3}{c}{$I(\mathbf{Y})$} \\ \hline
$\lambda$ & 0.1 & 0.5 & 0.9 & 0.1 & 0.5 & 0.9 \\
$\theta^*$ & 0.7542 & 0.5578 & 0.3068 & 0.6451 & 0.3724 & 0.2591 \\ \hline
$\rho$ & 1.0 & 1.5 & 2 & 1.0 & 1.5 & 2 \\
$\theta^*$ & 1.0181 & 1.0333 & 1.0802 & 0.8796 & 0.9382 & 0.9855 \\ \hline
\end{tabular}
\label{tab:sa_summary}
\end{table}

\subsubsection{Convolutional Neural Networks (CNNs)}\label{sec:results_cnn}
This subsection investigates numerical solutions to the same optimization problem (Problem 1) as in the previous subsection, using CNNs to model the insurance payoff function.

Figure \ref{fig:train_combo_cnn} shows the convergence of the loss curves when CNNs are used to model the insurance payoff function, where the farmer’s risk measure is the convex combination of CVaR and expected risk: $g_f(s) = 0.5 s + 0.5\min\left\{\frac{s}{1-\alpha}, 1\right\} ~\text{with}~ \alpha = 0.8$. The upper-problem loss curve stabilizes around -2.2553, indicating an equilibrium profit of 2.2553 for the insurer. The lower-problem loss curve stabilizes around 40.4945, indicating the farmer’s risk measure at equilibrium. The model validation results for CNN models are shown in Table \ref{tab:validation_cnn}. 

\begin{figure}[h!]
    \centering
    \begin{subfigure}[b]{0.45\textwidth}
        \centering
        \includegraphics[width=\textwidth]{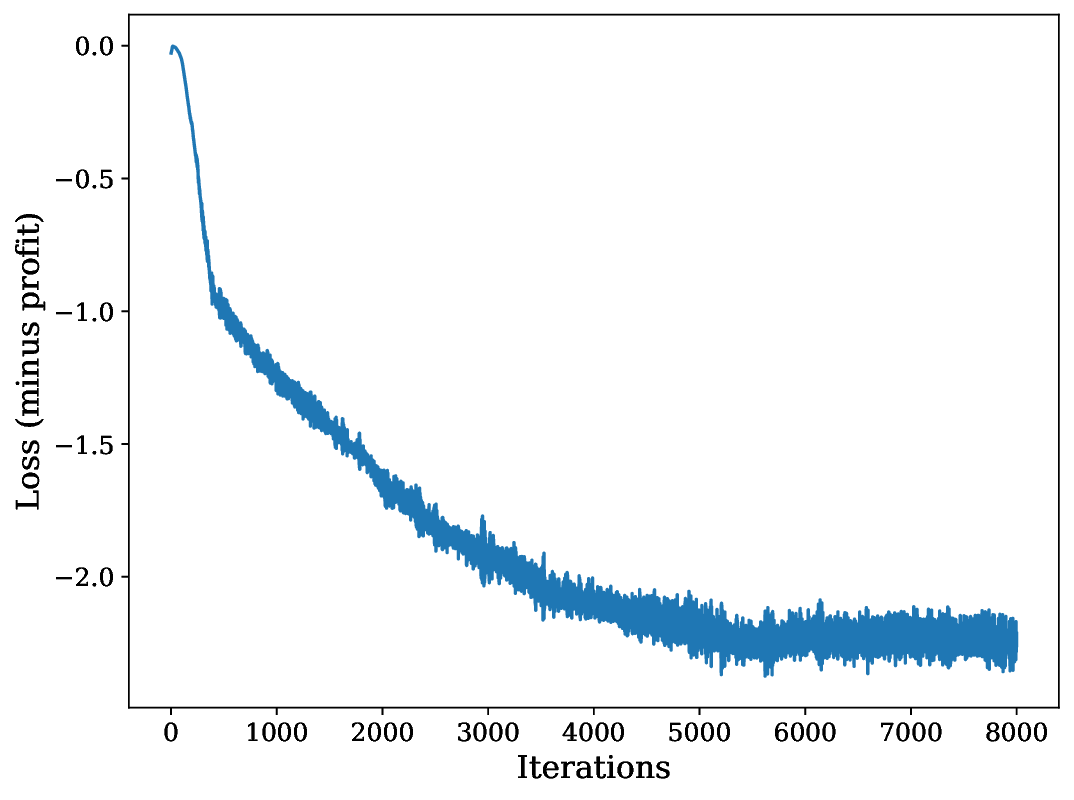}
        \caption{UP Loss}
        \label{fig:combo_up_cnn}
    \end{subfigure}
    \hfill
    \begin{subfigure}[b]{0.45\textwidth}
        \centering
        \includegraphics[width=\textwidth]{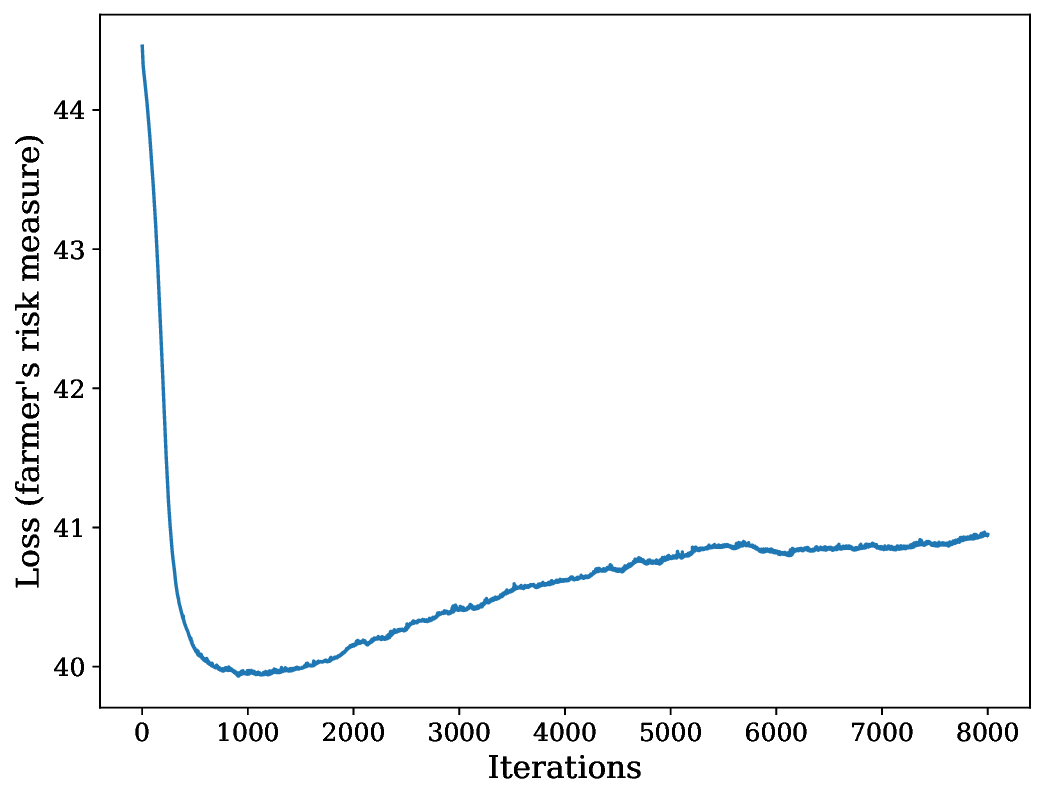} 
        \caption{LP loss}
        \label{fig:combo_lp_cnn}
    \end{subfigure}
    \caption{Training curves for using CNNs to model the insurance payoff under the farmer’s risk distortion function $g_f(s) = 0.5s + 0.5\min\left\{\frac{s}{1-\alpha}, 1\right\} ~\text{with}~ \alpha = 0.8$. Figure (a) shows the evolution of the upper-problem loss (the negative of the insurer’s profit) across iterations. Figure (b) shows the evolution of the lower-problem loss (the farmer’s risk measure) across iterations.}
    \label{fig:train_combo_cnn}
\end{figure}

\begin{table}[h!]
\centering
\caption{Model Validation for Convolutional Neural Networks ($\lambda = 0$).}
\footnotesize
\begin{tabular}{ccccccccc}
\hline
\multirow{2}{*}{\textbf{\begin{tabular}[c]{@{}c@{}}Convolutional \\ Layers\end{tabular}}} & \multicolumn{2}{c}{{[}16 - 16{]}} & \multicolumn{2}{c}{\textbf{{[}32 - 32{]}}} & \multicolumn{2}{c}{{[}16 - 16 - 16{]}} & \multicolumn{2}{c}{{[}32 - 32- 32{]}} \\ \cline{2-9} 
 & Training & Validation & \textbf{Training} & \textbf{Validation} & Training & Validation & Training & Validation \\ \hline
UP Loss & 0.0000 & 0.0000 & \textbf{-4.0703} & \textbf{-4.0801} & 0.0000 & 0.0000 & 0.2473 & 0.2377 \\ \hline
Penalized Loss & 0.0000 & 0.0000 & \textbf{-4.2847} & \textbf{-3.7446} & 0.0000 & 0.0000 & 2.8403 & 0.8297 \\ \hline
\multirow{2}{*}{\textbf{\begin{tabular}[c]{@{}c@{}}Fully connected layers\\ (single hidden layer)\end{tabular}}} & \multicolumn{2}{c}{{[}8{]}} & \multicolumn{2}{c}{{[}16{]}} & \multicolumn{2}{c}{\textbf{{[}32{]}}} & \multicolumn{2}{c}{{[}64{]}} \\ \cline{2-9} 
 & Training & Validation & Training & Validation & \textbf{Training} & \textbf{Validation} & Training & Validation \\ \hline
UP Loss & -4.2222 & -4.3253 & -4.0660 & -4.1693 & \textbf{-4.0703} & \textbf{-4.0801} & -3.6794 & -3.7555 \\ \hline
Penalized Loss & -11.2273 & -5.5120 & -3.6540 & -5.2039 & \textbf{-4.2847} & \textbf{-3.7446} & -5.1027 & -4.2090 \\ \hline
\multirow{2}{*}{\textbf{\begin{tabular}[c]{@{}c@{}}Fully connected layers\\ (multiple hidden layers)\end{tabular}}} & \multicolumn{2}{c}{{[}32 - 32{]}} & \multicolumn{2}{c}{{[}32 - 32 - 32{]}} & \multicolumn{2}{c}{{[}32 - 32 - 32 - 32{]}} & \multicolumn{2}{c}{{[}32 - 32 - 32 - 32 - 32{]}} \\ \cline{2-9} 
 & Training & Validation & Training & Validation & Training & Validation & Training & Validation \\ \hline
UP Loss & 0.0000 & 0.0000 & -4.0317 & -4.0989 & 0.0000 & 0.0000 & 0.0000 & 0.0000 \\
Penalized Loss & 0.0000 & 0.0000 & -6.0228 & -2.6661 & 0.0000 & 0.0000 & 0.0000 & 0.0000 \\ \hline
\end{tabular}
\label{tab:validation_cnn}
\end{table}

Figure \ref{fig:cnn_ix_cvar_combo} compares the insurance payoff functions $I(\mathbf{X})$ modeled using CNNs under two different risk measures (CVaR and convex combination). Figure \ref{fig:index_ix_combo_cnn} shows the case in which the  convex combination of CVaR and expected risk is used as the farmer's risk measure. The optimal risk-loading factor $\theta^*$ and profit for the insurer in this case are 0.3182 and 2.2553, respectively. Figure \ref{fig:index_ix_cvar_cnn} shows the scenario in which the farmer’s risk measure is set to CVaR; the optimal risk-loading factor and the insurer’s profit are 0.3961 and 4.0703, respectively. Note that the insurance payoff function modeled using CNNs contains less noise in the payoff patterns than those modeled using NNs, which demonstrates the ability of CNNs to reduce noise and produce more robust models. In addition, the profits obtained using CNNs to model the insurance payoff function are closer to those of indemnity insurance. This outcome holds under both risk measures considered in this study. This is due to the fact that the insurance payoff function modeled using CNNs contains less noise and thus less basis risk. This highlights the potential of CNNs in reducing the basis risk in index insurance pricing strategies and achieving an optimal solution closer to that of indemnity insurance. The differences in the optimal insurance payoff functions can be seen by comparing Figure \ref{fig:index_indemnity_cvar_combo} for the NNs with Figure \ref{fig:cnn_ix_cvar_combo}. A summary of the equilibrium solutions under the two risk measures and the two modeling strategies for the insurance payoff function is shown in Table \ref{tab:summary_equilibrium}.

\begin{table}[htbp]
\caption{Summary of Equilibrium Solutions for Bilevel Optimization Problem 1.}
\centering
\begin{tabular}{@{}cccccc@{}}
\toprule
\multirow{2}{*}{\textbf{Models}} & \multirow{2}{*}{\textbf{Risk Measure}} & \multicolumn{2}{c}{\textbf{$\theta^*$}} & \multicolumn{2}{c}{\textbf{Profit}} \\ \cmidrule(l){3-6} 
 &  & \multicolumn{1}{l}{Index} & \multicolumn{1}{l}{Indemnity} & \multicolumn{1}{l}{Index} & \multicolumn{1}{l}{Indemnity} \\ \midrule
\multirow{2}{*}{NN} & Combination & 0.4130 & 0.2812 & 1.6086 & 2.3633 \\ \cmidrule(l){2-6} 
 & CVaR & 0.5500 & 0.4853 & 3.5495 & 4.8277 \\ \midrule
\multirow{2}{*}{CNN} & Combination & 0.3182 & 0.2812 & 2.2553 & 2.3633 \\ \cmidrule(l){2-6} 
 & CVaR & 0.3961 & 0.4853 & 4.0703 & 4.8277 \\ \bottomrule
\end{tabular}
\label{tab:summary_equilibrium}
\end{table}

\begin{figure}[H]
    \centering
    \begin{subfigure}[b]{0.45\textwidth}
        \centering
        \includegraphics[width=\textwidth]{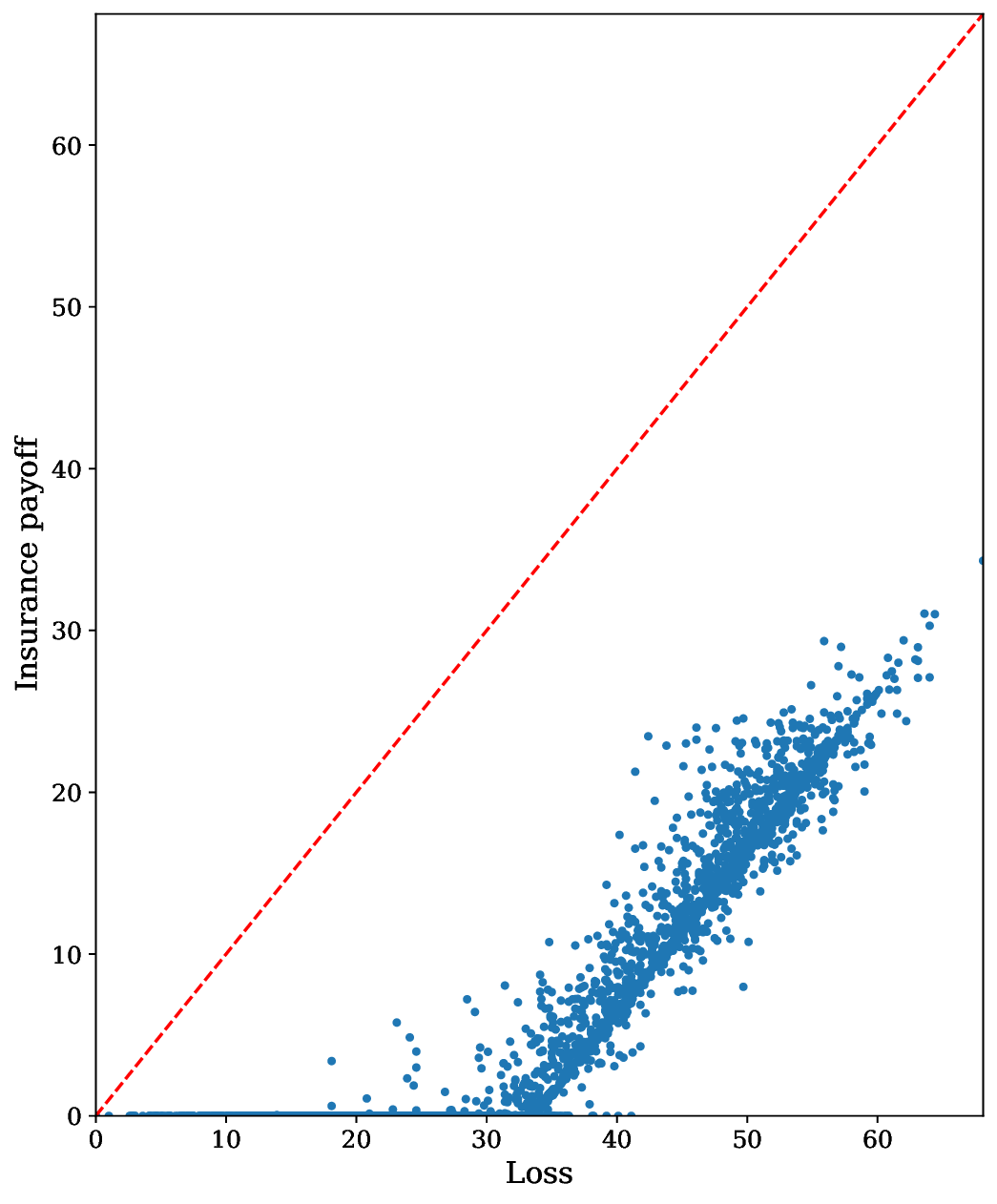}
        \caption{}
        \label{fig:index_ix_combo_cnn}
    \end{subfigure}
    \hfill
    \begin{subfigure}[b]{0.45\textwidth}
        \centering
        \includegraphics[width=\textwidth]{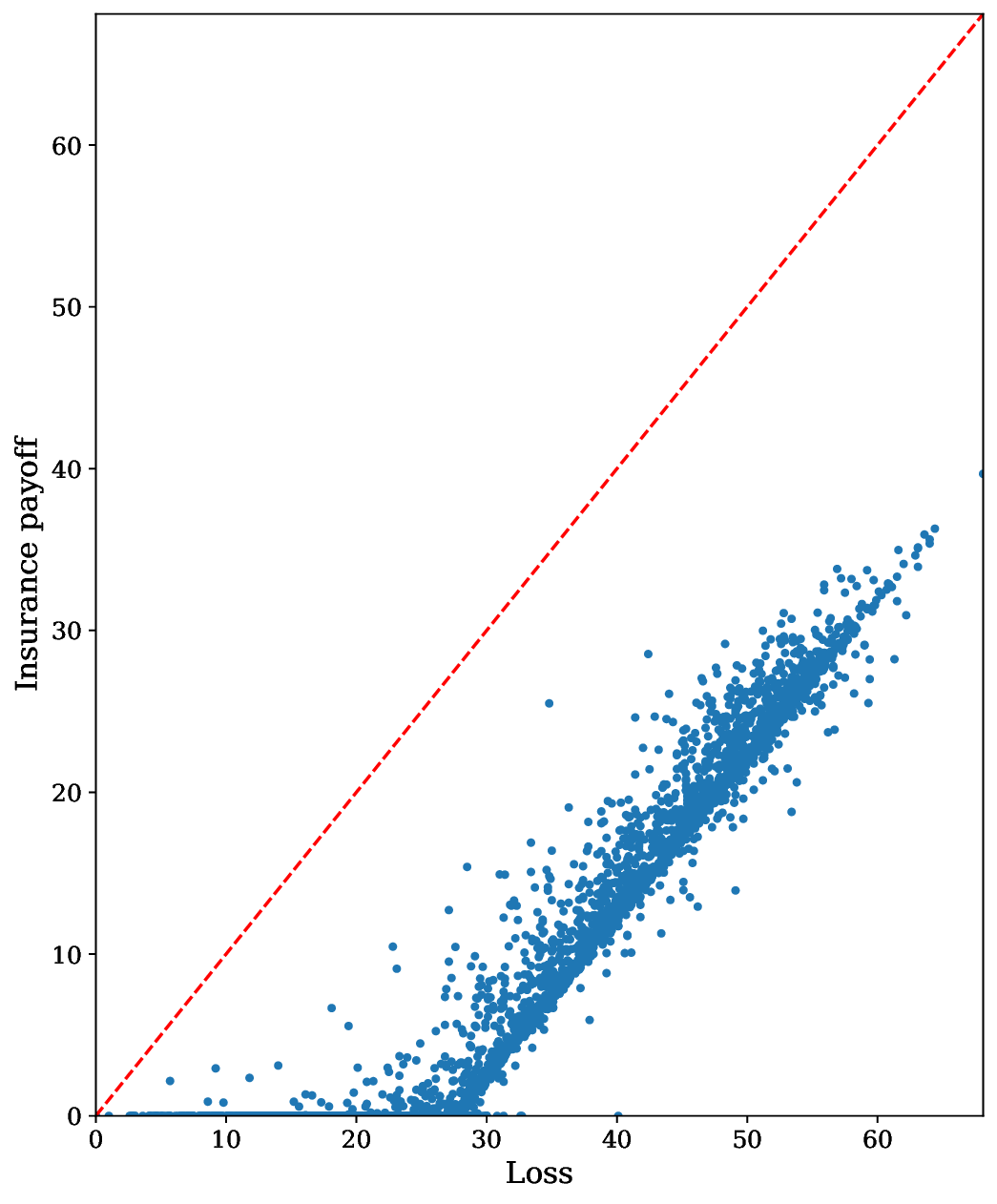}
        \caption{}
        \label{fig:index_ix_cvar_cnn}
    \end{subfigure}
    \caption{Comparison of equilibrium solutions under risk measure CVaR and convex combination with CNN used for modeling insurance payoff. Figure (a) demonstrates the optimal insurance payoff function under farmer's risk distortion $g_f(s) = 0.5s + 0.5\min\left\{\frac{s}{1-\alpha}, 1\right\}$ with $\alpha = 0.8$. The insurer's optimal risk loading is $\theta^* = 0.3182$ with a profit of 2.2553 in this case. Figure (b) shows the optimal insurance payoff function under farmer's risk measure being CVaR. 
    The insurer's optimal risk loading for this case is $\theta^* = 0.3961$ with a profit of 4.0703.}
    \label{fig:cnn_ix_cvar_combo}
\end{figure}

\subsection{Problem 2 ($\mathcal{P}_{s2}$): Two-parameter Premium Model}\label{sec:results_two_para}
This subsection presents the numerical solutions to the bilevel optimization Problem 2. In this setting, the insurer selects two parameters: risk-loading factor $\theta$ and the premium distortion factor $\rho$ in its pricing strategy. The farmer's risk measure is based on the distortion function $g_f(s) = \lambda s + (1 - \lambda) \cdot \min\{\frac{s}{1-\alpha}, 1\}, ~\text{with}~ \alpha = 0.8$. We vary the weighting factor $\lambda$ between 0.1, 0.5, and 0.7 in this problem to see differences in the equilibrium solutions. The model validation results for selecting the best model for this two-parameter optimization problem are shown in Table \ref{tab:validation_two_para}.

\begin{table}[ht]
\centering
\footnotesize
\caption{Model Validation for CNN in the two-parameter optimization case ($\lambda = 0$).}
\begin{tabular}{ccccccccc}
\hline
\multirow{2}{*}{\textbf{\begin{tabular}[c]{@{}c@{}}Convolutional \\ Layers\end{tabular}}} & \multicolumn{2}{c}{{[}16 - 16{]}} & \multicolumn{2}{c}{{[}32 - 32{]}} & \multicolumn{2}{c}{{[}16 - 16 - 16{]}} & \multicolumn{2}{c}{\textbf{{[}32 - 32- 32{]}}} \\ \cline{2-9} 
 & Training & Validation & Training & Validation & Training & Validation & \textbf{Training} & \textbf{Validation} \\ \hline
UP Loss & 0.0000 & 0.0000 & -7.7488 & -7.9577 & 0.0000 & 0.0000 & \textbf{-8.2615} & \textbf{-8.3879} \\ \hline
Penalized Loss & 0.0000 & 0.0000 & -6.3389 & -7.3672 & 0.0000 & 0.0000 & \textbf{-7.1449} & \textbf{-6.4417} \\ \hline
\multirow{2}{*}{\textbf{\begin{tabular}[c]{@{}c@{}}Fully connected layers\\ (single hidden layer)\end{tabular}}} & \multicolumn{2}{c}{{[}8{]}} & \multicolumn{2}{c}{{[}16{]}} & \multicolumn{2}{c}{{[}32{]}} & \multicolumn{2}{c}{\textbf{{[}64{]}}} \\ \cline{2-9} 
 & Training & Validation & Training & Validation & Training & Validation & \textbf{Training} & \textbf{Validation} \\ \hline
UP Loss & -7.6739 & -7.7602 & -7.5128 & -7.5883 & -7.7870 & -8.1214 & \textbf{-8.2000} & \textbf{-8.3904} \\ \hline
Penalized Loss & -4.6893 & -8.5636 & -9.2603 & -10.5043 & -8.6762 & -10.6711 & \textbf{-7.6511} & \textbf{-9.4009} \\ \hline
\multirow{2}{*}{\textbf{\begin{tabular}[c]{@{}c@{}}Fully connected layers\\ (multiple hidden layers)\end{tabular}}} & \multicolumn{2}{c}{{[}64 - 64{]}} & \multicolumn{2}{c}{{[}64 - 64 - 64{]}} & \multicolumn{2}{c}{{[}64 - 64 - 64 - 64{]}} & \multicolumn{2}{c}{{[}64 - 64 - 64 - 64 - 64{]}} \\ \cline{2-9} 
 & Training & Validation & Training & Validation & Training & Validation & Training & Validation \\ \hline
UP Loss & -8.3950 & -8.5892 & -8.2634 & -8.5726 & -8.0183 & -8.0700 & 0.0000 & 0.0000 \\
Penalized Loss & -16.4543 & -12.8822 & -12.2764 & -5.6959 & -4.2467 & -7.8628 & 0.0000 & 0.0000 \\ \hline
\end{tabular}
\label{tab:validation_two_para}
\begin{minipage}{\linewidth}
\raggedright\footnotesize Note: The optimal model is the one with the lowest validation UP loss given that the absolute difference between validation UP loss and validation penalized loss is below 2.
\end{minipage}
\end{table}

Figure \ref{fig:two_para_index_indemnity} shows the optimal insurance payoff functions for different values of $\lambda$ in the farmer's risk distortion function $g_f(s) = \lambda s + (1-\lambda)\min\left\{\frac{s}{1-\alpha}, 1\right\}$. For $\lambda = 0.1$, the optimal values for the two parameters are $\theta^* = 0.2237$ and $\rho = 2.0644$, which corresponds to a profit of 7.4848. 
For $\lambda = 0.5$, the equilibrium solutions are $\theta^* = 0.0758$ and $\rho = 1.7341$, which leads to an expected profit of
4.2511. For $\lambda = 0.7$, the equilibrium solutions are $\theta^* = 0.0190$ and $\rho = 1.5016$, which corresponds to a profit of 2.1893. The insurance payoff pattern exhibits the limited stop-loss insurance with both a deductible and a maximum benefit. A more risk-averse farmer ($\lambda = 0.1$) leads to larger $\theta$ and $\rho$ compared to a less risk-averse farmer ($\lambda = 0.7$) as shown in Figure \ref{fig:lambda01_index_two_para} and Figure \ref{fig:lambda07_index_two_para}. For an even less risk-averse farmer ($\lambda > 0.7)$, there is no insurance.

We compare the solutions for index insurance with those for indemnity insurance. The profits obtained for indemnity insurance are higher for all three values of $\lambda$. A summary of the equilibrium solutions for Problem 2 is shown in Table \ref{tab:summary_two_para}.

\begin{table}[htbp]
\caption{Summary of Equilibrium Solutions for Bilevel Optimization Problem 2.}
\centering
\begin{tabular}{@{}cccllcc@{}}
\toprule
\multirow{2}{*}{\textbf{$\lambda$}} & \multicolumn{2}{c}{\textbf{$\theta^*$}} & \multicolumn{2}{c}{\textbf{$\rho^*$}} & \multicolumn{2}{c}{\textbf{Expected Profit}} \\ \cmidrule(l){2-7} 
 & \multicolumn{1}{l}{Index} & \multicolumn{1}{l}{Indemnity} & Index & Indemnity & \multicolumn{1}{l}{Index} & \multicolumn{1}{l}{Indemnity} \\ \midrule
0.1 & 0.2237 & 0.2761 & 2.0644 & 1.7078 & 7.4848 & 11.3735 \\ \midrule
0.5 & 0.0758 & 0.1022 & 1.7341 & 1.5641 & 4.2511 & 6.7666 \\ \midrule
0.7 & 0.0190 & 0.0258 & 1.5016 & 1.4565 & 2.1893 & 4.3443 \\ \bottomrule
\end{tabular}
\label{tab:summary_two_para}
\end{table}

\begin{figure}[h!]
    \centering
    \begin{subfigure}[b]{0.3\textwidth} 
        \includegraphics[width=\textwidth]{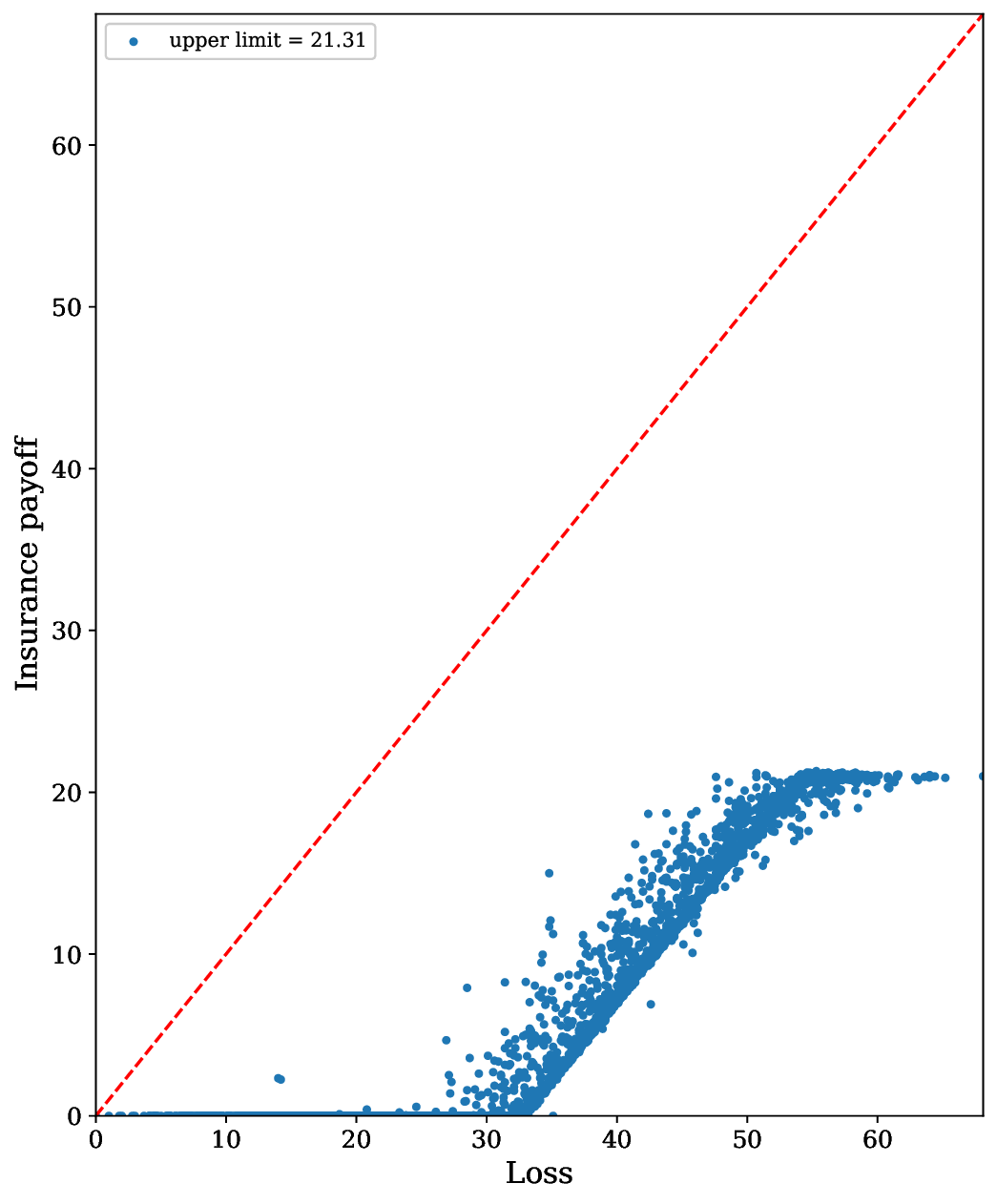}
        \caption{ES:$\theta = 0.2237$ and $\rho = 2.0644$}
        \label{fig:lambda01_index_two_para}
    \end{subfigure}
    \hfill
    \begin{subfigure}[b]{0.3\textwidth} 
        \includegraphics[width=\textwidth]{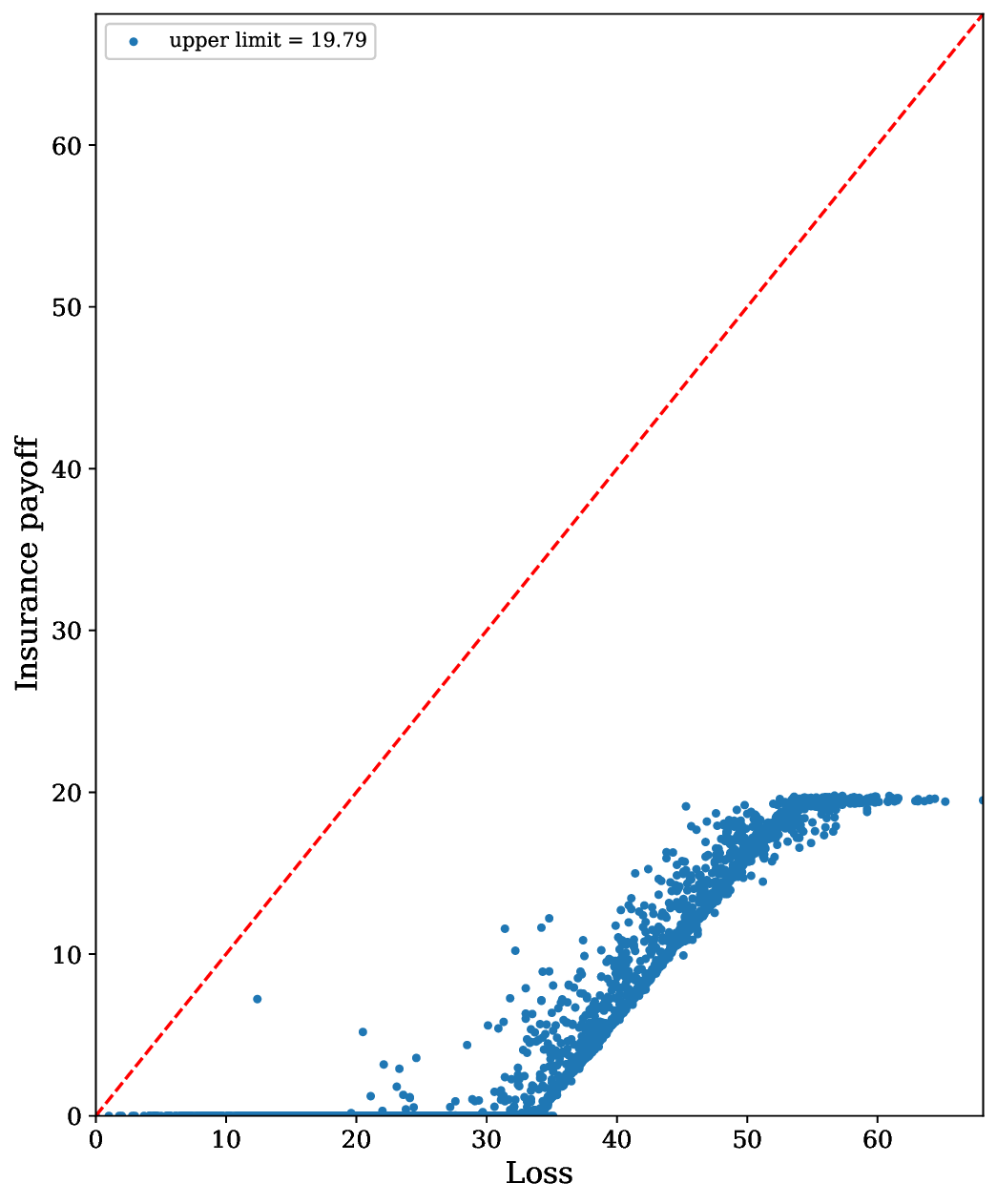}
        \caption{ES:$\theta = 0.0758$ and $\rho = 1.7341$}
        \label{fig:lambda05_index_two_para}
    \end{subfigure}
    \hfill
    \begin{subfigure}[b]{0.3\textwidth} 
        \includegraphics[width=\textwidth]{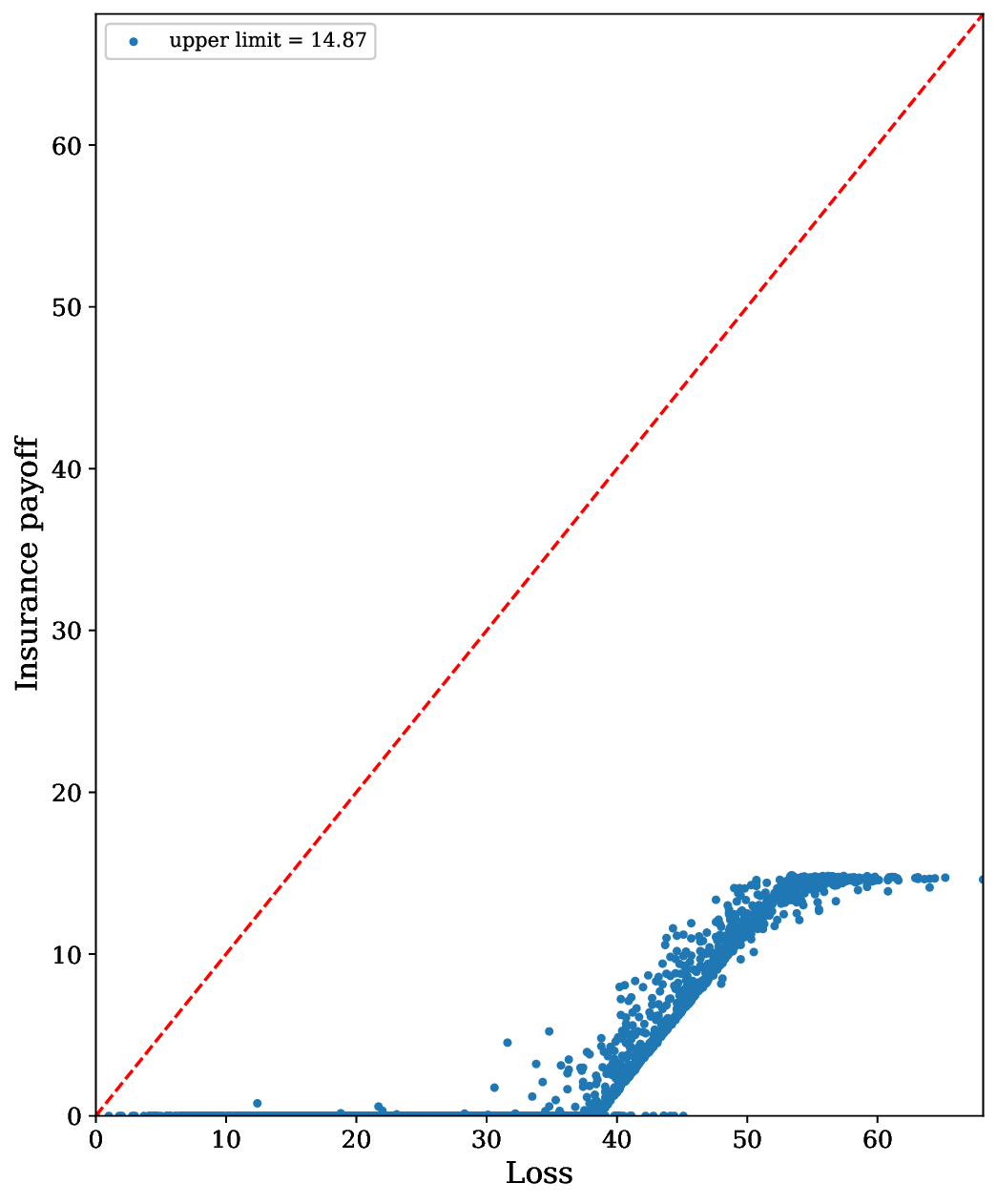}
        \caption{ES: $\theta = 0.0190$ and $\rho = 1.5016$}
        \label{fig:lambda07_index_two_para}
    \end{subfigure}
    
    \vspace{10pt}
    
    \begin{subfigure}[b]{0.3\textwidth} 
        \includegraphics[width=\textwidth]{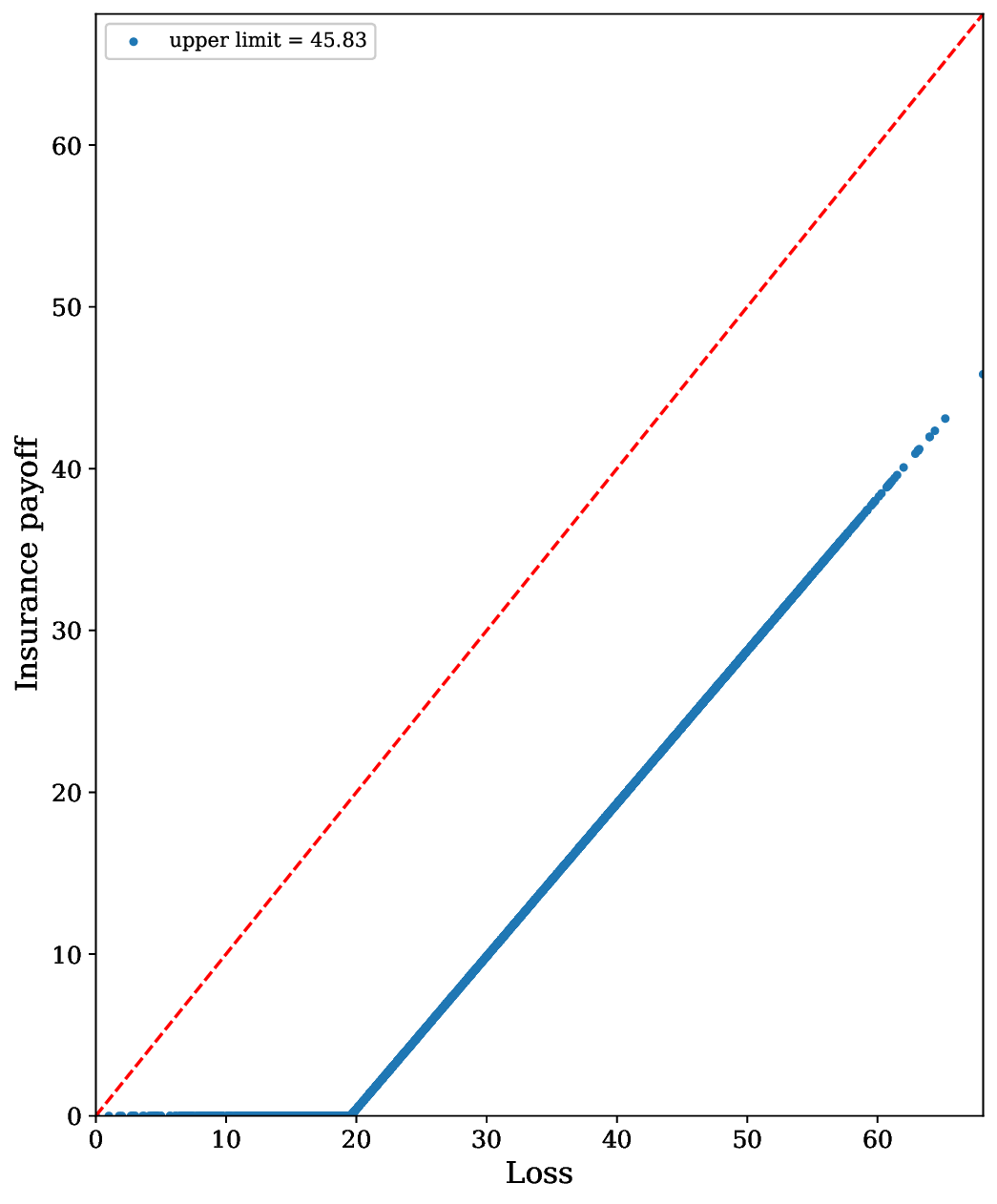}
        \caption{ES: $\theta = 0.2761$ and $\rho = 1.7078$}
        \label{fig:lambda01_indem_two_para}
    \end{subfigure}
    \hfill
    \begin{subfigure}[b]{0.3\textwidth} 
        \includegraphics[width=\textwidth]{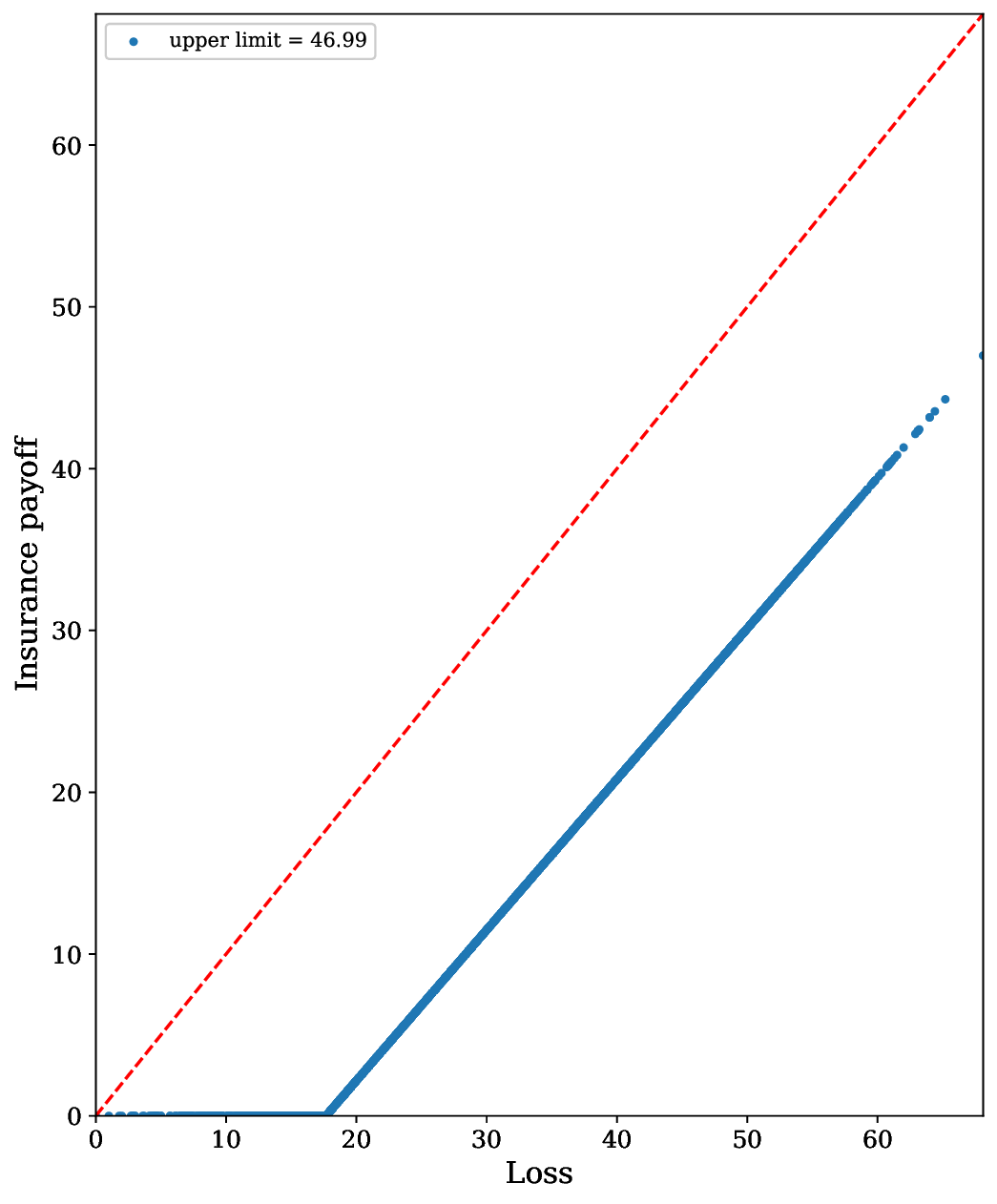}
        \caption{ES: $\theta = 0.1022$ and $\rho = 1.5641$}
        \label{fig:lambda05_indem_two_para}
    \end{subfigure}
    \hfill
    \begin{subfigure}[b]{0.3\textwidth} 
        \includegraphics[width=\textwidth]{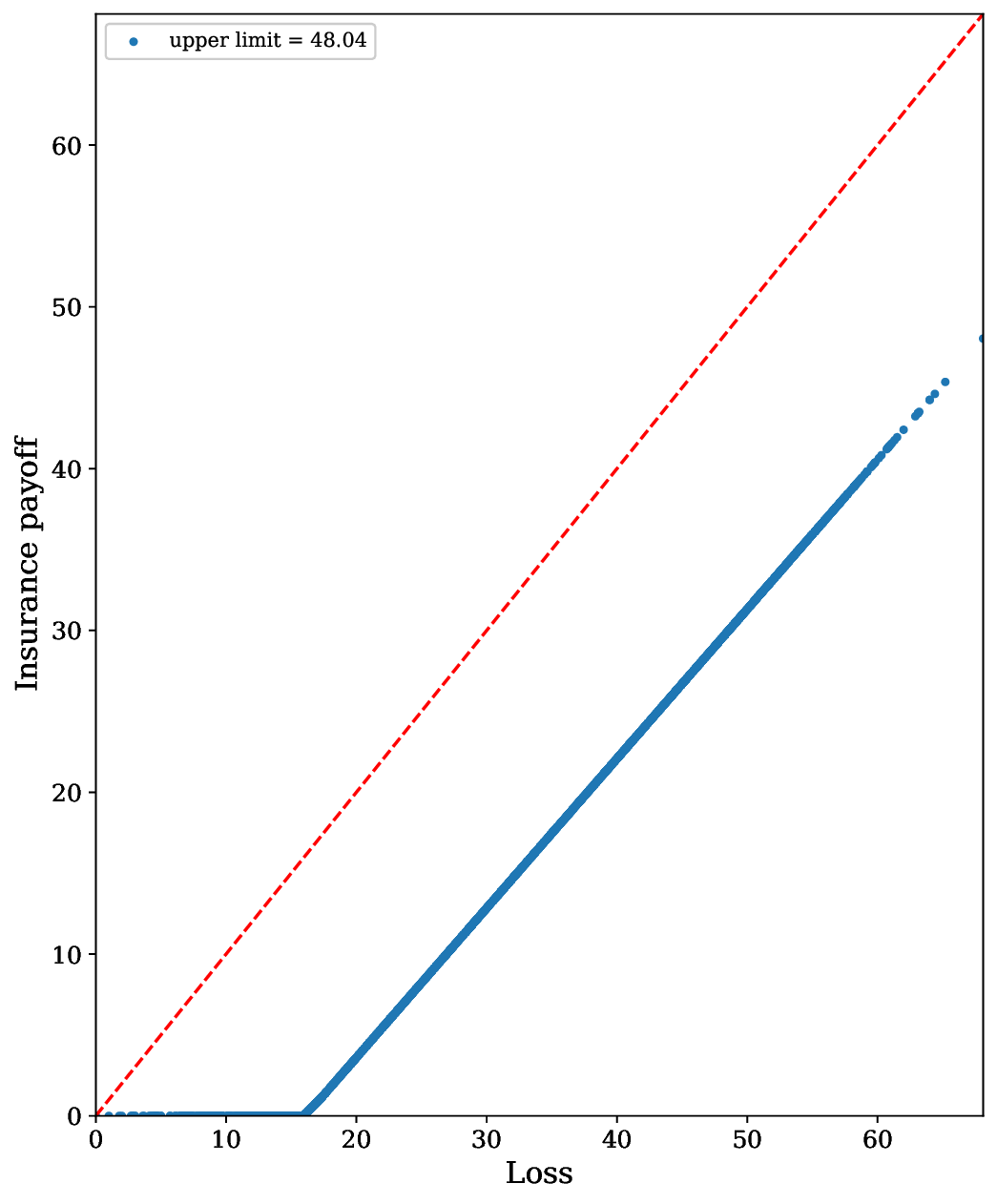}
        \caption{ES: $\theta = 0.0258$ and $\rho = 1.4565$}
        \label{fig:lambda07_indem_two_para}
    \end{subfigure}
    
    \caption{Equilibrium solutions under the farmer's risk distortion function $g_f(s) = \lambda s + (1-\lambda)\min\left\{\frac{s}{1-\alpha}, 1\right\}, \alpha = 0.8$ with different values of $\lambda$ and insurer's profit maximization using two parameters. Figure (a), (b) and (c) show the index insurance payoff functions for $\lambda = 0.1, 0.5~\text{and} ~0.7$ respectively in farmer's risk distortion function. Figures (d), (e) and (f) show the indemnity insurance payoff functions for $\lambda = 0.1, 0.5~\text{and} ~0.7$ respectively in farmer's risk distortion function.}
    \label{fig:two_para_index_indemnity}
\end{figure}

\subsection{Problem 3 ($\mathcal{P}_{s}$): General Premium Model}\label{sec:results_pk}

This subsection presents the solutions to Problem 3. In this case, we do not impose any structural form of the premium function and instead, we search for the optimal pricing function $g_i(s)$ for the premium $\hat{\Pi}(I(\mathbf{X}))=\int_0^{\infty} g_i(\mathbb{P}(I(\mathbf{X})>z)) dz$. 
\begin{figure}[htbp]
    \centering
    \begin{subfigure}[b]{0.45\textwidth}
        \centering
        \includegraphics[width=\textwidth]{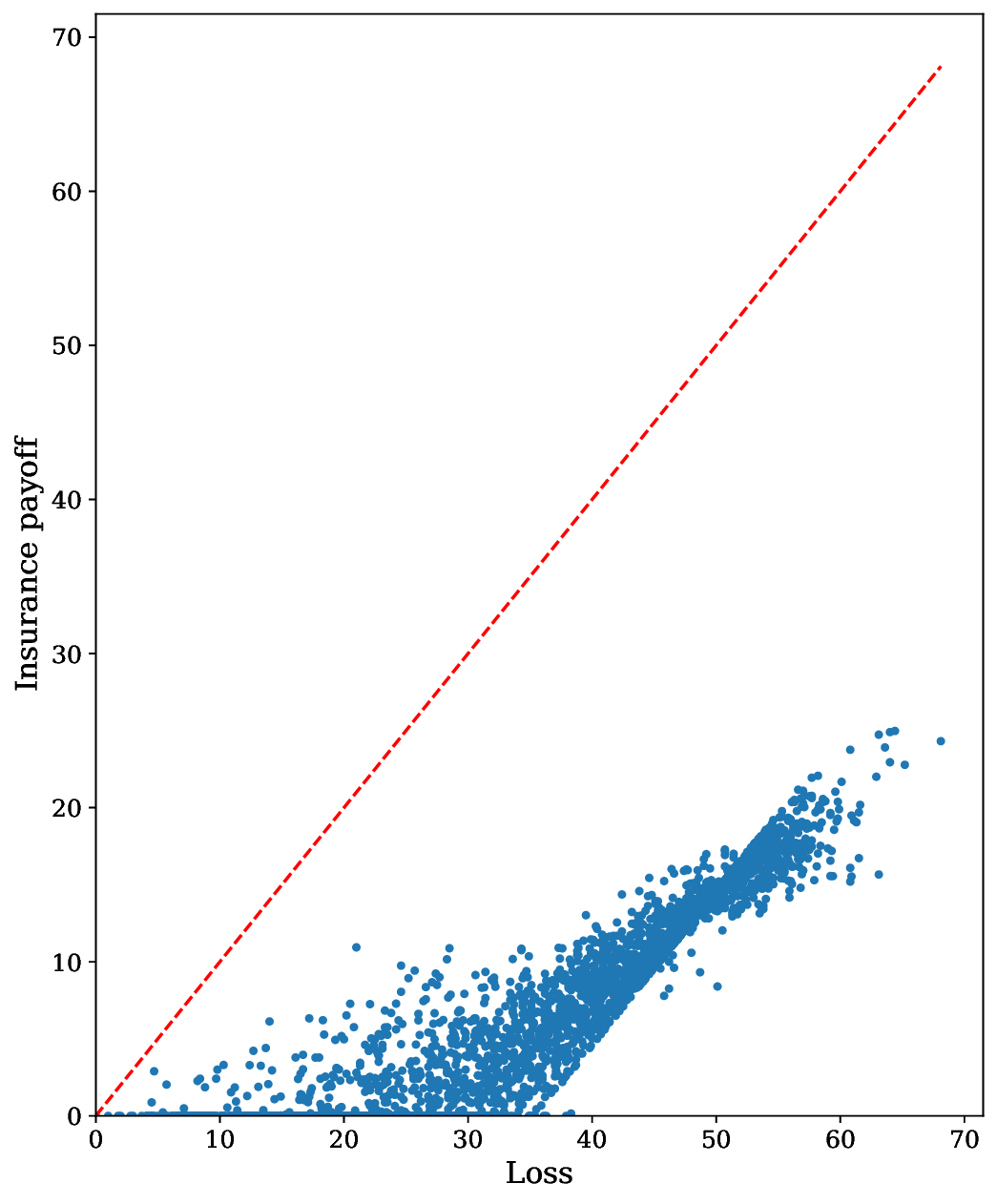}
        \caption{}
        \label{fig:ix_pk_cnn}
    \end{subfigure}
    \hfill
    \begin{subfigure}[b]{0.45\textwidth}
        \centering
        \includegraphics[width=\textwidth]{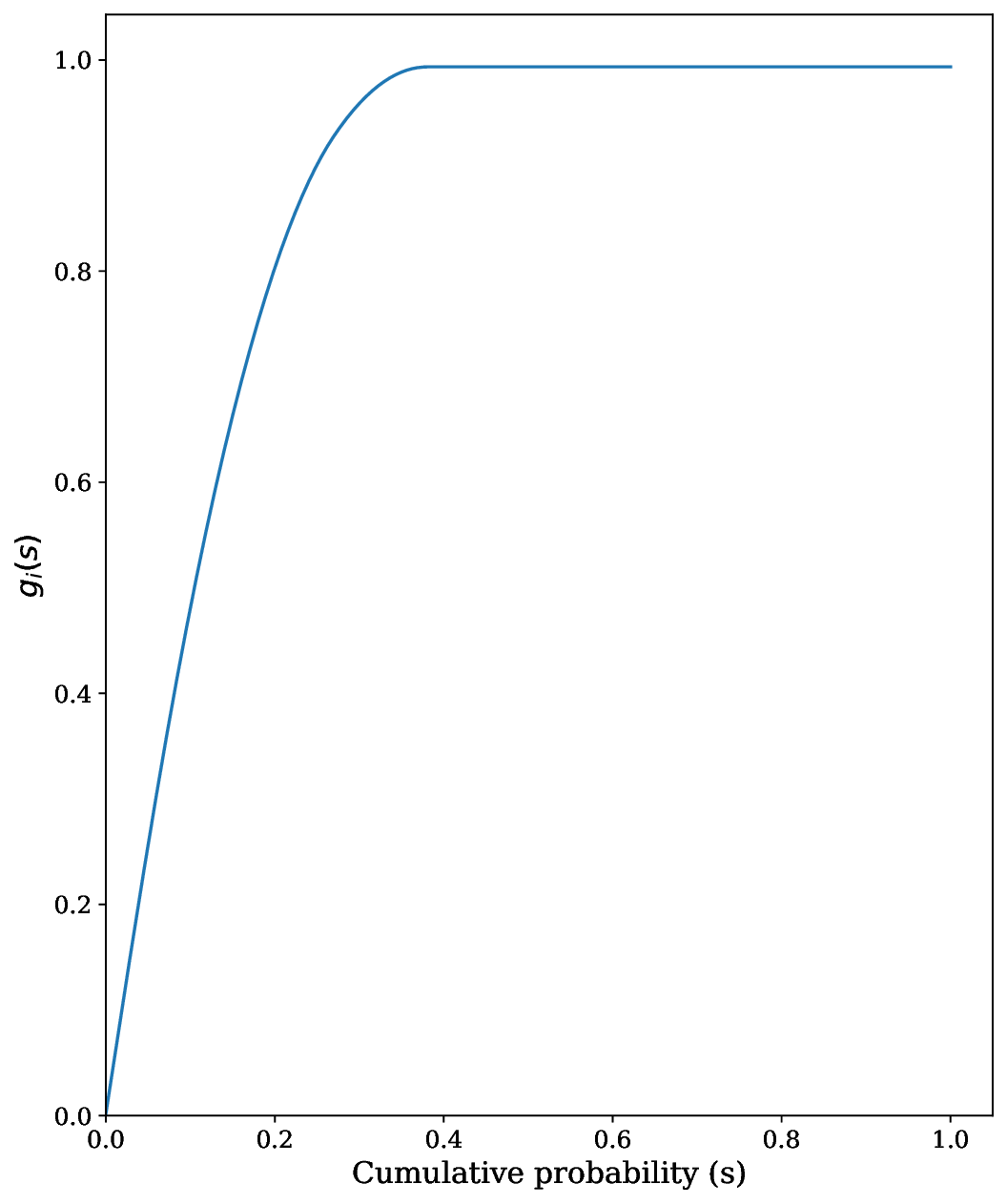} 
        \caption{}
        \label{fig:gx_pk_cnn}
    \end{subfigure}
    \vspace{10pt}
    \begin{subfigure}[b]{0.45\textwidth}
        \includegraphics[width=\textwidth]{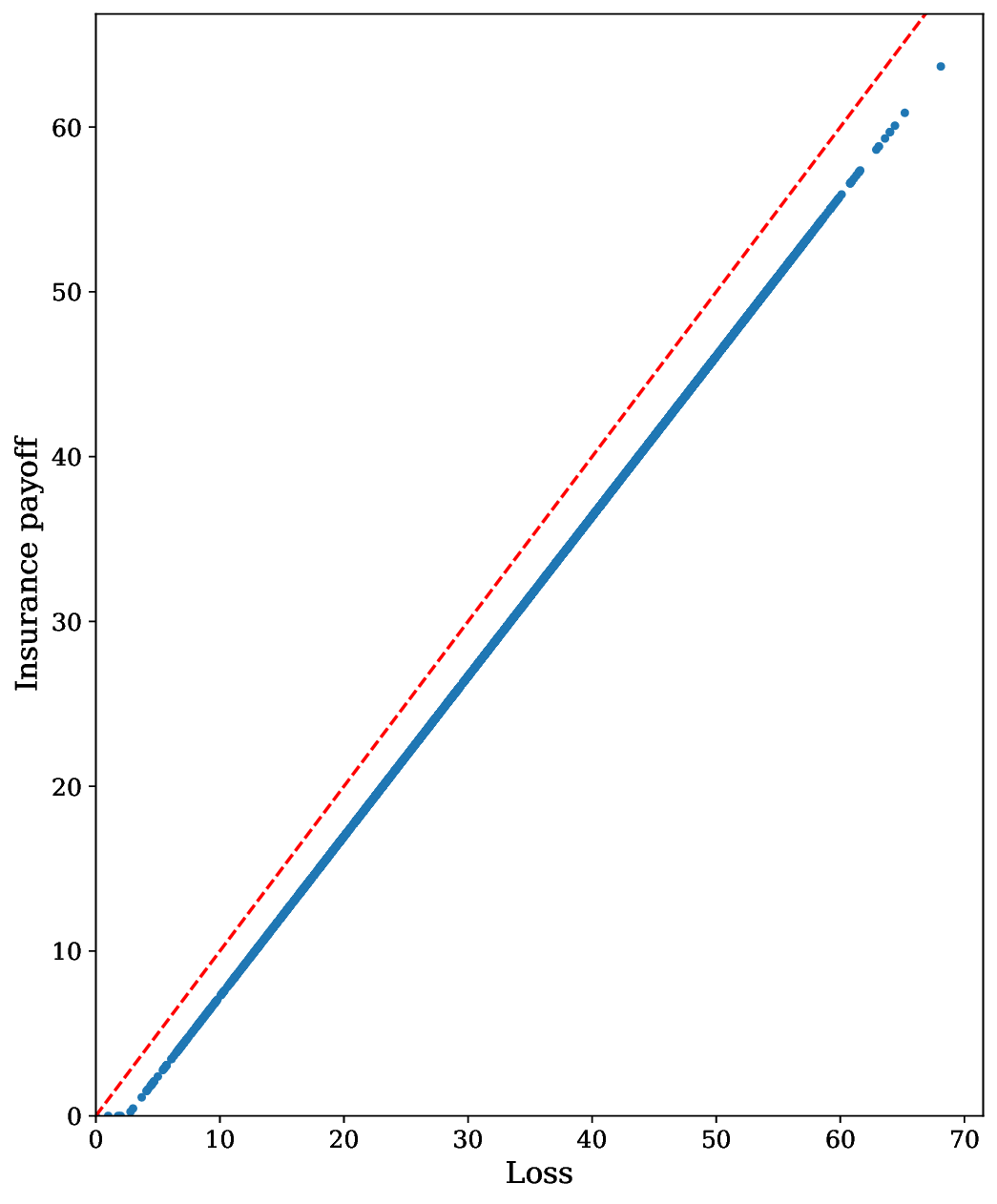}
        \caption{}
        \label{fig:ix_pk_indem}
    \end{subfigure}
    \hfill
    \begin{subfigure}[b]{0.45\textwidth}
        \includegraphics[width=\textwidth]{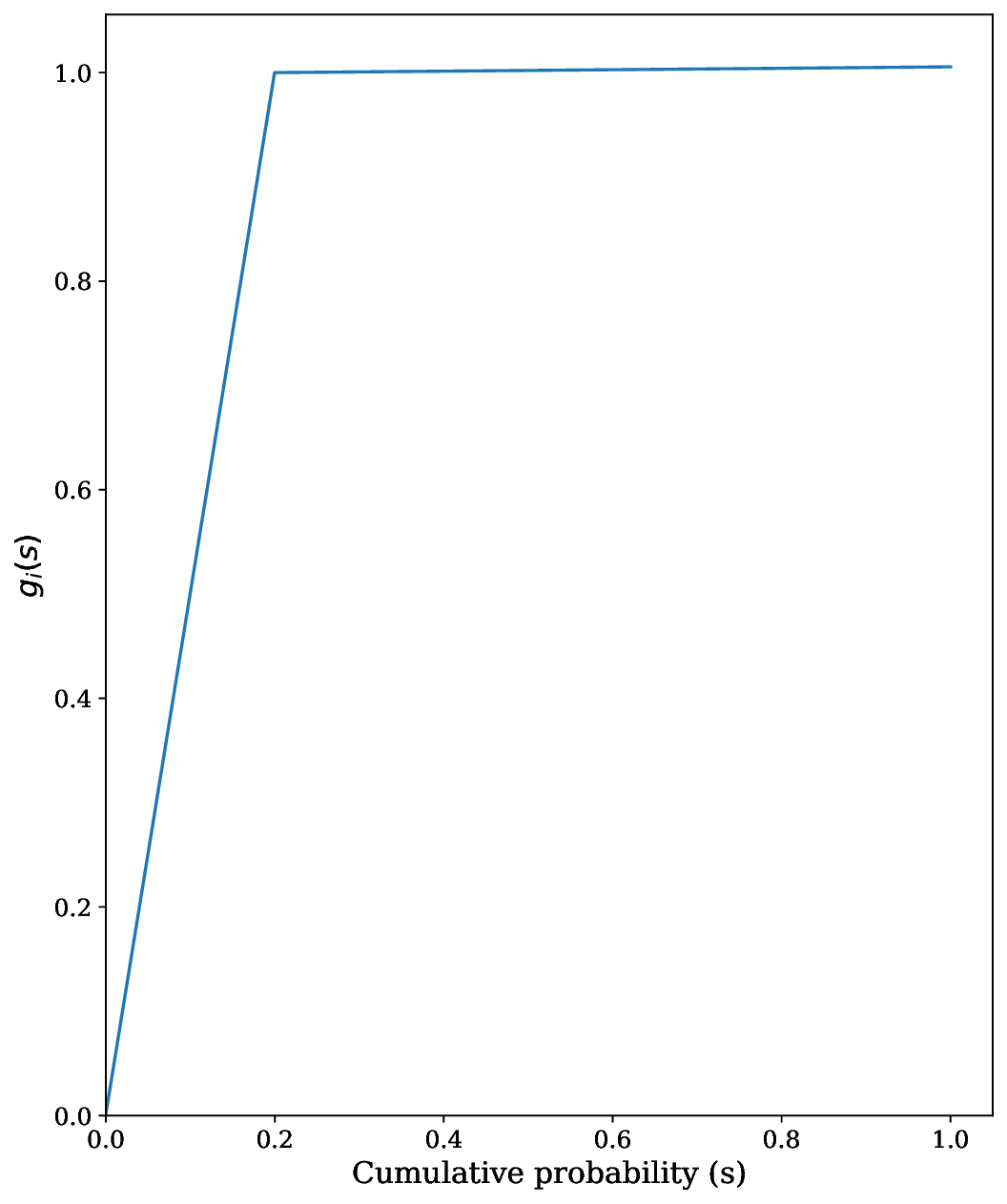}
        \caption{}
        \label{fig:gx_pk_indem}
    \end{subfigure}
    \caption{Equilibrium solution under Insurer's General Pricing Strategy. Figure (a) shows the optimal insurance payoff function when the insurer adopts a general pricing strategy through a premium distortion function $g_i(s)$. The farmer's risk distortion function adopted here is $g_f(s) = 0.5s + 0.5\min\left\{\frac{s}{1-\alpha}, 1\right\} ~\text{with}~ \alpha = 0.8$. Figure (b) shows the corresponding optimal pricing function $g_i$ of the insurer. Figures (c) and (d) show the optimal insurance payoff functions and pricing function obtained for indemnity insurance.}
    \label{fig:pricing_kernel_ix_gx}
\end{figure}

Figure \ref{fig:pricing_kernel_ix_gx} shows the insurance payoff function and the pricing function $g_i(s)$ obtained in equilibrium. For this case, we search for the optimal pricing function by using the optimal insurance payoff function obtained via CNNs in Section \ref{sec:results_cnn} as the initial values. The pricing function is modeled using NNs and the network architecture is selected from Table \ref{tab:table_pk}. 
\begin{table}[ht]
\centering
\footnotesize
\caption{Model validation for the general pricing function (Problem 3).}
\begin{tabular}{ccccccccc}
\hline
\multirow{2}{*}{\textbf{Single hidden layer}} & \multicolumn{2}{c}{{[}8{]}} & \multicolumn{2}{c}{\textbf{{[}16{]}}} & \multicolumn{2}{c}{{[}32{]}} & \multicolumn{2}{c}{{[}64{]}} \\ \cline{2-9} 
 & Train & Validation & \textbf{Train} & \textbf{Validation} & Train & Validation & Train & Validation \\ \hline
UP Loss & -7.4560 & -7.6460 & \textbf{-4.2022} & \textbf{-4.1166} & 0.0000 & 0.0000 & 0.0000 & 0.0000 \\ \hline
Penalized Loss & -4.3509 & -59.5728 & \textbf{-2.7396} & \textbf{-3.1015} & 0.000 & 0.0000 & 0.0000 & 0.0000 \\ \hline
\multirow{2}{*}{\textbf{Multiple hidden layers}} & \multicolumn{2}{c}{{[}16 - 16{]}} & \multicolumn{2}{c}{{[}16 - 16 - 16{]}} & \multicolumn{2}{c}{\textbf{{[}16 - 16 - 16 - 16{]}}} & \multicolumn{2}{c}{{[}16 - 16 - 16 - 16 - 16{]}} \\ \cline{2-9} 
 & Train & Validation & Train & Validation & \textbf{Train} & \textbf{Validation} & Train & Validation \\ \hline
UP Loss & -9.2257 & -9.4739 & -9.3566 & -9.5601 & \textbf{-8.7961} & \textbf{-8.9043} & -9.0715 & -9.2617 \\ \hline
Penalized Loss & -9.1168 & -14.3956 & -9.1887 & -27.2854 & \textbf{-8.6862} & \textbf{-6.9138} & -8.9940 & -40.7818 \\ \hline
\end{tabular}
\label{tab:table_pk}
\begin{minipage}{\linewidth}
\raggedright\footnotesize Note: The optimal model is the one with the lowest validation UP loss given that the absolute difference between validation UP loss and validation penalized loss is below 2.
\end{minipage}
\end{table}

Figure \ref{fig:ix_pk_cnn} shows the optimal index insurance payoff function, and Figure \ref{fig:gx_pk_cnn} shows the corresponding optimal pricing function. In both figures, the farmer’s risk measure is CVaR. The profit obtained is 8.7961, which is higher than the profits obtained in Problems 1 and 2 under the same risk measure for the farmer (CVaR). The monopolistic insurer has much more freedom in this case and is thus capable of extracting more profit. The optimal solutions under the same farmer's risk measure for indemnity-based insurance are shown in Figures \ref{fig:ix_pk_indem} and \ref{fig:gx_pk_indem}. The optimal insurance payoff function shown in Figure \ref{fig:ix_pk_indem} has a deductible close to zero. The optimal pricing kernel shown in Figure \ref{fig:gx_pk_indem} is piecewise linear with a kink at $s = 0.2$. The profit obtained in the optimal solution for the indemnity-based insurance is 17.2982. It is interesting to note that the shape of the optimal pricing kernel for the indemnity-based insurance is piecewise linear, while that for the index insurance is smooth and concave. We can verify the results for indemnity-based insurance in this subsection using the techniques in \cite{cheung2019risk}.

\section{Conclusion}\label{sec:6}
This study provides a comprehensive analysis of monopoly pricing in weather-index-based insurance. We formulate the pricing problem as a sequential game between a profit-maximizing insurer and a risk-averse farmer and solve it using a penalized bilevel programming algorithm. Moreover, NNs and CNNs are employed to determine the optimal insurance strategy, and it is found that stop-loss insurance emerges as the preferred contract structure, particularly under risk measures such as CVaR. We contribute to the literature by introducing CNNs into index insurance pricing, yielding more robust insurance payoff functions than fully connected NNs. This implies that CNNs can reduce the basis risk in index insurance pricing, enabling the insurer to achieve profits closer to those of an indemnity insurance contract.

Our analysis further reveals the critical role of the farmer's risk appetite in shaping the market equilibrium. While more risk-averse farmers demand lower deductibles, the monopolistic insurer strategically responds with more aggressive pricing. We show this by comparing equilibrium solutions across three cases with varying degrees of pricing flexibility. When the insurer can optimize either a two-parameter premium (a risk-loading factor $\theta$ and a premium distortion parameter $\rho$) or a general pricing function, it can extract significantly higher profits from more risk-averse farmers. 

This paper introduces a deep bilevel programming framework for pricing weather index insurance. It highlights how advanced machine learning and game theory can be applied to effectively analyze the strategic behavior in agricultural markets. Future research could extend this framework to examine how the different market structures, such as duopoly or perfect competition, shape the contract design and farmers' welfare. Such extensions would provide policymakers and industry stakeholders with important information to manage agricultural risk efficiently.

\bibliographystyle{apalike}     
\bibliography{reference} 
\end{document}